\newtheorem{proposition}{Proposition}
\newtheorem{lemma}{Lemma}
\newtheorem{corollary}{Corollary}
\newcommand{\df}{\stackrel{\mbox{\scriptsize def}}{=}}
\newcommand{\gf}{\mathrm{GF}}
\newcommand{\As}{A_{\mbox{\tiny{S}}}}
\newcommand{\Am}{A_{\mbox{\tiny{M}}}}
\newcommand{\Ac}{A_{\mbox{\tiny{C}}}}
\newcommand{\Ai}{A_{\mbox{\tiny{I}}}}
\newcommand{\as}{a_{\mbox{\tiny{S}}}}
\newcommand{\ai}{a_{\mbox{\tiny{I}}}}
\newcommand{\ds}{d_{\mbox{\tiny{S}}}}
\newcommand{\di}{d_{\mbox{\tiny{I}}}}
\newcommand{\Ks}{K_{\mbox{\tiny{S}}}}
\newcommand{\Kc}{K_{\mbox{\tiny{C}}}}
\newcommand{\Ki}{K_{\mbox{\tiny{I}}}}
\newcommand{\ks}{k_{\mbox{\tiny{S}}}}
\newcommand{\ki}{k_{\mbox{\tiny{I}}}}
\newcommand{\Ns}{N_{\mbox{\tiny{S}}}}
\newcommand{\Ni}{N_{\mbox{\tiny{I}}}}
\newcommand{\Vs}{V_{\mbox{\tiny{S}}}}
\newcommand{\Vi}{V_{\mbox{\tiny{I}}}}
\begin{document}
\title{Packing and Covering Properties of Subspace Codes for Error Control in Random Linear Network Coding}
\author{Maximilien Gadouleau,~\IEEEmembership{Member,~IEEE,}
and Zhiyuan Yan,~\IEEEmembership{Senior Member,~IEEE}%
\thanks{This work was supported in part by Thales Communications
Inc. and in part by a grant from the Commonwealth of Pennsylvania,
Department of Community and Economic Development, through the
Pennsylvania Infrastructure Technology Alliance (PITA). The work of Zhiyuan Yan was supported in part by a
summer extension grant from Air Force Research Laboratory, Rome, New
York, and the work of Maximilien Gadouleau was supported in part by the ANR project RISC.  Part of the
material in this paper was presented at 2009 IEEE International Symposium on Information Theory (ISIT 2009) in Seoul, Korea.} %
\thanks{Maximilien Gadouleau was with the Department of Electrical and Computer Engineering, Lehigh University, Bethlehem, PA 18015 USA. Now he is with CReSTIC, Universit\'e de Reims Champagne-Ardenne, Reims 51100 France. Zhiyuan Yan is with the Department of Electrical and Computer Engineering, Lehigh University, Bethlehem, PA, 18015 USA (e-mail: maximilien.gadouleau@univ-reims.fr; yan@lehigh.edu).}}

\maketitle

\thispagestyle{empty}

\begin{abstract}
Codes in the projective space and codes in the Grassmannian over a
finite field --- referred to as subspace codes and
constant-dimension codes (CDCs), respectively --- have been proposed for error control in random linear network coding. For subspace codes
and CDCs, a subspace metric was introduced to correct both errors
and erasures, and an injection metric was proposed to correct
adversarial errors. In this paper, we investigate the
packing and covering properties of subspace codes with both metrics.
We first determine some fundamental geometric properties of the
projective space with both metrics. Using these properties, we then
derive bounds on the cardinalities of packing and covering subspace
codes, and determine the asymptotic rates of optimal packing and
optimal covering subspace codes with both metrics. Our results not
only provide guiding principles for the
code design for error control in random linear network coding, but also illustrate the difference between the two metrics from a
geometric perspective. In
particular, our results show that optimal packing CDCs are optimal
packing subspace codes up to a scalar for both metrics if and only
if their dimension is half of their length (up to rounding). In this
case, CDCs suffer from only limited rate loss as opposed to subspace
codes with the same minimum distance. We also show that optimal
covering CDCs can be used to construct asymptotically optimal
covering subspace codes with the injection metric only.
\end{abstract}

\begin{keywords}
Network coding, random linear network coding, error control codes, subspace codes, constant-dimension codes, packing, covering, subspace metric, injection metric.
\end{keywords}

\section{Introduction}\label{sec:introduction}
Due to its vector-space preserving property, random linear network coding
\cite{ho_it06, HL08} can be viewed as transmitting subspaces
over an operator channel \cite{koetter_it08}. As such, error control for
random linear network coding can be modeled as a coding
problem, where codewords are subspaces and the distance is measured by either
the subspace distance \cite{koetter_it08} or
the injection metric \cite{silva_it09}. Codes in the
projective space, referred to as subspace codes henceforth, and codes in the Grassmannian, referred to as
constant-dimension codes (CDCs) henceforth, have been both investigated for error
control in random linear network coding. Using CDCs is sometimes advantageous since the fixed dimension of CDCs simplifies the network protocol somewhat \cite{koetter_it08}.

The construction and properties of CDCs thus have attracted a lot of
attention. Different constructions of CDCs have been proposed
\cite{koetter_it08,silva_it08,skachek_arxiv08,gadouleau_it09_cdc}.
Bounds on CDCs based on packing properties are investigated (see,
for example, \cite{koetter_it08,skachek_arxiv08, kohnert_mmics08,
xia_dcc09}), and the covering properties of CDCs are investigated in
\cite{gadouleau_it09_cdc}. The construction and properties of
subspace codes have received less consideration, and previous
works on  subspace codes (see, for example, \cite{etzion_isit08,etzion_it09,gabidulin_isit08}) have focused on the packing properties. In \cite{etzion_isit08}, bounds on the maximum cardinality of a
subspace code with the subspace metric, notably the counterpart of
the Gilbert bound, are derived. Another bound relating the maximum cardinality of CDCs to that of subspace codes is given in \cite{etzion_it09}. Bounds and constructions of subspace
codes are also investigated in \cite{gabidulin_isit08}. Despite the
previous works, two significant problems remain  open. First, despite the
aforementioned advantage of CDCs, what is the rate loss of CDCs as
opposed to subspace codes of the same minimum distance and hence
error correction capability? Since random linear network coding achieves multicast capacity with probability exponentially approaching 1 with the length of the code \cite{ho_it06}, the asymptotic rates of subspace codes and asymptotic rate loss of CDCs are both significant. The second problem involves the two
metrics that have been introduced for subspace codes: what is the
difference between the two metrics proposed for subspace codes and
CDCs beyond those discussed in \cite{silva_it09}? Note that the
two questions are somewhat related, since the first question is
applicable for both metrics. The answers to these questions are
significant to the code design for error control in random linear network coding.

Aiming to answer these two questions, our work in this paper focuses on the packing and covering properties of subspace codes. Packing and covering properties not only are interesting in their own right as fundamental geometric properties, also are significant for various practical purposes. First, our work is motivated by their significance to design and decoding of subspace codes. Since a code can be viewed as a packing of its ambient space, the significance of packing properties is clear. In contrast, the importance of covering properties is more subtle and deserves more explanation. For example, a class of nearly optimal CDCs, referred to as liftings of rank metric codes, have covering radii no less than their minimum distance and thus are not optimal CDCs \cite{gadouleau_it09_cdc}. This example shows how a covering property is relevant to the design of subspace codes. The covering radius also characterizes the decoding performance of a code, since it is the maximum weight of a decodable error by minimum distance decoding \cite{pless_book98} and also has applications to decoding with erasures \cite{cohen_book97}. Second, covering properties are also important for other reasons. For example,
covering properties are important for the security of keystreams against cryptanalytic attacks \cite{HK01}.

Our main contributions of this paper are that for both metrics, we first determine some
fundamental geometric properties of the projective space, and then
use these properties to derive bounds and to determine the
asymptotic rates of subspace codes based on packing and covering.
Our results provide some answers to both open problems above. First, our
results show that for both metrics optimal packing CDCs are optimal
packing subspace codes up to a scalar if and only if their dimension
is half of their length (up to rounding), which implies that in this
case CDCs suffer from a limited rate loss as opposed to subspace
codes with the same minimum distance.  Furthermore, when the asymptotic
rate of subspace codes is fixed, the relative subspace distance of optimal subspace codes is twice as much as the relative injection distance. Second, our
results illustrate the difference between the two metrics from a
geometric perspective. Above all, the projective space has different
geometric properties under the two metrics. The different geometric
properties further result in different asymptotic rates of covering
codes with the two metrics. With the injection metric, optimal
covering CDCs can be used to construct asymptotically optimal
covering subspace codes. However, with the subspace metric, this
does not hold.

To the best of our knowledge, our results on the geometric properties of the
projective space are novel, and our investigation of covering properties
of subspace codes is the first one in the literature. Note that our investigation of covering properties differs from the study in \cite{gadouleau_it09_cdc}: while how CDCs cover the Grassmannian
was investigated in \cite{gadouleau_it09_cdc}, we consider how subspace codes cover the whole
projective space in this paper. Our investigation of packing properties leads to tighter bounds than the Gilbert bound in \cite{etzion_isit08}, and our relation between the optimal cardinalities of subspace codes and CDCs is also more precise than that in \cite{etzion_it09}. Our asymptotic rates based on packing properties also appear to be novel.


The rest of the paper is organized as follows.
Section~\ref{sec:preliminaries} reviews necessary background on
subspace codes, CDCs, and related concepts. In
Section~\ref{sec:subspace}, we investigate the packing and covering
properties of subspace codes with the subspace metric. In
Section~\ref{sec:injection}, we study the packing and covering
properties of subspace codes with the injection metric. Finally, Section~\ref{sec:conclusion} summarizes our results and provides future work directions.

\section{Preliminaries}\label{sec:preliminaries}

We refer to the set of all subspaces of $\mathrm{GF}(q)^n$ with
dimension $r$ as the Grassmannian of dimension $r$ and denote it as
$E_r(q,n)$; we refer to $E(q,n) = \bigcup_{r=0}^n E_r(q,n)$ as the
projective space. We have $|E_r(q,n)| = {n \brack r}$, where ${n \brack r} =
\prod_{i=0}^{r-1} \frac{q^n - q^i}{q^r - q^i}$ is the Gaussian
binomial \cite{andrews_book76}. A very instrumental result
\cite{gadouleau_it08_dep} about the Gaussian binomial is that for
all $0 \leq r \leq n$:
\begin{equation}\label{eq:Gaussian}
    q^{r(n-r)} \leq {n \brack r} < K_q^{-1} q^{r(n-r)},
\end{equation}
where $K_q = \prod_{j=1}^\infty (1-q^{-j})$ represents the ratio of non-singular matrices in $\gf(q)^{n \times n}$ as $n$ tends to infinity. By definition,  $K_q = \phi(q^{-1})$, where $\phi$ is the Euler function. Furthermore, by the pentagonal number theorem, $K_q = \sum_{n=-\infty}^\infty (-1)^n q^{(n-3n^2)/2}$ \cite{gasper_book04}. Finally, we also have $K_q^{-1} = \sum_{k=0}^\infty p(k) q^{-k}$, where $p(k)$ is the partition number of $k$ \cite{andrews_book76}.

For $U,V \in E(q,n)$,
both the \emph{subspace metric} \cite[(3)]{koetter_it08} $\ds(U,V)
\df \dim(U + V) - \dim(U \cap V)$ and \emph{injection metric}
\cite[Def.~1]{silva_it09}
\begin{eqnarray}
    \nonumber
    \di(U,V) &\df& \frac{1}{2} \ds(U,V) + \frac{1}{2} |\dim(U) - \dim(V)|\\
    \label{eq:dm1}
    &=& \max\{\dim(U), \dim(V)\} - \dim(U \cap V)\\
    \label{eq:dm2}
    &=& \dim(U + V) - \min\{\dim(U), \dim(V)\}\\
    \nonumber
    &\geq& |\dim(U) - \dim(V)|
\end{eqnarray}
are metrics over $E(q,n)$. For all $U,V \in E(q,n)$,
\begin{equation}\label{eq:ds_dm}
    \frac{1}{2} \ds(U,V) \leq \di(U,V) \leq \ds(U,V),
\end{equation}
and $\di(U,V) = \frac{1}{2} \ds(U,V)$ if and only if $\dim(U) =
\dim(V)$, and $\di(U,V) = \ds(U,V)$ if and only if $U \subseteq V$
or $V \subseteq U$.

A {\em subspace code} is a nonempty subset of $E(q,n)$. The minimum
subspace (respectively, injection) distance of a subspace code is
the minimum subspace (respectively, injection) distance over all
pairs of distinct codewords. A subset of $E_r(q,n)$ is called a
constant-dimension code (CDC). A CDC is thus a subspace code whose
codewords have the same dimension. Since for CDCs $\di(U,V) = \frac{1}{2} \ds(U,V)$, we focus on the injection metric when considering CDCs. We denote the \textbf{maximum}
cardinality of a CDC in $E_r(q,n)$ with \textbf{minimum injection distance}
$d$ as $\Ac(q,n,r,d)$. We have $\Ac(q,n,r,d) = \Ac(q,n,n-r,d)$,
$\Ac(q,n,r,1) = {n \brack r}$  and it is shown
\cite{gadouleau_it09_cdc, xia_dcc09} for $r \leq \left\lfloor
\frac{n}{2}\right\rfloor$ and $2 \leq d \leq r$,
\begin{eqnarray}\nonumber
    q^{(n-r)(r-d+1)} + 1
    &\leq & \Ac(q,n,r,d)\\ &\leq &
    \nonumber
    \frac{{n \brack r-d+1}}{{r \brack r-d+1}} < K_q^{-1} q^{(n-r)(r-d+1)}.\\
    \label{eq:bounds_Ac}
\end{eqnarray}
The lower bound on $\Ac(q,n,r,d)$ in (\ref{eq:bounds_Ac}) is implicit from the code construction in \cite{gadouleau_it09_cdc}, and the upper bounds on $\Ac(q,n,r,d)$ in (\ref{eq:bounds_Ac}) are from \cite{koetter_it08}. Thus,
CDCs in $E_r(q,n)$ ($r \leq \left\lfloor \frac{n}{2} \right\rfloor$) with minimum injection distance $d$ and cardinality
$q^{(n-r)(r-d+1)}$ proposed in \cite{koetter_it08} are optimal up to a scalar; we refer to these CDCs as KK codes
henceforth. The covering
radius in $E_r(q,n)$ of a CDC $\mathcal{C}$ is defined as $\max_{U
\in E_r(q,n)} \di(U,\mathcal{C})$. We also denote the
\textbf{minimum} cardinality of a CDC with covering radius $\rho$ in
$E_r(q,n)$ as $\Kc(q,n,r,\rho)$ \cite{gadouleau_it09_cdc}. It was
shown in \cite{gadouleau_it09_cdc} that $\Kc(q,n,r,\rho)$ is on the
order of $q^{r(n-r) - \rho(n-\rho)}$, and an asymptotically optimal construction of covering CDCs is designed in \cite[Proposition 12]{gadouleau_it09_cdc}.

\section{Packing and covering properties of subspace
codes with the subspace metric}\label{sec:subspace}

\subsection{Properties of balls with subspace radii}\label{sec:balls_subspace}

We first investigate the properties of balls with subspace radii in
$E(q,n)$, which will be instrumental in our study of packing and
covering properties of subspace codes with the subspace metric. We
first derive bounds on $|E(q,n)|$ below.
In order to simplify notations, we denote $\theta(q) \df \sum_{n=0}^\infty q^{-n^2}$, which is related to the Jacobi theta function $\vartheta_3(z,q) = \sum_{n=-\infty}^\infty q^{n^2} e^{2niz}$ by $\theta(q) = \frac{1}{2}\left[\vartheta_3(0,q^{-1}) + 1\right]$ \cite{abramowitz_book65}. We remark that $\theta(q) > 1$ for all $q \geq 2$, and that $\theta(q)$ is a decreasing function of $q$ and approaches $1$ as $q$ tends to infinity.

\begin{lemma}\label{lemma:bounds_E}
For all $n$, $q^{\left\lfloor \frac{n}{2} \right\rfloor(n-\left\lfloor \frac{n}{2} \right\rfloor)} \leq |E(q,n)| < 2 K_q^{-1} \theta(q)
q^{\left\lfloor \frac{n}{2} \right\rfloor(n-\left\lfloor \frac{n}{2} \right\rfloor)}$.
\end{lemma}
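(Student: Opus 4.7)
The lower bound is essentially immediate. The Grassmannian $E_{\lfloor n/2\rfloor}(q,n)$ sits inside $E(q,n)$, so $|E(q,n)| \geq {n \brack \lfloor n/2\rfloor}$, and then the left half of (\ref{eq:Gaussian}) gives $|E(q,n)| \geq q^{\lfloor n/2\rfloor(n-\lfloor n/2\rfloor)}$.

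For the upper bound, the plan is to apply the right half of (\ref{eq:Gaussian}) termwise and then bound the resulting geometric-like sum by a truncation of $\theta(q)$. Concretely,
\begin{equation*}
|E(q,n)| = \sum_{r=0}^n {n \brack r} < K_q^{-1} \sum_{r=0}^n q^{r(n-r)},
\end{equation*}
so it suffices to show $\sum_{r=0}^n q^{r(n-r)} \leq 2\theta(q)\, q^{\lfloor n/2\rfloor(n-\lfloor n/2\rfloor)}$. The exponent $r(n-r)$ is symmetric under $r\mapsto n-r$ and maximized at the middle, which is what allows a reduction to a sum of the form $\sum_j q^{-j^2}$.

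The calculation splits by the parity of $n$. Write $m = \lfloor n/2 \rfloor$. If $n$ is even, then $m = n/2$ and for $0 \leq r \leq m$ one has $r(n-r) = m^2 - (m-r)^2$, so setting $j = m-r$ and using the symmetry $r\leftrightarrow n-r$,
\begin{equation*}
\sum_{r=0}^n q^{r(n-r)} = q^{m^2}\Bigl(1 + 2\sum_{j=1}^{m} q^{-j^2}\Bigr) < 2\theta(q)\, q^{m^2}.
\end{equation*}
If $n$ is odd, then $n - m = m+1$ and for $0 \leq r \leq m$ the identity $(m-j)(m+1+j) = m(m+1) - j(j+1)$ (with $j = m-r$) gives
\begin{equation*}
\sum_{r=0}^n q^{r(n-r)} = 2 q^{m(m+1)} \sum_{j=0}^{m} q^{-j(j+1)} \leq 2\theta(q)\, q^{m(m+1)},
\end{equation*}
where the final inequality uses $j(j+1) \geq j^2$ so that $\sum_{j=0}^m q^{-j(j+1)} \leq \sum_{j=0}^\infty q^{-j^2} = \theta(q)$. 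In both cases the exponent equals $\lfloor n/2\rfloor(n-\lfloor n/2\rfloor)$, and combining with the termwise bound from (\ref{eq:Gaussian}) yields the claim.

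The only mildly delicate point is the odd case: the natural exponent $j(j+1)$ does not match $j^2$ exactly, but the inequality $j(j+1)\geq j^2$ is enough to dominate by $\theta(q)$ at the cost of no extra factor. Everything else is bookkeeping around the midpoint of $r(n-r)$.
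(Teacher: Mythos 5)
Your proof is correct and follows essentially the same route as the paper: the lower bound via the middle Gaussian binomial and the left half of (\ref{eq:Gaussian}), and the upper bound by applying the right half of (\ref{eq:Gaussian}) termwise, exploiting the symmetry $r \leftrightarrow n-r$, and dominating the resulting sum by $2\theta(q)$ using the fact that the exponent deficit $i(n-2\lfloor n/2\rfloor+i)$ is at least $i^2$. The only cosmetic difference is that you split by the parity of $n$, whereas the paper handles both cases at once by noting $n-2\lfloor n/2\rfloor \geq 0$.
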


\begin{proof}
We have $|E(q,n)| = \sum_{r=0}^n {n \brack r} \geq {n \brack \left\lfloor \frac{n}{2} \right\rfloor}
\geq q^{\left\lfloor \frac{n}{2} \right\rfloor(n-\left\lfloor \frac{n}{2} \right\rfloor)}$ by (\ref{eq:Gaussian}), which proves the lower
bound. Also, ${n \brack n-r} = {n \brack r}$ and hence $\sum_{r=0}^n
{n \brack r} \leq 2 \sum_{r=0}^{\left\lfloor \frac{n}{2} \right\rfloor} {n \brack r} < 2 K_q^{-1}
\sum_{r=0}^{\left\lfloor \frac{n}{2} \right\rfloor} q^{r(n-r)}$ by (\ref{eq:Gaussian}). Therefore,
$|E(q,n)| < 2 K_q^{-1} q^{\left\lfloor \frac{n}{2} \right\rfloor(n-\left\lfloor \frac{n}{2} \right\rfloor)} \sum_{i=0}^{\left\lfloor \frac{n}{2} \right\rfloor}
q^{-i(n-2\left\lfloor \frac{n}{2} \right\rfloor+i)} < 2 K_q^{-1} \theta(q) q^{\left\lfloor \frac{n}{2} \right\rfloor(n-\left\lfloor \frac{n}{2} \right\rfloor)}$.
\end{proof}
We observe that by (\ref{eq:Gaussian}) and Lemma~\ref{lemma:bounds_E}, $|E_r(q,n)|$ is the same as $|E(q,n)|$ up to a scalar when $r=\left\lfloor \frac{n}{2} \right\rfloor$ or $r=n-\left\lfloor \frac{n}{2} \right\rfloor$. That is, the volume of $E_{\left\lfloor \frac{n}{2} \right\rfloor}(q,n)$, which is equal to that of $E_{n-\left\lfloor \frac{n}{2} \right\rfloor}(q,n)$ when $\left\lfloor \frac{n}{2} \right\rfloor \neq n-\left\lfloor \frac{n}{2} \right\rfloor$, dominates the volumes of other Grassmannians. This geometric property has significant implication to the packing properties of subspace codes.

We now determine the number of subspaces at a given subspace
distance from a fixed subspace. Let us denote the number of subspaces with dimension $s$ at subspace
distance $d$ from a subspace with dimension $r$ as  $\Ns(r,s,d)$.

\begin{lemma}\label{lemma:Ns}
$\Ns(r,s,d)$ is given by
$q^{u(d-u)} {r \brack u} {n-r \brack d-u}$ when $u=
\frac{r+d-s}{2}$ is an integer, and $0$ otherwise.
\end{lemma}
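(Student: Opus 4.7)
My plan is to parametrize the count by the dimension of the intersection $U \cap V$. From the inclusion--exclusion identity $\dim(U+V) = \dim U + \dim V - \dim(U \cap V)$ and the definition of $\ds$, I would first derive $\ds(U,V) = r + s - 2\dim(U \cap V)$. Imposing $\ds(U,V) = d$ then forces $\dim(U \cap V) = (r+s-d)/2$, which must be a nonnegative integer; equivalently, $u = (r+d-s)/2 = r - \dim(U \cap V)$ must be a nonnegative integer. This handles the parity case: $\Ns(r,s,d) = 0$ unless $u$ is an integer.

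Writing $i = r - u$, I would count valid $V$ in two stages. First, the intersection $W \df U \cap V$ is an arbitrary $i$-dimensional subspace of $U$, giving ${r \brack i} = {r \brack u}$ choices. Second, with $W$ fixed, I count $s$-dimensional subspaces $V$ of $\gf(q)^n$ containing $W$ with $V \cap U = W$. Passing to the quotient $\gf(q)^n/W$, such $V$ correspond bijectively to $(s - i)$-dimensional subspaces of this $(n - i)$-dimensional quotient whose image meets $U/W$ (of dimension $r - i = u$) only at the origin, since $V \cap U = W$ is equivalent to $(V/W) \cap (U/W) = \{0\}$. The parameters simplify to $s - i = d - u$ and $n - i = (n-r) + u$.

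The key subroutine is the standard count of $k$-dimensional subspaces of $\gf(q)^N$ that meet a fixed $u$-dimensional subspace trivially, which equals $q^{uk} {N-u \brack k}$. I would verify this by enumerating ordered linearly independent $k$-tuples: the $j$-th vector must avoid the span of the fixed subspace together with the previous $j-1$ chosen vectors, giving $q^N - q^{u+j-1}$ choices; dividing by $\prod_{j=0}^{k-1}(q^k - q^j)$, the number of ordered bases of a $k$-dimensional space, yields $q^{uk} {N-u \brack k}$. Specializing to $N = n - r + u$ and $k = d - u$ gives $q^{u(d-u)} {n-r \brack d-u}$ extensions, and multiplying by the intersection count yields $q^{u(d-u)} {r \brack u} {n-r \brack d-u}$ as claimed.

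The only delicate step is the quotient correspondence: checking that $V \cap U = W$ really translates to trivial intersection with $U/W$ in $\gf(q)^n/W$, and tracking how each of the dimensions $u$, $d-u$, and $n - r + u$ arises from the original parameters $(r,s,d)$. Once this bookkeeping is in place, the computation reduces to two clean Gaussian-binomial counts and yields the closed form directly; the extremal cases $u < 0$, $u > r$, $d-u < 0$, or $d-u > n-r$ are absorbed automatically by the Gaussian binomials vanishing.
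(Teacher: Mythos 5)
Your proof is correct and takes essentially the same route as the paper's: both parametrize by the intersection, observing that $\ds(U,V)=d$ forces $\dim(U\cap V)=r-u$, choose that intersection inside $U$ in ${r \brack u}$ ways, and then complete $V$ in $q^{u(d-u)}{n-r \brack d-u}$ ways. The only difference is that the paper asserts the completion count in one sentence, while you justify it via the quotient by $W=U\cap V$ and the standard count of subspaces meeting a fixed subspace trivially — the bookkeeping you carry out ($s-i=d-u$, $n-i=n-r+u$, $\dim(U/W)=u$) is exactly right.
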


\begin{proof}
For $U \in E_r(q,n)$ and $V \in E_s(q,n)$, $\ds(U,V) = d$ if and
only if $\dim(U \cap V) = r-u$. Thus there are ${r \brack u}$
choices for $U \cap V$. The subspace $V$ is then completed in
$q^{u(d-u)} {n-r \brack d-u}$ ways.
\end{proof}

We remark that this result in Lemma~\ref{lemma:Ns} is implicitly contained in \cite[Theorem
5]{etzion_isit08} without an explicit proof. It is formally stated here because it is important to the results in this paper. We also denote the
volume of a ball with subspace radius $t$ around a subspace with
dimension $r$ as $\Vs(r,t) \df \sum_{d=0}^t \sum_{s=0}^n
\Ns(r,s,d)$.

We now derive bounds on the volume of a ball with subspace radius.
Since $\Vs(r,t) = \Vs(n-r,t)$ for all $r$ and $t$, we only consider
$r \leq \left\lfloor \frac{n}{2} \right\rfloor$. Also, we assume $t \leq \left\lfloor \frac{n}{2} \right\rfloor$, for only this case will
be needed in this paper.

\begin{proposition}\label{prop:bound_Vs}
For all $q$, $n$, $r \leq \left\lfloor \frac{n}{2} \right\rfloor$, and $t \leq \left\lfloor \frac{n}{2} \right\rfloor$, $q^{-\frac{3}{4}}
q^{g(r,t)} \leq \Vs(r,t) \leq 2\theta(q^3)K_q^{-2} (1+q^{-\frac{4}{3}})\theta(q^{\frac{3}{4}})
q^{g(r,t)}$, where
\begin{equation} \nonumber 
    g(r,t) = \left\{ \begin{array}{ll}
    t(n-r-t) & \mbox{for}\,\, t \leq \frac{n-2r}{3},\\
    \frac{1}{12}(n-2r)^2 + \frac{1}{4}t(2n-t) & \mbox{for}\,\,
    \frac{n-2r}{3} < t \leq \frac{n+4r}{3},\\
    (t-r)(n-t+r) & \mbox{for}\,\, \frac{n+4r}{3} < t \leq
    \frac{n}{2}.
    \end{array} \right.
\end{equation}
\end{proposition}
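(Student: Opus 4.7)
The plan is to start from $\Vs(r,t) = \sum_{d=0}^t \sum_u q^{u(d-u)}{r \brack u}{n-r \brack d-u}$ via Lemma~\ref{lemma:Ns}, set $v := d-u$, and apply the two-sided bound (\ref{eq:Gaussian}) on Gaussian binomials to reduce the estimate to $\sum q^{h(u,v)}$, where $h(u,v) := uv + u(r-u) + v(n-r-v)$. The first key step is to rewrite $h$ exactly by completing the square in the coordinates $d=u+v$ and $w=u-v$:
\[
h(u,v) = \tfrac{(n-2r)^2}{12} + \tfrac{d(2n-d)}{4} - \tfrac{3}{4}(w-w^*)^2, \qquad w^* = \tfrac{2r-n}{3}.
\]
Translating $u \in [0,r]$, $v \in [0,n-r]$ into a range for $w$, one checks that $w^*$ lies inside this range exactly when $\tfrac{n-2r}{3} \leq d \leq \tfrac{n+4r}{3}$, so that maximising $h$ over admissible integer $(u,v)$ with $u+v=d$ gives $\tfrac{(n-2r)^2}{12}+\tfrac{d(2n-d)}{4}$ in the middle range and the boundary values $d(n-r-d)$ or $(d-r)(n-d+r)$ in the two outer ranges. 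These are precisely the three pieces of $g(r,d)$, and a direct monotonicity check shows that $g(r,d)$ is non-decreasing in $d$ on $[0,n/2]$, so $\max_{d \leq t} g(r,d) = g(r,t)$.

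For the lower bound, I would keep only the term at $d=t$ and $u=u_0$, where $u_0 \in [0,r]$ is the integer closest to the unconstrained optimum $u^*(t) = (3t+2r-n)/6$. Since $w$ changes by $2$ as $u$ varies by $1$ and $w^* = 2u^*(t) - t$, rounding gives $|w-w^*| = 2|u_0-u^*(t)| \leq 1$ in the middle regime, so the loss in the exponent is at most $3/4$; in the outer regimes the clipped $u_0 \in \{0,r\}$ attains the boundary value of $g$ exactly. Together with the lower half of (\ref{eq:Gaussian}), this yields $\Vs(r,t) \geq q^{g(r,t)-3/4}$.

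For the upper bound, using the upper half of (\ref{eq:Gaussian}) gives $\Vs(r,t) \leq K_q^{-2}\sum_{d=0}^t\sum_w q^{h(u,v)}$, where $w$ varies over a sublattice of step $2$. For each $d$ the decomposition above factors the inner sum as $q^{g(r,d)}\sum_{k\in\mathbb{Z}} q^{-3(k-\alpha(d))^2}$ for some shift $\alpha(d)$, and by Poisson summation (equivalently, positivity of the Fourier transform of $e^{-3x^2\ln q}$) this series is maximised at integer $\alpha$ and bounded uniformly by $2\theta(q^3)-1 < 2\theta(q^3)$. The remaining outer sum $\sum_{d=0}^t q^{g(r,d)}$ is bounded by a constant multiple of $q^{g(r,t)}$ by partitioning $[0,t]$ according to which piece of $g$ applies and using the concavity of each piece in $d$; the dominant tail is estimated by a theta-series in $q^{3/4}$ (arising from the coefficient $3/4$ in the decomposition combined with a step-size-$2$ lattice in the transformed variable), while matching the contributions across the two case boundaries $d \in \{(n-2r)/3,\,(n+4r)/3\}$ produces the remaining $(1+q^{-4/3})$ factor.

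The main obstacle is the bookkeeping in the outer sum. The decay rate of $q^{g(r,d)}$ in $d$ changes at the two case boundaries, so matching the exact constants $\theta(q^{3/4})$ and $(1+q^{-4/3})$ requires handling each of the three possible positions of $t$ (case 1, case 2, or---only when $r<n/8$---case 3) separately and tracking the boundary contributions carefully.
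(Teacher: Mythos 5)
Your proposal follows essentially the same route as the paper's proof: reduce to $\sum_{d\le t}\sum_u q^{f(u)}$ via (\ref{eq:Gaussian}), complete the square to get $f(u)=f(u_0)-3(u-u_0)^2$ with the three regimes for the optimizer determining the three pieces of $g$, take the single largest term (losing at most $q^{-3/4}$ to rounding) for the lower bound, and bound the inner and outer sums by the theta-series constants $2\theta(q^3)$ and $(1+q^{-4/3})\theta(q^{3/4})$ for the upper bound. The coordinates $(d,w)$ and the Poisson-summation phrasing are only cosmetic differences, and the case bookkeeping you flag as the main obstacle is exactly the part the paper carries out explicitly, with the same constants emerging.
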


The proof of Proposition~\ref{prop:bound_Vs} is given in
Appendix~\ref{app:prop:bound_Vs}. We remark that the lower and upper bounds on $\Vs(r,t)$  in
Proposition~\ref{prop:bound_Vs} are tight up to a scalar, and that
$g(r, t)$ depends on both $r$ and $t$. We also observe that $g(r,
t)$ decreases with $r$ for $r \leq \left\lfloor \frac{n}{2} \right\rfloor$. That is, the volume of a
ball around a subspace of dimension $r$ ($r \leq \left\lfloor \frac{n}{2} \right\rfloor$) \textbf{decreases}
with $r$. This observation is significant to the covering properties
of subspace codes with the subspace metric. Figure~\ref{fig:Vs}, where we show $\log_2\left[\Vs(q,n,r,t)\right]$ for $q=2$, $n=10$, $0 \leq r \leq 5$, and $0 \leq t \leq 5$, illustrates this observation.

\begin{figure}
\begin{center}
	\includegraphics[scale=0.65]{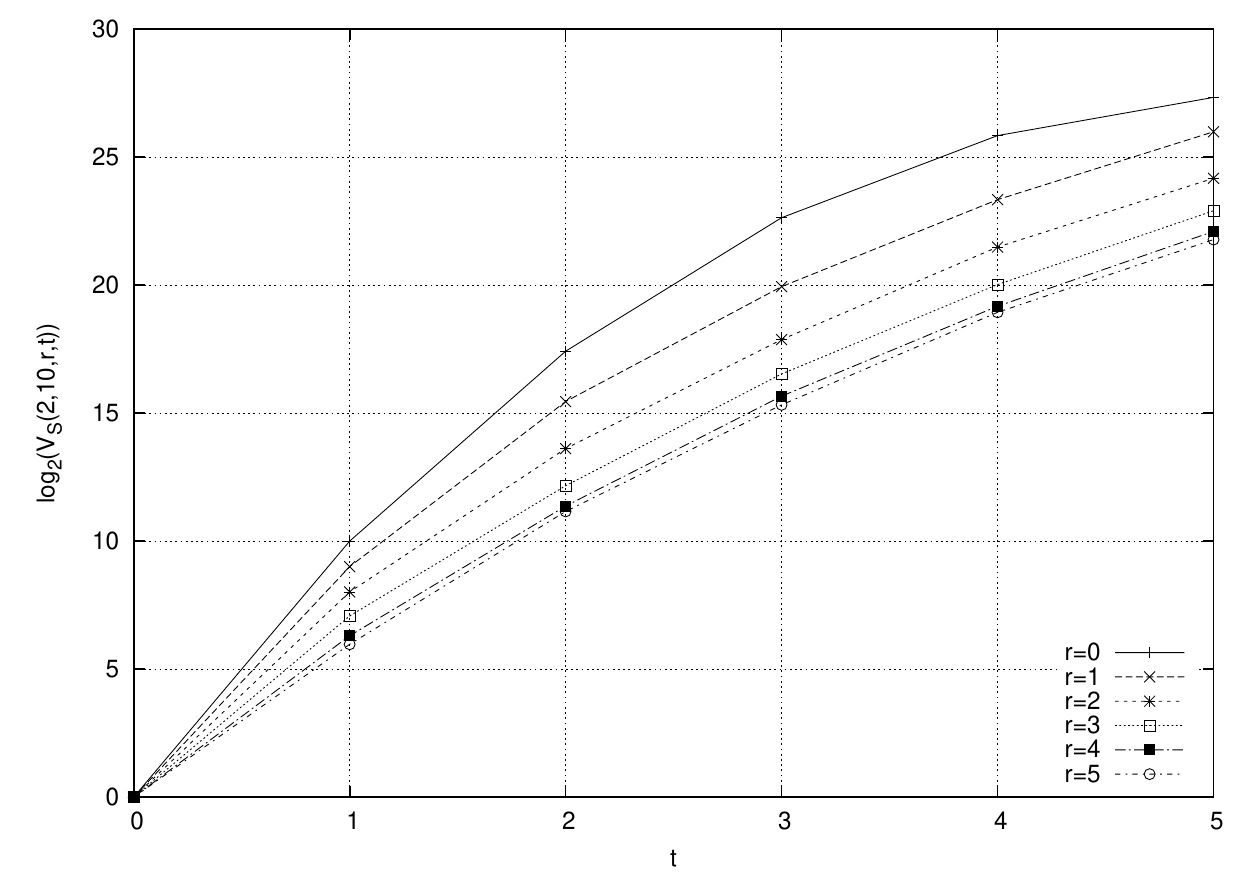}
\end{center}
\caption{Volume of a ball of subspace radius in $E(2,10)$ as a function of the dimension of its center and of its radius} \label{fig:Vs}
\end{figure}

\subsection{Packing properties of subspace codes with the subspace metric}
\label{sec:packing_subspace}

We are interested in packing subspace codes used with the subspace
metric. The maximum cardinality of a code in $E(q,n)$ with minimum
subspace distance $d$ is denoted as $\As(q,n,d)$. Since $\As(q,n,1)
= |E(q,n)|$, we assume $d \geq 2$ henceforth.

We can relate $\As(q,n,d)$ to $\Ac(q,n,r,d)$. First, we remark that $\max_{0 \leq r \leq n} \Ac(q,n,r,d) = \Ac(q,n,\left\lfloor \frac{n}{2} \right\rfloor,d)$ for all $q$, $n$, and $d \leq \left\lfloor \frac{n}{2} \right\rfloor$. The claim is obvious for $d=1$, and easily shown for $d>1$ by using (\ref{eq:Gaussian}). We also remark that $\Ac(q,n,\left\lceil \frac{n}{2} \right\rceil,d) = \Ac(q,n,\left\lfloor \frac{n}{2} \right\rfloor,d)$. For all $J \subseteq \{0,1,\ldots,n\}$, we denote the maximum cardinality of a code with minimum subspace distance $d$ and codewords having dimensions in $J$ as $\As(q,n,d,J)$. For $2 \leq d \leq 2\left\lfloor \frac{n}{2} \right\rfloor$, $R_d \df \left\{\left\lceil \frac{d}{2} \right\rceil, \left\lceil \frac{d}{2} \right\rceil+1, \ldots, n-\left\lceil \frac{d}{2} \right\rceil \right\}$. Proposition \ref{prop:As_Rd} below compares $\As(q,n,d)$ to $\Ac(q,n,\left\lfloor \frac{n}{2} \right\rfloor,d)$ and shows that $\As(q,n,d,R_d)$ is a good approximate of $\As(q,n,d)$.

\begin{proposition}\label{prop:As_Rd}
For $n = d = 2\left\lfloor \frac{n}{2} \right\rfloor+1$, $\As(q,2\left\lfloor \frac{n}{2} \right\rfloor+1,2\left\lfloor \frac{n}{2} \right\rfloor+1) = 2$ and for $2 \leq d \leq 2\left\lfloor \frac{n}{2} \right\rfloor$, $\As(q,n,d) \leq \As(q,n,d,R_d) + 2$. Also, we have
$\Ac(q,n,\left\lfloor \frac{n}{2} \right\rfloor,\left\lceil \frac{d}{2} \right\rceil) \leq \As(q,n,d) \leq
2 + \sum_{r \in R_d} \Ac(q,n,r,\left\lceil \frac{d}{2}
\right\rceil)$.
\end{proposition}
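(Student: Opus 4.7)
The plan is to handle the three parts of the statement in turn, leveraging the identity $\ds(U,V) = \dim(U+V) - \dim(U \cap V)$ together with the elementary bounds $\dim(U+V) \leq \min\{n,\dim(U)+\dim(V)\}$ and $\dim(U \cap V) \geq \max\{0,\dim(U)+\dim(V)-n\}$.

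First I would dispatch the boundary case $n = d = 2\left\lfloor \frac{n}{2} \right\rfloor + 1$, i.e.\ $n$ odd. Since $\ds(U,V) \leq n$ always, and equality forces $U \cap V = \{0\}$ and $U+V = \gf(q)^n$, any pair of codewords at distance exactly $n$ must satisfy $\dim(U) + \dim(V) = n$. Supposing for contradiction that three codewords $U_1,U_2,U_3$ have pairwise subspace distance $n$, and writing $k_i = \dim(U_i)$, the three equalities $k_i + k_j = n$ combine to give $k_1 = k_2 = k_3 = n/2$, contradicting the oddness of $n$. The pair $\{\{0\}, \gf(q)^n\}$ clearly attains distance $n$, so $\As(q,n,n) = 2$.

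Next I would prove the key structural claim underlying the inequality $\As(q,n,d) \leq \As(q,n,d,R_d) + 2$: any code with minimum subspace distance $d$ contains at most one codeword of dimension $< \left\lceil \frac{d}{2} \right\rceil$ and at most one of dimension $> n - \left\lceil \frac{d}{2} \right\rceil$. For the small side, two such codewords $U_1,U_2$ would give
\[
\ds(U_1,U_2) \leq \dim(U_1) + \dim(U_2) \leq 2\left\lceil \tfrac{d}{2} \right\rceil - 2 \leq d-1,
\]
where the last inequality holds for both parities of $d$. For the large side, $\dim(U_i) \geq n - \left\lceil \frac{d}{2} \right\rceil + 1$ gives $\dim(U_1 \cap U_2) \geq \dim(U_1)+\dim(U_2)-n \geq n - 2\left\lceil \frac{d}{2} \right\rceil + 2$, whence $\ds(U_1,U_2) \leq n - \dim(U_1 \cap U_2) \leq 2\left\lceil \frac{d}{2} \right\rceil - 2 < d$. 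Removing these at most two extremal codewords yields the claimed bound.

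Finally, for the sandwiching inequality: the lower bound $\Ac(q,n,\left\lfloor \frac{n}{2} \right\rfloor,\left\lceil \frac{d}{2} \right\rceil) \leq \As(q,n,d)$ holds because any CDC in $E_{\left\lfloor n/2 \right\rfloor}(q,n)$ with minimum injection distance $\left\lceil \frac{d}{2} \right\rceil$ has subspace distance $2\left\lceil \frac{d}{2} \right\rceil \geq d$ between distinct codewords (by the equality case of (\ref{eq:ds_dm}) for equal-dimension subspaces), hence qualifies as a subspace code with minimum subspace distance $d$. For the upper bound, the structural claim above gives $\As(q,n,d) \leq 2 + \As(q,n,d,R_d)$; partitioning the codewords by dimension, each class of dimension $r \in R_d$ is a CDC whose codewords have $\di = \ds/2 \geq d/2$, hence $\di \geq \left\lceil \frac{d}{2} \right\rceil$ by integrality, so its size is at most $\Ac(q,n,r,\left\lceil \frac{d}{2} \right\rceil)$, and summing over $r \in R_d$ completes the proof. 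The only delicate step is the parity-sensitive arithmetic of the extremal-dimension argument; the remainder is bookkeeping and reduction to the known CDC quantities.
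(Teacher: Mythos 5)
Your proposal is correct and follows essentially the same route as the paper: bound the number of codewords with dimension outside $R_d$ by two via the inequalities $\ds(U,V)\leq \dim(U)+\dim(V)$ and $\ds(U,V)\leq 2n-\dim(U)-\dim(V)$, then sandwich $\As(q,n,d)$ between CDC cardinalities using the fact that a CDC has minimum subspace distance $\geq d$ exactly when its minimum injection distance is $\geq \left\lceil \frac{d}{2}\right\rceil$. The only cosmetic difference is your direct three-codeword contradiction for the case $n=d$ odd, where the paper simply observes that $R_d$ is empty so the general bound gives $\As\leq 2$; both are valid.
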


\begin{proof}
Let $\mathcal{C}$ be a code in $E(q,n)$ with minimum subspace
distance $d$. For $C,D \in \mathcal{C}$, we have $\dim(C) + \dim(D)
\geq \ds(C,D) \geq d$; therefore there is at most one codeword with
dimension less than $\frac{d}{2}$. Similarly, $ \dim(C) + \dim(D)
\leq 2n - \ds(C,D) \leq 2n-d$, therefore there is at most one
codeword with dimension greater than $\frac{2n-d}{2}$. Thus
$\As(q,n,d) \leq \As(q,n,d,R_d) + 2$ for $d \leq 2 \left\lfloor \frac{n}{2} \right\rfloor$ and
$\As(q,2\left\lfloor \frac{n}{2} \right\rfloor+1,2\left\lfloor \frac{n}{2} \right\rfloor+1) \leq 2$. Since the code $\{ \{{\bf 0}\},
\mathrm{GF}(q)^{2 \left\lfloor \frac{n}{2} \right\rfloor+1} \}$ has minimum subspace distance $2
\left\lfloor \frac{n}{2} \right\rfloor+1$, we obtain $\As(q,2\left\lfloor \frac{n}{2} \right\rfloor+1,2\left\lfloor \frac{n}{2} \right\rfloor+1) = 2$.

A CDC in $E_r(q,n)$ with minimum injection distance $\left\lceil
\frac{d}{2} \right\rceil$ has minimum subspace distance $\geq d$,
and hence $\Ac(q,n,r,\left\lceil \frac{d}{2} \right\rceil) \leq
\As(q,n,d)$ for all $r$. Also, the codewords with dimension $r$ in a
code with minimum subspace distance $d$ form a CDC in $E_r(q,n)$
with minimum injection distance at least $\left\lceil \frac{d}{2}
\right\rceil$, and hence $\As(q,n,d) \leq \As(q,n,d,R_d) + 2 \leq 2
+ \sum_{r \in R_d} \Ac(q,n,r,\left\lceil \frac{d}{2} \right\rceil)$.
\end{proof}

We compare our lower bound on $\As(q,n,d)$ in Proposition
\ref{prop:As_Rd} to the Gilbert bound in \cite[Theorem
5]{etzion_isit08}. The latter shows that $\As(q,n,d) \geq
\frac{|E(q,n)|}{\mbox{avg} \left\{\Vs(r,d-1)\right\}}$, where the average volume $\mbox{avg} \left\{\Vs(r,d-1)\right\}$ is taken
over all subspaces in $E(q,n)$. Using the bounds on $\Vs(r,d-1)$ in
Proposition~\ref{prop:bound_Vs}, it can be shown that this lower
bound is at most $2 K_q^{-1} \theta(q) q^{\frac{3}{4}} q^{\left\lfloor \frac{n}{2} \right\rfloor(n-\left\lfloor \frac{n}{2} \right\rfloor) -
\frac{1}{4}(d-1)(2n-d+1)}$. On the other hand,
Proposition~\ref{prop:As_Rd} and (\ref{eq:bounds_Ac}) yield
$\As(q,n,d) \geq q^{\left\lfloor \frac{n}{2} \right\rfloor(n-\left\lfloor \frac{n}{2} \right\rfloor) - \frac{1}{4}(d-1)(n+1)}$. The ratio
between our lower bound and the Gilbert bound is hence at least
$\frac{1}{2} K_q \theta(q)^{-1} q^{\frac{1}{4}(d-1)(n-d)}\geq 1$ for all $n$ and $d$. Therefore, our
lower bound in Proposition \ref{prop:As_Rd} is tighter than the
Gilbert bound in \cite[Theorem
5]{etzion_isit08}.

The lower bound in Proposition~\ref{prop:As_Rd} is further tightened
below by considering the union of CDCs in different Grassmannians.

\begin{proposition} \label{prop:lower_As}
For all $q$, $n$, and $2 \leq d \leq n$, we have $\As(q,n,d) \geq \sum_{i=-z}^z \Ac(q,n,\left\lfloor \frac{n}{2} \right\rfloor - id,\left\lceil \frac{d}{2} \right\rceil)$, where $z = \left\lfloor \frac{\left\lfloor \frac{n}{2} \right\rfloor}{d} \right\rfloor$.
\end{proposition}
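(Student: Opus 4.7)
The plan is to construct an explicit subspace code achieving the stated cardinality by taking a disjoint union of optimal CDCs sitting in carefully chosen Grassmannians. For each integer $i$ with $-z \leq i \leq z$, let $\mathcal{C}_i$ be an optimal CDC in $E_{\left\lfloor \frac{n}{2} \right\rfloor - id}(q,n)$ with minimum injection distance $\left\lceil \frac{d}{2} \right\rceil$ and cardinality $\Ac(q,n,\left\lfloor \frac{n}{2} \right\rfloor - id,\left\lceil \frac{d}{2} \right\rceil)$, and take $\mathcal{C} = \bigcup_{i=-z}^{z} \mathcal{C}_i$. I would then show that $\mathcal{C} \subseteq E(q,n)$ has minimum subspace distance at least $d$, so that $\As(q,n,d) \geq |\mathcal{C}|$.

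First, I would verify that the Grassmannians in question are all well-defined, i.e.\ that $0 \leq \left\lfloor \frac{n}{2} \right\rfloor - id \leq n$ for every $|i| \leq z$. By the definition $z = \left\lfloor \left\lfloor \frac{n}{2} \right\rfloor / d \right\rfloor$ we have $zd \leq \left\lfloor \frac{n}{2} \right\rfloor$, so the smallest dimension $\left\lfloor \frac{n}{2} \right\rfloor - zd$ is nonnegative and the largest dimension $\left\lfloor \frac{n}{2} \right\rfloor + zd$ is at most $2\left\lfloor \frac{n}{2} \right\rfloor \leq n$. Since the codewords in distinct $\mathcal{C}_i$'s have distinct dimensions, the union is automatically disjoint, giving $|\mathcal{C}| = \sum_{i=-z}^{z} \Ac(q,n,\left\lfloor \frac{n}{2} \right\rfloor - id,\left\lceil \frac{d}{2} \right\rceil)$.

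The key step is the minimum-distance verification, split into two cases. For two distinct codewords $U,V$ in the same $\mathcal{C}_i$, they have equal dimension, so by the equality case of~(\ref{eq:ds_dm}) we get $\ds(U,V) = 2\di(U,V) \geq 2\left\lceil \frac{d}{2} \right\rceil \geq d$. For codewords $U \in \mathcal{C}_i$ and $V \in \mathcal{C}_j$ with $i \neq j$, their dimensions differ by $|i-j|d \geq d$. Combining this with the elementary bound $\ds(U,V) \geq |\dim(U) - \dim(V)|$, which follows from $\ds(U,V) = \dim(U) + \dim(V) - 2\dim(U \cap V)$ together with $\dim(U \cap V) \leq \min\{\dim(U),\dim(V)\}$, yields $\ds(U,V) \geq d$ once again. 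Assembling the two cases completes the proof.

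There is no real obstacle here; the construction works precisely because the dimensions are spaced by $d$, which is exactly what forces the inter-Grassmannian subspace distance to be at least $d$ for free. The only point requiring a little care is choosing the arithmetic progression of dimensions centered at $\left\lfloor \frac{n}{2} \right\rfloor$ (rather than somewhere asymmetric), so as to maximize the number of terms while keeping all dimensions in range, thereby exploiting the earlier observation that Grassmannians near dimension $\left\lfloor \frac{n}{2} \right\rfloor$ dominate in size.
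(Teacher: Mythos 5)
Your proof is correct and follows essentially the same route as the paper's: both take the union of optimal CDCs in the Grassmannians of dimensions $\left\lfloor \frac{n}{2} \right\rfloor - id$ for $-z \leq i \leq z$, and verify the minimum subspace distance by splitting into the equal-dimension case (where $\ds = 2\di \geq 2\left\lceil \frac{d}{2}\right\rceil \geq d$) and the distinct-dimension case (where $\ds \geq |\dim(U)-\dim(V)| \geq d$). Your added checks that the dimensions stay in $[0,n]$ and that the union is disjoint are details the paper leaves implicit.
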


\begin{proof}
For $i=-z,-z+1,\ldots,z$, let $\mathcal{C}_i$ be a CDC in $E_{\left\lfloor \frac{n}{2} \right\rfloor - id}(q,n)$ with minimum subspace distance $2\left\lceil \frac{d}{2} \right\rceil$ and cardinality $\Ac(q,n,\left\lfloor \frac{n}{2} \right\rfloor - id,\left\lceil \frac{d}{2} \right\rceil)$ and let $\mathcal{C} = \bigcup_{i=-z}^z \mathcal{C}_i$. We have $|\mathcal{C}| = \sum_{i=-z}^z |\mathcal{C}_i|$, and we now prove that $\mathcal{C}$ has minimum subspace distance at least $d$ by considering two distinct codewords $C_i^j \in \mathcal{C}_i$ and $C_a^b \in \mathcal{C}_a$. First, if $i \neq a$, then $\ds(C_i^j, C_a^b) \geq |i-a| d \geq d$; second, if $i=a$ and $j \neq b$, then $\ds(C_a^j, C_a^b) \geq 2\left\lceil \frac{d}{2} \right\rceil$ by the minimum distance of $\mathcal{C}_a$.
\end{proof}

In order to characterize the rate loss by using CDCs instead of subspace codes, we now compare the cardinalities of optimal subspace codes and optimal CDCs with the same minimum subspace distance $d$. Note that the bounds on the cardinalities of optimal CDCs in (\ref{eq:bounds_Ac}) assume the injection metric for CDC.  When $d$ is even, a CDC with a minimum subspace distance $d$ has a minimum injection distance $\frac{d}{2}$. When $d$ is odd, a CDC with a minimum subspace distance $d+1$ has a minimum injection distance $\frac{d+1}{2}=\left\lceil
\frac{d}{2} \right\rceil$. Thus, a CDC has a minimum subspace distance at least $d$ if
and only if it has minimum injection distance at least $\left\lceil
\frac{d}{2} \right\rceil$. Hence, we compare $\As(q,n,d)$ and
$\Ac(q,n,r,\left\lceil \frac{d}{2} \right\rceil)$
in Proposition~\ref{prop:As_v_Ac} below.

\begin{proposition} \label{prop:As_v_Ac}
\textit{(Comparison between optimal subspace codes and CDCs in the subspace metric).} For $2 \leq d \leq 2\left\lfloor \frac{n}{2} \right\rfloor$ and $\left\lceil\frac{d}{2}\right\rceil
\leq r \leq \left\lfloor \frac{n}{2} \right\rfloor$,
\begin{align}
	\nonumber
    & K_q q^{(\left\lfloor \frac{n}{2} \right\rfloor-r)(\left\lfloor \frac{n}{2} \right\rfloor - r + \left\lceil\frac{d}{2}\right\rceil - 1)}
    \Ac \left(q,n,r,\left\lceil\frac{d}{2}\right\rceil \right)\\
    \nonumber
    & < \As(q,n,d)\\
    \nonumber
    & < 2 K_q^{-1} \theta(q) q^{(\left\lfloor \frac{n}{2} \right\rfloor-r)(\left\lfloor \frac{n}{2} \right\rfloor - r + \left\lceil\frac{d}{2}\right\rceil - 1)}
    \Ac \left(q,n,r,\left\lceil\frac{d}{2}\right\rceil \right).\\
    \label{eq:As_v_Ac}
\end{align}
\end{proposition}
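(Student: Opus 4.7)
My plan is to derive both inequalities in~(\ref{eq:As_v_Ac}) from Proposition~\ref{prop:As_Rd} combined with the two-sided bounds on $\Ac(q,n,r,d)$ in~(\ref{eq:bounds_Ac}). The bridge between $\As(q,n,d)$ and $\Ac(q,n,r,d')$ is a short algebraic identity on the exponents: with $m = \left\lfloor \frac{n}{2} \right\rfloor$, $d' = \left\lceil \frac{d}{2} \right\rceil$, and $k = m-r$, direct expansion shows that $(n-r)(r-d'+1) + (m-r)(m-r+d'-1)$ equals $(n-m)(m-d'+1)$ when $n = 2m$ and equals $(n-m)(m-d'+1) - k$ when $n = 2m+1$; in either case it never exceeds $(n-m)(m-d'+1)$.

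For the lower bound in~(\ref{eq:As_v_Ac}), Proposition~\ref{prop:As_Rd} gives $\As(q,n,d) \geq \Ac(q,n,m,d') \geq q^{(n-m)(m-d'+1)} + 1$ by~(\ref{eq:bounds_Ac}). Meanwhile the upper half of~(\ref{eq:bounds_Ac}) yields $K_q q^{(m-r)(m-r+d'-1)} \Ac(q,n,r,d') < q^{(m-r)(m-r+d'-1)+(n-r)(r-d'+1)}$, which the identity shows to be at most $q^{(n-m)(m-d'+1)}$, establishing the left half of~(\ref{eq:As_v_Ac}).

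For the upper bound, I would begin with $\As(q,n,d) \leq 2 + \sum_{r' \in R_d} \Ac(q,n,r',d')$ and bound each summand by the smaller of $K_q^{-1} q^{(n-r')(r'-d'+1)}$ and $K_q^{-1} q^{r'(n-r'-d'+1)}$, the second bound coming from $\Ac(q,n,r',d') = \Ac(q,n,n-r',d')$. Folding the tail by symmetry, the sum collapses to $\sum_{r'\in R_d}\Ac(q,n,r',d') \leq 2K_q^{-1}\sum_{r'=d'}^{m} q^{f(r')}$ with $f(r') = (n-r')(r'-d'+1)$; since $f$ is strictly increasing on $[d',m]$ and $f(m) - f(r')$ grows at least quadratically in $m-r'$, the sum is bounded geometrically by $\theta(q)\, q^{f(m)}$. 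Applying the identity together with the sharper symmetric lower bound $\Ac(q,n,r,d') \geq q^{r(n-r-d'+1)}$, valid for $r \leq m$ and strictly larger than $q^{(n-r)(r-d'+1)}$ by a factor of $q^{(d'-1)(n-2r)}$, converts $q^{(n-m)(m-d'+1)}$ into a bound of the form $q^{(m-r)(m-r+d'-1)}\Ac(q,n,r,d')$ with enough multiplicative slack so that the arithmetic estimate $2 K_q^{-1}\theta(q) > 2$ absorbs the additive $+2$ from Proposition~\ref{prop:As_Rd}.

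The main technical obstacle is the odd-$n$ case, where the identity loses a factor $q^k$ relative to the even case. The sharper symmetric lower bound on $\Ac(q,n,r,d')$ is precisely what compensates: it exceeds $q^{(n-r)(r-d'+1)}$ by $q^{(d'-1)(2k+1)}$, which dominates $q^k$ whenever $d' \geq 2$. A separate small verification is needed for the borderline case $d' = 1$ using the explicit value $\Ac(q,n,r,1) = {n \brack r}$ and a refined use of~(\ref{eq:Gaussian}) to close the remaining gap.
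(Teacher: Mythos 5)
Your lower-bound argument coincides with the paper's (compare $\As(q,n,d)\geq \Ac(q,n,\left\lfloor\frac{n}{2}\right\rfloor,\left\lceil\frac{d}{2}\right\rceil)\geq q^{(n-\left\lfloor\frac{n}{2}\right\rfloor)(\left\lfloor\frac{n}{2}\right\rfloor-\left\lceil\frac{d}{2}\right\rceil+1)}$ against the upper half of (\ref{eq:bounds_Ac}) via your exponent identity), and that part is correct. The upper bound is where the proposal breaks: the ``sharper symmetric lower bound'' $\Ac(q,n,r,d')\geq q^{r(n-r-d'+1)}$ for $r\leq\left\lfloor\frac{n}{2}\right\rfloor$ is false. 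The symmetry $\Ac(q,n,r,d)=\Ac(q,n,n-r,d)$ only transports the lower bound of (\ref{eq:bounds_Ac}) from the dimension $\min\{r,n-r\}$; for $r\leq\left\lfloor\frac{n}{2}\right\rfloor$ that is $r$ itself, and you recover $q^{(n-r)(r-d'+1)}$, not the larger exponent. Indeed your claimed bound contradicts the upper bound in (\ref{eq:bounds_Ac}) whenever $d'\geq 2$ and $r<\frac{n}{2}$: for $q=2$, $n=10$, $r=3$, $d'=2$ it asserts $\Ac(2,10,3,2)\geq 2^{18}=262144$, whereas $\Ac(2,10,3,2)\leq {10\brack 2}/{3\brack 2}=24893$. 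So the device meant to recover the factor $q^{\left\lfloor\frac{n}{2}\right\rfloor-r}$ lost in the odd-$n$ case does not exist, and without it your argument reduces to the paper's chain, which (as you observe) falls short by exactly that factor when $n$ is odd.

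That said, you have correctly identified a point the paper's own proof passes over silently: its final step ``follows from (\ref{eq:bounds_Ac})'' implicitly uses the identity $(\left\lfloor\frac{n}{2}\right\rfloor-r)(\left\lfloor\frac{n}{2}\right\rfloor-r+d'-1)+(n-r)(r-d'+1)=(n-\left\lfloor\frac{n}{2}\right\rfloor)(\left\lfloor\frac{n}{2}\right\rfloor-d'+1)$, which holds only for even $n$. The deficit for odd $n$ is not repairable by any lower bound on $\Ac$: taking $q=2$, $n=21$, $d=2$, $r=2$, the Grassmannian $E_{10}(2,21)$ has minimum subspace distance $2$, so $\As(2,21,2)\geq{21\brack 10}\geq 2^{110}$, while the claimed upper bound is $2K_2^{-1}\theta(2)\,2^{64}{21\brack 2}<2^{108}$. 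Thus for odd $n$ and $r$ far from $\frac{n}{2}$ the displayed upper bound itself fails; the exponent needs $\left\lceil\frac{n}{2}\right\rceil-r+\left\lceil\frac{d}{2}\right\rceil-1$ as its second factor, with which both your folding argument and the paper's go through, and your separate $d'=1$ verification becomes unnecessary.
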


\begin{proof}
By (\ref{eq:Gaussian}), Proposition~\ref{prop:As_Rd}, and
(\ref{eq:bounds_Ac}), we have $\As(q,n,d) \geq
\Ac(q,n,\left\lfloor \frac{n}{2} \right\rfloor,\left\lceil\frac{d}{2}\right\rceil) \geq
q^{(n-\left\lfloor \frac{n}{2} \right\rfloor)(\left\lfloor \frac{n}{2} \right\rfloor-\left\lceil \frac{d}{2} \right\rceil+1)} > K_q
q^{(\left\lfloor \frac{n}{2} \right\rfloor-r)(\left\lfloor \frac{n}{2} \right\rfloor - r + \left\lceil\frac{d}{2}\right\rceil - 1)}
    \Ac \left(q,n,r,\left\lceil\frac{d}{2}\right\rceil \right)$.
Also, Proposition~\ref{prop:As_Rd} and (\ref{eq:Gaussian}) also lead to
\begin{align}
    \nonumber
    \As(q,n,d) <& 2 + 2 K_q^{-1} \sum_{r = \left\lceil \frac{d}{2}
    \right\rceil}^{\left\lfloor \frac{n}{2} \right\rfloor} q^{(n-r)(r-\left\lceil \frac{d}{2}
    \right\rceil+1)}\\
    \nonumber
    =& 2 + 2 K_q^{-1} q^{(n-\left\lfloor \frac{n}{2} \right\rfloor)(\left\lfloor \frac{n}{2} \right\rfloor-\left\lceil \frac{d}{2}
    \right\rceil+1)}\\
	\nonumber    
    &\sum_{i=0}^{\left\lfloor \frac{n}{2} \right\rfloor-\left\lceil \frac{d}{2} \right\rceil} q^{-i(n-2\left\lfloor \frac{n}{2} \right\rfloor+ \left\lceil \frac{d}{2}   	 \right\rceil - 1 + i)},
\end{align}
where $i = \left\lfloor \frac{n}{2} \right\rfloor - r$. Since $n-2\left\lfloor \frac{n}{2} \right\rfloor \geq 0$ and $\left\lceil \frac{d}{2} \right\rceil - 1 \geq 0$, we obtain
\begin{align}
    \nonumber
    \As(q,n,d) <& 2 + 2 K_q^{-1} \theta(q) q^{(n-\left\lfloor \frac{n}{2} \right\rfloor)(\left\lfloor \frac{n}{2} \right\rfloor-\left\lceil
    \frac{d}{2} \right\rceil+1)}\\
    \nonumber
    \leq& 2 K_q^{-1} \theta(q) q^{(\left\lfloor \frac{n}{2} \right\rfloor-r)(\left\lfloor \frac{n}{2} \right\rfloor - r + \left\lceil\frac{d}{2}\right\rceil - 1)}\\
    \label{eq:As1}
	& \Ac \left(q,n,r, \left\lceil \frac{d}{2} \right\rceil \right),
\end{align}
where (\ref{eq:As1}) follows from (\ref{eq:bounds_Ac}).
\end{proof}

We now compare the relation between $\As(q,n,d)$ and $\Ac(q,n,r,d)$ in Proposition \ref{prop:As_v_Ac} to the one determined in \cite[Theorem 5]{etzion_it09}. The latter only provides the following lower bound on $\As(q,n,d)$: $\As(q,n,d) \geq \frac{q^{n+1-r} + q^r - 2}{q^{n+1} - 1} \Ac(q,n+1,r,\left\lceil \frac{d}{2} \right\rceil+1)$. The Singleton bound on CDCs \cite{koetter_it08} indicates that $\Ac(q,n+1,r,d+1) \leq \Ac(q,n,r,d-1)$, which in turn satisfies $\Ac(q,n,r,d-1) < K_q^{-1} q^{-(n-2r-d)} \Ac(q,n,r,d)$ by (\ref{eq:Gaussian}). Hence the lower bound on $\As(q,n,d)$ in \cite[Theorem 5]{etzion_it09} is at most $2 K_q^{-1} q^{-(n-r+\left\lceil \frac{d}{2} \right\rceil-1)} \Ac(q,n,r,\left\lceil \frac{d}{2} \right\rceil)$. The ratio between our lower bound in Proposition \ref{prop:As_v_Ac} and the lower bound in \cite[Theorem 5]{etzion_it09} is at least $\frac{K_q}{2} q^{n-\left\lfloor \frac{n}{2} \right\rfloor + (\left\lfloor \frac{n}{2} \right\rfloor-r+1)(\left\lfloor \frac{n}{2} \right\rfloor-r+\left\lceil \frac{d}{2} \right\rceil-1)}$, and thus our lower bound in Proposition \ref{prop:As_v_Ac} is tighter than the bound in \cite[Theorem 5]{etzion_it09} for all cases.

The bounds in Proposition~\ref{prop:As_v_Ac} help us determine the
asymptotic behavior of $\As(q,n,d)$. We first define the rate of a
subspace code $\mathcal{C} \subseteq E(q,n)$ as $\frac{\log_q
|\mathcal{C}|}{\log_q |E(q,n)|}$. We note that this definition is
combinatorial, and differs from the rate introduced in
\cite{koetter_it08} for CDCs. The rate defined in
\cite{koetter_it08} also accounts for the channel usage,
but it seems appropriate for CDCs only. On the other
hand, our rate depends on only the cardinality of the code, and
hence is more appropriate to compare general subspace codes, since
all the subspaces are treated equally regardless of their
dimension. Finally, the rate defined in \cite{koetter_it08} can be
derived from our rate defined here. Using the normalized parameters
$r' \df \frac{r}{n}$ and $\ds' \df \frac{\ds}{n}$ where $\ds$
is the minimum subspace distance of a code, the asymptotic rate of a
subspace code $\as(\ds') = \lim \sup_{n \rightarrow \infty}
\frac{\log_q \As(q,n,\lceil n\ds' \rceil)}{\log_q |E(q,n)|}$ and of a CDC of given
dimension $\as(r',\ds') = \lim \sup_{n \rightarrow \infty}
\frac{\log_q \Ac(q,n,nr',\left\lceil n\frac{\ds'}{2}
\right\rceil)}{\log_q |E(q,n)|}$ can be easily determined.

\begin{proposition}\label{prop:as}
\textit{(Asymptotic rate of packing subspace codes in the subspace metric).}
For $0 \leq \ds' \leq 1$, $\as(\ds') = 1-\ds'$. For $0 \leq r' \leq
\frac{\ds'}{2}$ or $1-\frac{\ds'}{2} \leq r' \leq 1$, $\as(r',\ds')
= 0$; for $\frac{\ds'}{2} \leq r \leq \frac{1}{2}$, $\as(r',\ds') =
2(1-r')(2r'-\ds')$; for $\frac{1}{2} \leq r' \leq 1-\frac{\ds'}{2}$,
$\as(r',\ds') = 2r'(2-2r'-\ds')$.
\end{proposition}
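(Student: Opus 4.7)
The plan is to combine Lemma~\ref{lemma:bounds_E}, Proposition~\ref{prop:As_v_Ac}, and the CDC bounds~(\ref{eq:bounds_Ac}), and to push everything through a $\log_q(\cdot)/\log_q|E(q,n)|$ normalization. First, by Lemma~\ref{lemma:bounds_E}, $\log_q |E(q,n)| = \left\lfloor \tfrac{n}{2} \right\rfloor(n-\left\lfloor \tfrac{n}{2} \right\rfloor) + O(1)$, so the denominator in every rate is asymptotic to $n^2/4$. The constants $K_q$ and $\theta(q)$ appearing in all multiplicative bounds contribute only $O(1)$ to $\log_q$, hence vanish after dividing by $n^2/4$.

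For the subspace-code rate $\as(\ds')$, I would specialize Proposition~\ref{prop:As_v_Ac} at $r=\left\lfloor \tfrac{n}{2} \right\rfloor$: the exponent $(\left\lfloor \tfrac{n}{2} \right\rfloor - r)(\left\lfloor \tfrac{n}{2} \right\rfloor - r + \left\lceil \tfrac{d}{2} \right\rceil - 1)$ vanishes, so
\begin{equation*}
\log_q \As(q,n,d) = \log_q \Ac\!\left(q,n,\left\lfloor \tfrac{n}{2} \right\rfloor,\left\lceil \tfrac{d}{2} \right\rceil\right) + O(1).
\end{equation*}
Substituting (\ref{eq:bounds_Ac}) with $r = \left\lfloor \tfrac{n}{2} \right\rfloor$ gives
\begin{equation*}
\log_q \As(q,n,d) = \left(n-\left\lfloor \tfrac{n}{2} \right\rfloor\right)\!\left(\left\lfloor \tfrac{n}{2} \right\rfloor - \left\lceil \tfrac{d}{2} \right\rceil + 1\right) + O(1).
\end{equation*}
With $d = \lceil n\ds' \rceil$, $\left\lfloor \tfrac{n}{2} \right\rfloor = \tfrac{n}{2}+O(1)$ and $\left\lceil \tfrac{d}{2} \right\rceil = \tfrac{n\ds'}{2}+O(1)$, so the right side is $\tfrac{n^2}{4}(1-\ds') + O(n)$. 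Dividing by $\log_q|E(q,n)| \sim n^2/4$ yields $\as(\ds') = 1-\ds'$, which is valid as soon as $\ds'/2 \leq \left\lfloor \tfrac{n}{2} \right\rfloor/n$ asymptotically, i.e., for every $\ds' \in [0,1]$ (the case $\ds'=1$ being covered by the $n=d=2\left\lfloor \tfrac{n}{2} \right\rfloor+1$ clause and by the dimension-$\lfloor n/2\rfloor$ part of Proposition~\ref{prop:As_Rd} when $n$ is even).

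For the CDC rate $\as(r',\ds')$, I would split into three cases. If $r' < \ds'/2$ (or $r' > 1-\ds'/2$), then for large $n$ we have $nr' < \left\lceil n\ds'/2 \right\rceil$ (respectively $n-nr' < \left\lceil n\ds'/2 \right\rceil$ after using $\Ac(q,n,r,d)=\Ac(q,n,n-r,d)$), so the injection distance exceeds the dimension, forcing $\Ac(q,n,nr',\left\lceil n\ds'/2 \right\rceil)=1$ and hence rate $0$. For $\ds'/2 \leq r' \leq 1/2$, apply (\ref{eq:bounds_Ac}) directly: the exponent $(n-r)(r-d+1)$ with $r=nr'$ and $d=\lceil n\ds'/2 \rceil$ is $n^2(1-r')(r'-\ds'/2) + O(n)$, so dividing by $n^2/4$ gives $\as(r',\ds') = 2(1-r')(2r'-\ds')$. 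For $1/2 \leq r' \leq 1-\ds'/2$, use $\Ac(q,n,r,d) = \Ac(q,n,n-r,d)$ to reduce to the previous case with $r'$ replaced by $1-r' \in [\ds'/2,1/2]$, which yields $\as(r',\ds') = 2(1-(1-r'))(2(1-r')-\ds') = 2r'(2-2r'-\ds')$.

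The main obstacle is largely bookkeeping rather than substance: one must verify that in each case the hypotheses $r \leq \left\lfloor \tfrac{n}{2} \right\rfloor$ and $2 \leq d \leq r$ required by (\ref{eq:bounds_Ac}) are met once $n$ is large enough, and that the rounding in $nr'$, $\left\lceil n\ds'/2 \right\rceil$, and $\left\lfloor \tfrac{n}{2} \right\rfloor$ only perturbs the exponent by $O(n)$, which is absorbed by the $n^2/4$ normalization. The only subtlety is the boundary $r' = \ds'/2$ (and its reflection $r'=1-\ds'/2$), where the lower bound in (\ref{eq:bounds_Ac}) gives just $\Ac \geq 2$; continuity of the piecewise expression $2(1-r')(2r'-\ds')$ at $r'=\ds'/2$ (it equals $0$ there) makes this case consistent with both adjacent regimes, so no separate argument is needed.
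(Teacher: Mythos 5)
Your proposal is correct and follows essentially the same route as the paper's proof: the denominator is handled via Lemma~\ref{lemma:bounds_E}, the CDC rates come from (\ref{eq:bounds_Ac}) together with the symmetry $\Ac(q,n,r,d)=\Ac(q,n,n-r,d)$ for $r'>\tfrac{1}{2}$, and $\as(\ds')$ is obtained by specializing Proposition~\ref{prop:As_v_Ac} at $r=\left\lfloor \tfrac{n}{2} \right\rfloor$. Your write-up is in fact somewhat more careful than the paper's (which is terse about the degenerate ranges and the boundary $r'=\ds'/2$), but there is no substantive difference in method.
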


\begin{proof}
First, (\ref{eq:bounds_Ac}) and Lemma \ref{lemma:bounds_E} yield
$\as(r',\ds') = 2(1-r')(2r'-\ds')$ for $0 \leq r' \leq
\frac{\ds'}{2}$. Since $\Ac(q,n,r,\left\lceil \frac{d}{2}
\right\rceil) = \Ac(q,n,n-r,\left\lceil \frac{d}{2} \right\rceil)$,
we also obtain $\as(r',\ds') = 2(1-r')(2r'-\ds')$ for
$\frac{\ds'}{2} \leq r \leq \frac{1}{2}$. Second, (\ref{eq:As_v_Ac})
for $r = \left\lfloor \frac{n}{2} \right\rfloor$ and (\ref{eq:bounds_Ac}) yield $\as(\ds') =
\as(\frac{1}{2}, \ds') = 1-\ds'$.
\end{proof}

Propositions~\ref{prop:As_v_Ac} and \ref{prop:as} provide several
important insights. First, Proposition~\ref{prop:As_v_Ac} indicates
that optimal CDCs with dimension being half of the block length up
to rounding ($r = \left\lfloor \frac{n}{2} \right\rfloor$ and $r =n- \left\lfloor \frac{n}{2} \right\rfloor$)  are optimal subspace codes
up to a scalar. In this case, the optimal CDCs have a limited rate
loss as opposed to optimal subspace codes with the same error
correction capability. When $r < \left\lfloor \frac{n}{2} \right\rfloor$, the rate loss suffered by
optimal CDCs increases with $\left\lfloor \frac{n}{2} \right\rfloor - r$. Proposition~\ref{prop:as}
indicates that using CDCs with dimension $\left\lceil \frac{\ds}{2}
\right\rceil \leq r < \left\lfloor \frac{n}{2} \right\rfloor$ leads to a decrease in rate on the order
of $(1-2r')(\ds'+1-2r')$, where $r' = \frac{r}{n}$. Since the rate
loss increases with $1-2r'$, using a CDC with a dimension further
from $\left\lfloor \frac{n}{2} \right\rfloor$ leads to a larger rate loss.

The conclusion above can be explained from a combinatorial
perspective as well. When $r = \left\lfloor \frac{n}{2} \right\rfloor$ or $r =n- \left\lfloor \frac{n}{2} \right\rfloor$, by
Lemma~\ref{lemma:bounds_E},  $|E(q,n)|$ is the same as
$|E_r(q,n)|={n \brack r}$ up to scalar. Thus it is not surprising
that the optimal packings in $E(q,n)$ are the same as those in
$E_r(q,n)$ up to scalar.

We also comment that the asymptotic rates in
Proposition~\ref{prop:as} for subspace codes come from Singleton
bounds. The asymptotic rate $\as(r',\ds')$ is achieved by KK codes.
The asymptotic rate $\as(\ds')$ is similar to that for rank metric
codes \cite{gadouleau_it08_covering}. This can be explained by the
fact that the asymptotic rate $\as(\ds')$ is also achieved by KK
codes when $r = \left\lfloor \frac{n}{2} \right\rfloor$, whose cardinalities are equal to those of
optimal rank metric codes.

In Table \ref{table:comparison} we compare the bounds on $\As(q,n,d)$ derived in this paper with each other and with existing bounds in the literature, for $q=2$, $n=10$, and $d$ ranging from $2$ to $10$. We consider the lower bound in Proposition \ref{prop:As_Rd}, its refinement in Proposition \ref{prop:lower_As}, and the lower bounds in \cite{etzion_isit08} and \cite[Theorem 5]{etzion_it09} described above, and the upper bound comes from Proposition \ref{prop:As_Rd}. Note that Proposition~\ref{prop:As_v_Ac} is not included in the comparison since its purpose is to compare the cardinalities of optimal subspace codes and optimal CDCs with the same minimum subspace distance. Since bounds in Propositions~\ref{prop:As_Rd} and \ref{prop:lower_As} and \cite[Theorem~5]{etzion_it09} depend on cardinalities of either related CDCs or optimal CDCs, we use the cardinalities of CDCs with dimension $r=n/2=5$ proposed in \cite{etzion_it09} and \cite{gadouleau_it09_cdc} as lower bounds on $\Ac(q,n,r,d)$ and the upper bound in \cite{xia_dcc09} on $\Ac(q,n,r,d)$ to derive the numbers in Table \ref{table:comparison}. For example, the lower bound of Proposition \ref{prop:As_Rd} is simply given by the construction in \cite{etzion_it09} when $d=3, 4, 5, \mbox{ and } 6$,  and given by the construction in \cite{gadouleau_it09_cdc} for other values of $d$.
Table~\ref{table:comparison} illustrates our lower bounds in Propositions~\ref{prop:As_Rd} and \ref{prop:lower_As} are tighter than those in \cite{etzion_isit08} and \cite[Theorem 5]{etzion_it09}. The cardinalities of CDCs with dimension $r=n/2$ in \cite{etzion_it09} and \cite{gadouleau_it09_cdc}, displayed as the lower bound in Proposition \ref{prop:As_Rd}, are quite close to the lower bound in Proposition \ref{prop:lower_As}, supporting our conclusion that the rate loss suffered by properly designed CDCs is smaller when the dimension is close to $n/2$. Also, the lower and upper bounds in Proposition~\ref{prop:As_Rd} depend on $\left\lceil \frac{d}{2} \right\rceil$, and hence the bounds for $d=2l$ and $d=2l-1$ are the same. Finally, the tightness of the bounds improves as the minimum distance of the code increases, leading to very tight bounds for $d=n$.

\begin{table*}
\begin{center}
\begin{tabular}{|c|c|c|c|c||c|}
	\hline
	 & \multicolumn{4}{|c||}{Lower bounds} & Upper bound\\
	\hline
	$d$ & Proposition \ref{prop:As_Rd} & Proposition \ref{prop:lower_As} & \cite{etzion_isit08} & \cite[Theorem 5]{etzion_it09} & Proposition \ref{prop:As_Rd}\\
	\hline
	2 & 52,494,849 & 59,058,177 & 3,181,506 & 3,073,032 & 229,755,605\\
	3 & 1,167,327 & 1,167,967 & 64,047 & 88,163 & 2,616,760\\
	4 & 1,167,327 & 1,167,329 & 64,047 & 4,650 & 2,616,760\\
	5 & 32,841 & 32,843 & 1,986 & 397 & 50,708\\
	6 & 32,841 & 32,841 & 1,986 & 54 & 50,708\\
	7 & 1,025 & 1,025 & 63 & 12 & 1,260\\
	8 & 1,025 & 1,025 & 63 & 4 & 1,260\\
	9 & 33 & 33 & 2 & 2 & 35\\
	10 & 33 & 33 & 2 & 2 & 35\\
	\hline
\end{tabular} \caption{Comparison of bounds on $\As(2,10,d)$ for $d$ from $2$ to $10$} \label{table:comparison}
\end{center}
\end{table*}

\subsection{Covering properties of subspace codes with the subspace metric}
\label{sec:covering_subspace}

We now consider the covering properties of subspace codes with the
subspace metric. The subspace covering radius in $E(q,n)$ of a code $\mathcal{C}$
 is defined as $\max_{U \in E(q,n)}
\ds(U,\mathcal{C})$. We denote the minimum cardinality of a subspace
code in $E(q,n)$ with subspace covering radius $\rho$ as
$\Ks(q,n,\rho)$. Since $\Ks(q,n,0) = |E(q,n)|$ and $\Ks(q,n,n) = 1$,
we assume $0 < \rho < n$ henceforth. We determine below the minimum
cardinality of a code with subspace covering radius $\rho \geq \left\lfloor \frac{n}{2} \right\rfloor$.

\begin{proposition}\label{prop:Ks_2rho>n}
For $\left\lfloor \frac{n}{2} \right\rfloor \leq \rho < n$, $\Ks(q,n,\rho) = 2$.
\end{proposition}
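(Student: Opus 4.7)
The plan is to prove the equality by establishing the upper and lower bounds separately; both are elementary once the right witnesses are chosen.

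For the upper bound $\Ks(q,n,\rho) \leq 2$, I would exhibit an explicit two-element code and verify its covering radius is at most $\left\lfloor \frac{n}{2} \right\rfloor$, which suffices since $\rho \geq \left\lfloor \frac{n}{2} \right\rfloor$. The natural candidate is $\mathcal{C} = \{\{\mathbf{0}\},\, \mathrm{GF}(q)^n\}$, i.e., the trivial subspace together with the whole space. For any $U \in E(q,n)$ of dimension $k$, we have $\ds(U, \{\mathbf{0}\}) = k$ and $\ds(U, \mathrm{GF}(q)^n) = n-k$. Hence $\ds(U, \mathcal{C}) = \min(k, n-k) \leq \left\lfloor \frac{n}{2} \right\rfloor \leq \rho$, which confirms that $\mathcal{C}$ covers $E(q,n)$ with subspace radius at most $\rho$.

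For the lower bound $\Ks(q,n,\rho) \geq 2$, I would show that no single subspace $C \in E(q,n)$ can cover the whole projective space when $\rho < n$, by exhibiting for each possible $C$ some $V$ with $\ds(V,C) = n$. If $\dim(C) = 0$, take $V = \mathrm{GF}(q)^n$; if $\dim(C) = n$, take $V = \{\mathbf{0}\}$; in both cases $\ds(V,C) = n > \rho$. Otherwise, for $1 \leq \dim(C) = k \leq n-1$, choose $V$ of dimension $n-k$ with $V \cap C = \{\mathbf{0}\}$ (such a complement always exists), so that $V + C = \mathrm{GF}(q)^n$ and therefore $\ds(V,C) = \dim(V+C) - \dim(V \cap C) = n$. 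In every case some subspace is at distance $n$ from $C$, so no singleton code has covering radius $\rho < n$.

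The two bounds together give $\Ks(q,n,\rho) = 2$ for $\left\lfloor \frac{n}{2} \right\rfloor \leq \rho < n$. There is no real obstacle here: the only subtle point is recognizing that the flexibility provided by the extremes $\{\mathbf{0}\}$ and $\mathrm{GF}(q)^n$ is both sufficient (for the construction) and, dually, unavoidable (each single codeword misses some $V$ at the maximum distance $n$). This result also dovetails with the geometric observation following Lemma~\ref{lemma:bounds_E} that balls around low- or high-dimensional subspaces are the largest, making the endpoints $\{\mathbf{0}\}$ and $\mathrm{GF}(q)^n$ the most economical covering choices.
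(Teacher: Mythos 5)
Your proof is correct and follows essentially the same route as the paper: the two-element code $\{\{\mathbf{0}\}, \mathrm{GF}(q)^n\}$ for the upper bound, and the existence of a complementary subspace at subspace distance $n$ from any single candidate codeword for the lower bound. The only cosmetic difference is that you split the complement argument into cases by $\dim(C)$, whereas the paper handles all dimensions uniformly; both are fine.
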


\begin{proof}
For all $V \in E(q,n)$ there exists ${\bar V}$ such that $V \cap
\bar{V} = \{{\bf 0}\}$ and $V + {\bar V} = \mathrm{GF}(q)^n$, and
hence $\ds(V,{\bar V}) = n$. Therefore, one subspace cannot cover
the whole $E(q,n)$ with radius $\rho < n$, hence $\Ks(q,n,\rho) >
1$. Let $\mathcal{C} = \{ \{{\bf 0}\}, \mathrm{GF}(q)^n \}$, then
for all $D \in E(q,n)$, $\ds(D, \mathcal{C}) = \min\{ \dim(D),
n-\dim(D) \} \leq \left\lfloor \frac{n}{2} \right\rfloor$. Thus $\mathcal{C}$ has covering radius $\left\lfloor \frac{n}{2} \right\rfloor$
and $\Ks(q,n,\rho) \leq 2$ for all $\rho \geq \left\lfloor \frac{n}{2} \right\rfloor$.
\end{proof}

We thus consider $0 < \rho < \left\lfloor \frac{n}{2} \right\rfloor$ henceforth.
Proposition~\ref{prop:sphere_bound_subspace} below can be viewed as the sphere
covering bound for subspace codes with the subspace metric, as it considers how a subspace code covers each Grassmannian $E_r(q,n)$ for any $0 \leq r \leq n$.

\begin{proposition} \label{prop:sphere_bound_subspace}
\textit{(Sphere covering bound for the subspace metric).}
For all $q$, $n$, and $0 < \rho < \left\lfloor \frac{n}{2} \right\rfloor$, $\Ks(q,n,\rho) \geq \min
\sum_{i=0}^n A_i$, where the minimum is taken over all integer
sequences $\{A_i\}$ satisfying $0 \leq A_i \leq {n \brack i}$ for
all $0 \leq i \leq n$ and $\sum_{i=0}^n A_i \sum_{d=0}^\rho
\Ns(i,r,d) \geq {n \brack r}$ for $0 \leq r \leq n$.
\end{proposition}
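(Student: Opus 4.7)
The plan is a standard sphere-covering double-counting argument, but stratified by dimension to exploit the fact that balls of different-dimensional centers have different shapes under the subspace metric.

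First, I would let $\mathcal{C} \subseteq E(q,n)$ be any subspace code with subspace covering radius at most $\rho$, and for $0 \leq i \leq n$, define $A_i$ to be the number of codewords of $\mathcal{C}$ having dimension $i$. Immediately $|\mathcal{C}| = \sum_{i=0}^n A_i$, and since $A_i$ counts distinct elements of $E_i(q,n)$, the bound $0 \leq A_i \leq {n \brack i}$ holds for all $i$.

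Next, I would fix a dimension $r$ with $0 \leq r \leq n$ and double-count pairs $(C, V)$ where $C \in \mathcal{C}$, $V \in E_r(q,n)$, and $\ds(C, V) \leq \rho$. On the one hand, because $\mathcal{C}$ has covering radius $\rho$, every $V \in E_r(q,n)$ is covered by at least one codeword, so the number of such pairs is at least ${n \brack r}$. On the other hand, for a codeword $C$ of dimension $i$, Lemma~\ref{lemma:Ns} gives that the number of $V \in E_r(q,n)$ at subspace distance exactly $d$ from $C$ is $\Ns(i,r,d)$, so summing over $d \leq \rho$ and then over codewords grouped by dimension yields the total count $\sum_{i=0}^n A_i \sum_{d=0}^\rho \Ns(i,r,d)$. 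Comparing the two expressions gives the constraint
\[
\sum_{i=0}^n A_i \sum_{d=0}^\rho \Ns(i,r,d) \geq {n \brack r}
\]
for every $r \in \{0,1,\ldots,n\}$.

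Finally, since $\{A_i\}$ satisfies all of the constraints listed in the proposition, we have $|\mathcal{C}| = \sum_{i=0}^n A_i \geq \min \sum_{i=0}^n A_i$ where the minimum is taken over all admissible integer sequences. Taking the minimum over all codes $\mathcal{C}$ with covering radius $\rho$ gives the claimed lower bound on $\Ks(q,n,\rho)$. There is no real obstacle here beyond being careful about the direction of the inequality and about the fact that the constraint must hold for every $r$ separately rather than only in the aggregate; this is precisely the refinement of a one-line sphere-covering bound that makes the result nontrivial, since the balls $\Vs(i,\rho)$ depend strongly on $i$ as already noted after Proposition~\ref{prop:bound_Vs}.
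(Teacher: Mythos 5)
Your proof is correct and follows essentially the same argument as the paper's: define $A_i$ as the number of codewords of dimension $i$, observe $0 \leq A_i \leq {n \brack i}$, and use the covering condition together with Lemma~\ref{lemma:Ns} to derive the constraint $\sum_{i=0}^n A_i \sum_{d=0}^\rho \Ns(i,r,d) \geq {n \brack r}$ for each $r$. The only difference is that you make the double-counting of pairs $(C,V)$ explicit, whereas the paper states the counting directly; the substance is identical.
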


\begin{proof}
Let $\mathcal{C}$ be a subspace code with covering radius $\rho$ and
let $A_i$ denote the number of subspaces with dimension $i$ in
$\mathcal{C}$. Then $0 \leq A_i \leq {n \brack i}$ for all $0 \leq i
\leq n$. All subspaces with dimension $r$ are covered; however, a
codeword with dimension $i$ covers exactly $\sum_{d=0}^\rho
\Ns(i,r,d)$ subspaces with dimension $r$, hence $\sum_{i=0}^n A_i
\sum_{d=0}^\rho \Ns(i,r,d) \geq {n \brack r}$ for $0 \leq r \leq n$.
\end{proof}

We remark that the lower bound in Proposition
\ref{prop:sphere_bound_subspace} is based on the optimal solution to an
integer linear program and hence determining this lower bound is
computationally infeasible for large parameter values.

We now derive upper bounds on $\Ks(q,n,\rho)$. Since $|E_{\left\lfloor \frac{n}{2} \right\rfloor}(q,n)|$ is equal to $E(q,n)$ up to a scalar, the main issue with designing covering subspace codes is to cover $E_{\left\lfloor \frac{n}{2} \right\rfloor}(q,n)$. In Proposition \ref{prop:greedy_covering_subspace}, we use subspaces in $E_r(q,n)$ in order to cover the Grassmannian $E_{r+\rho}(q,n)$ for $r \leq \left\lfloor \frac{n}{2} \right\rfloor$, i.e., $E_{\left\lfloor \frac{n}{2} \right\rfloor}(q,n)$ is covered using subspaces in $E_{\left\lfloor \frac{n}{2} \right\rfloor - \rho}(q,n)$. This choice is in fact asymptotically optimal, as we shall show in Proposition \ref{prop:ks}.

The upper bound in Proposition \ref{prop:greedy_covering_subspace} below is based on the universal greedy algorithm in \cite[Theorem 12.2.1]{cohen_book97} to construct covering codes, which we briefly review below for subspaces. The algorithm begins by selecting as the first codeword one of the subspaces which cover the most subspaces, and then keeps adding subspaces to the code. Each new codeword is selected as to cover the most subspaces not yet covered by the code (if several subspaces cover the same number of subspaces, then the new codeword is chosen randomly). The algorithm eventually stops once all subspaces are covered. Although the cardinality of the code obtained by this algorithm is not constant, an upper bounded on its value is given in \cite[Theorem 12.2.1]{cohen_book97}. The bound in Proposition \ref{prop:greedy_covering_subspace} adapts this algorithm to cover each Grassmannian $E_{r+\rho}(q,n)$ for $r \leq \left\lfloor \frac{n}{2} \right\rfloor$ by subspaces in $E_r(q,n)$. We remark that the bound in Proposition \ref{prop:greedy_covering_subspace} is only semi-constructive, as it determines an algorithm to construct covering subspace codes but does not design the actual codes. We remark that the bound in Proposition~\ref{prop:greedy_covering_subspace} can be further tightened by using the bounds on the greedy algorithm derived in \cite{clark_ejc97, gadouleau_cl08}.

\begin{proposition}\label{prop:greedy_covering_subspace}
For all $q$, $n$, $0 < \rho < \left\lfloor \frac{n}{2} \right\rfloor$, $\Ks(q,n,\rho) \leq 2 + 2
\sum_{r=\rho+1}^{\left\lfloor \frac{n}{2} \right\rfloor} \left\lfloor k_r \right \rfloor$, where
\begin{equation} \nonumber
    k_r = \frac{{n \brack r}}{{n-r+\rho \brack \rho}} +
    \frac{{n \brack r-\rho}}{{r \brack \rho}} \ln {n-r+\rho \brack
    \rho}.
\end{equation}
\end{proposition}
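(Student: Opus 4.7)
The plan is to construct a covering subspace code for $E(q,n)$ by covering each Grassmannian $E_s(q,n)$ separately and taking the union of the partial codes. For the extreme dimensions I use two trivial codewords: since $\ds(\{{\bf 0}\},U) = \dim U$, the subspace $\{{\bf 0}\}$ covers every $U$ with $\dim U \leq \rho$, and dually $\mathrm{GF}(q)^n$ covers every $U$ with $\dim U \geq n-\rho$. These account for the leading $2$ in the bound and leave only the middle Grassmannians $E_s(q,n)$ with $\rho < s < n-\rho$ to be covered by additional codewords.

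For a fixed $r$ with $\rho+1 \leq r \leq \lfloor n/2 \rfloor$ I cover $E_r(q,n)$ using codewords drawn from $E_{r-\rho}(q,n)$. The decisive geometric fact is that because $\ds(U,V) \geq |\dim U - \dim V| = \rho$, a codeword $V \in E_{r-\rho}(q,n)$ covers $U \in E_r(q,n)$ if and only if $V \subseteq U$; by Lemma \ref{lemma:Ns} each such $V$ then covers exactly $M = {n-r+\rho \brack \rho}$ targets in $E_r(q,n)$, and each target $U$ contains exactly $T = {r \brack \rho}$ such codewords. This is a uniform regular set-cover instance with $N = {n \brack r}$ elements and $N_V = {n \brack r-\rho}$ candidate sets, satisfying the double-counting identity $N_V/T = N/M$. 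Applying the universal greedy algorithm of \cite[Theorem 12.2.1]{cohen_book97} to this instance yields a subcovering of $E_r(q,n)$ of cardinality at most $\lfloor k_r \rfloor$, where $k_r$ is the Stein--Lov\'asz value $\frac{N}{M} + \frac{N_V}{T} \ln M$ appearing in the statement. For the complementary range $\lfloor n/2 \rfloor < r < n-\rho$ I invoke the complement map $U \mapsto U^{\perp}$, which is a $\ds$-isometry between $E_r(q,n)$ and $E_{n-r}(q,n)$ (since $(U+V)^\perp = U^\perp \cap V^\perp$); transporting through this isometry, the code constructed above for $E_{n-r}(q,n)$ yields a covering of $E_r(q,n)$ by subspaces of dimension $r+\rho$ of the same size $\lfloor k_{n-r} \rfloor$. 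Summing over the two symmetric halves of the dimension range produces the claimed $2 + 2\sum_{r=\rho+1}^{\lfloor n/2 \rfloor} \lfloor k_r \rfloor$; when $n$ is even this double-counts the middle Grassmannian $E_{n/2}(q,n)$, but that only weakens the resulting inequality.

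The main obstacle is a clean application of the greedy bound in \cite[Theorem 12.2.1]{cohen_book97}, which is stated there for Hamming-metric covering codes. I would need to recast it in the abstract language of covering a finite universe by a uniform regular set system, verify that the parameters $N, M, T, N_V$ computed above satisfy its hypotheses, and confirm that the resulting cardinality simplifies exactly to $k_r$. Once this translation is in place, the remaining dimension-by-dimension bookkeeping and the invocation of the complement isometry are routine.
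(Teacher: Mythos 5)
Your proposal is correct and follows essentially the same route as the paper: cover dimensions $\leq\rho$ and $\geq n-\rho$ with $\{{\bf 0}\}$ and $\mathrm{GF}(q)^n$, cover each $E_r(q,n)$ for $\rho+1\leq r\leq\left\lfloor \frac{n}{2}\right\rfloor$ by subspaces of dimension $r-\rho$ via the Stein--Lov\'asz greedy bound of \cite[Theorem 12.2.1]{cohen_book97} applied to the containment incidence matrix with row sums ${r \brack \rho}$ and column sums ${n-r+\rho \brack \rho}$, and handle the upper half by duality. Your worry about recasting the greedy theorem is unfounded, as it is already stated in the required matrix/set-cover form, and the parameters you computed are exactly those the paper uses.
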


\begin{proof}
We show that there exists a code with cardinality $2 + 2
\sum_{r=\rho+1}^{\left\lfloor \frac{n}{2} \right\rfloor} \left\lfloor k_r \right \rfloor$ and covering
radius $\rho$. We choose $\{{\bf 0}\}$ to be in the code, hence all
subspaces with dimension $0 \leq r \leq \rho$ are covered. For
$\rho+1 \leq r \leq \left\lfloor \frac{n}{2} \right\rfloor$, let ${\bf A}$ be the ${n \brack r} \times
{n \brack r-\rho}$ binary matrix whose rows represent the subspaces
$U_i \in E_r(q,n)$ and whose columns represent the subspaces $V_j
\in E_{r-\rho}(q,n)$, and where $a_{i,j} = 1$ if and only if
$\ds(U_i,V_j) = \rho$. Then there are exactly $\Ns(r,r-\rho,\rho) =
{r \brack \rho}$ ones on each row and $\Ns(r-\rho,r,\rho) =
{n-r+\rho \brack \rho}$ ones on each column. By \cite[Theorem
12.2.1]{cohen_book97}, there exists an ${n \brack r} \times
\left\lfloor k_r \right \rfloor$ submatrix of ${\bf A}$ with no
all-zero rows. Thus, all subspaces of dimension $r$ can be covered
using $\left\lfloor k_r \right \rfloor$ codewords. Summing for all
$r$, all subspaces with dimension $0 \leq r \leq \left\lfloor \frac{n}{2} \right\rfloor$ can be covered
with $1 + \sum_{r=\rho+1}^{\left\lfloor \frac{n}{2} \right\rfloor} \left\lfloor k_r \right \rfloor$
subspaces. Similarly, it can be shown that all subspaces with
dimension $\left\lfloor \frac{n}{2} \right\rfloor+1 \leq r \leq n$ can be covered with $1 +
\sum_{r=\rho+1}^{\left\lfloor \frac{n}{2} \right\rfloor} \left\lfloor k_r \right \rfloor$ subspaces.
\end{proof}

In Proposition \ref{prop:subspace_covering_code} below, we design an explicit construction of a subspace covering code by combining entire Grassmannians.

\begin{proposition}\label{prop:subspace_covering_code}
For all $q$, $n$, and $0 < \rho < \left\lfloor \frac{n}{2} \right\rfloor$, let $J_1 = \{0\} \cup
\{\left\lfloor \frac{n}{2} \right\rfloor-\rho-\lfloor\frac{\left\lfloor \frac{n}{2} \right\rfloor-\rho}{2\rho+1}\rfloor(2\rho+1), \ldots,
\left\lfloor \frac{n}{2} \right\rfloor-3\rho-1, \left\lfloor \frac{n}{2} \right\rfloor-\rho \}$ and $J_2 = \{i : n-i \in J_1\}$.
Then the code $\bigcup_{r \in J_1 \cup J_2} E_r(q,n)$ has subspace covering radius $\rho$, and
hence $\Ks(q,n,\rho) \leq \sum_{r \in J_1 \cup J_2} {n \brack r}$.
\end{proposition}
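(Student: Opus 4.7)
The plan is to exploit the simple observation that whenever $V \subseteq U$ or $U \subseteq V$, the subspace distance degenerates to $\ds(U,V) = |\dim U - \dim V|$. Hence the whole statement reduces to the combinatorial claim: for every $r \in \{0,1,\ldots,n\}$ there exists $s \in J_1 \cup J_2$ with $|r-s| \leq \rho$. Granted this, for any $U \in E_r(q,n)$ one picks such an $s$; when $s \leq r$ take any $V \subseteq U$ with $\dim V = s$, and when $s > r$ take any $V \supseteq U$ with $\dim V = s$. Either choice places $V$ in $E_s(q,n) \subseteq \bigcup_{r' \in J_1 \cup J_2} E_{r'}(q,n)$ and gives $\ds(U,V) = |r-s| \leq \rho$. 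The cardinality bound $\Ks(q,n,\rho) \leq \sum_{r \in J_1 \cup J_2} {n \brack r}$ is then just the size of this union, since distinct Grassmannians are disjoint.

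To establish the combinatorial claim I would handle $r \leq \left\lfloor \frac{n}{2} \right\rfloor$ using $J_1$ and then invoke the symmetry $r \mapsto n-r$, $J_1 \mapsto J_2$ for $r \geq \left\lceil \frac{n}{2} \right\rceil$. Write the non-zero elements of $J_1$ as $s_k := \left\lfloor \frac{n}{2} \right\rfloor - \rho - k(2\rho+1)$ for $0 \leq k \leq K := \left\lfloor \frac{\left\lfloor n/2 \right\rfloor - \rho}{2\rho+1} \right\rfloor$, so that $s_0 = \left\lfloor \frac{n}{2} \right\rfloor - \rho$ and $0 \leq s_K \leq 2\rho$ by the definition of $K$. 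Each $s_k$ covers all integer dimensions in $[s_k - \rho, s_k + \rho]$. Because $s_k - s_{k+1} = 2\rho + 1$, we have $s_{k+1} + \rho = s_k - \rho - 1$, so consecutive intervals abut without a gap, and their union is $[s_K - \rho, s_0 + \rho] = [s_K - \rho, \left\lfloor \frac{n}{2} \right\rfloor]$. Since $s_K \leq 2\rho$ forces $s_K - \rho \leq \rho$, and since $0 \in J_1$ covers $[0, \rho]$, the set $J_1$ covers all of $[0, \left\lfloor \frac{n}{2} \right\rfloor]$.

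The main obstacle is really just this arithmetic bookkeeping; once the intervals are seen to tile $[0, \left\lfloor \frac{n}{2} \right\rfloor]$, everything else is immediate. One minor wrinkle worth recording: the explicit $0 \in J_1$ is essential precisely when $s_K > \rho$, so that the arithmetic progression $\{s_k\}$ stops short of the radius-$\rho$ ball around zero; in the case $s_K \leq \rho$ the element $0$ is redundant but harmless. The $J_2$ side is handled symmetrically by observing that $r \in \{\left\lceil \frac{n}{2} \right\rceil,\ldots,n\}$ iff $n-r \in \{0,\ldots,\left\lfloor \frac{n}{2} \right\rfloor\}$ and $n-s \in J_1$ iff $s \in J_2$, which by the previous paragraph produces an $s \in J_2$ with $|r - s| \leq \rho$.
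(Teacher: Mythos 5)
Your proof is correct and follows essentially the same route as the paper's: each codeword dimension $s$ covers all dimensions in $[s-\rho,\,s+\rho]$ via a sub- or super-space of the target (where $\ds$ reduces to the dimension difference), and the elements of $J_1$, spaced $2\rho+1$ apart together with $0$, tile $[0,\lfloor n/2\rfloor]$, with $J_2$ handled by the symmetry $r\mapsto n-r$. Your version merely isolates the integer-interval tiling as an explicit combinatorial claim, which the paper carries out inline.
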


\begin{proof}
We prove that $\bigcup_{r \in J_1} E_r(q,n)$ covers all subspaces
with dimension $\leq \left\lfloor \frac{n}{2} \right\rfloor$. First, all subspaces $D_0 \in E(q,n)$
with dimension $0 \leq \dim(D_0) < \left\lfloor \frac{n}{2} \right\rfloor - 2\rho -
\lfloor\frac{\left\lfloor \frac{n}{2} \right\rfloor-\rho}{2\rho+1}\rfloor(2\rho+1) \leq \rho$ are
covered by the subspace with dimension $0$. Second, for all $D_1 \in
E(q,n)$ with dimension $\left\lfloor \frac{n}{2} \right\rfloor-2\rho - i(2\rho+1) \leq \dim(D_1) \leq
\left\lfloor \frac{n}{2} \right\rfloor-\rho - i(2\rho+1)$, there exists $C_1$ with dimension $\left\lfloor \frac{n}{2} \right\rfloor-\rho
- i(2\rho+1)$ such that $D_1 \subseteq C_1$. Thus $\ds(C_1,D_1) =
\dim(C_1) - \dim(D_1) \leq \rho$. Similarly, for all $D_2 \in
E(q,n)$ with dimension $\left\lfloor \frac{n}{2} \right\rfloor-\rho - i(2\rho+1) < \dim(D_2) < \left\lfloor \frac{n}{2} \right\rfloor -
2\rho - (i-1)(2\rho+1)$, there exists $C_2$ with dimension $\left\lfloor \frac{n}{2} \right\rfloor-\rho
- i(2\rho+1)$ such that $C_2 \subset D_2$. Thus $\ds(C_2,D_2) =
\dim(D_2) - \dim(C_2) \leq \rho$. Therefore, $\bigcup_{r \in J_1}
E_r(q,n)$ covers all subspaces with dimension $\leq \left\lfloor \frac{n}{2} \right\rfloor$. Similarly,
all the subspaces with dimension $\geq n-\left\lfloor \frac{n}{2} \right\rfloor$ are covered by
$\bigcup_{r \in J_2} E_r(q,n)$.
\end{proof}

Using the bounds derived above, we now determine the asymptotic
behavior of $\Ks(q,n,\rho)$. We define $\ks(\rho') = \lim \inf_{n
\rightarrow \infty} \frac{\log_q \Ks(q,n,\lfloor n\rho' \rfloor)}{\log_q |E(q,n)|}$,
where $\rho' = \frac{\rho}{n}$. We note that this definition of asymptotic rate is from a combinatorial perspective again.

\begin{proposition}\label{prop:ks}
\textit{(Asymptotic rate of covering subspace codes in the subspace metric).}
For $0 \leq \rho' \leq \frac{1}{2}$, $\ks(\rho') = 1 - 2 \rho'$. For
$\frac{1}{2} \leq \rho' \leq 1$, $\ks(\rho') = 0$.
\end{proposition}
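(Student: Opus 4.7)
The plan is to split the claim into two regimes. For $\frac{1}{2} \leq \rho' \leq 1$ the statement is essentially immediate: for $n$ large, $\lfloor n\rho'\rfloor \geq \lfloor n/2\rfloor$, so Proposition~\ref{prop:Ks_2rho>n} gives $\Ks(q,n,\lfloor n\rho'\rfloor) \leq 2$, and dividing $\log_q 2$ by $\log_q |E(q,n)| \geq \lfloor n/2\rfloor(n-\lfloor n/2\rfloor)$ (Lemma~\ref{lemma:bounds_E}) sends the rate to $0$. The substantive work is for $0 \leq \rho' \leq \frac{1}{2}$, where I would match a constructive upper bound with a sphere-covering lower bound, both of leading order $q^{n^2/4-\rho n/2}$ with $\rho = \lfloor n\rho'\rfloor$.

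For the upper bound I would invoke Proposition~\ref{prop:greedy_covering_subspace}. Bounding each of the two summands of $k_r$ by means of (\ref{eq:Gaussian}) yields $k_r \leq \mathrm{poly}(n)\cdot q^{(r-\rho)(n-r)}$. The exponent $(r-\rho)(n-r)$ is a downward parabola in $r$ with vertex at $(n+\rho)/2 \geq \lfloor n/2\rfloor$, hence increasing on the summation range $[\rho+1,\lfloor n/2\rfloor]$ and attaining its maximum $\lfloor n/2\rfloor(n-\lfloor n/2\rfloor) - \rho n/2 + O(n)$ at $r=\lfloor n/2\rfloor$. Summing over $r$ only contributes a polynomial factor, which is absorbed in the normalization by $\log_q|E(q,n)|$, giving $\ks(\rho') \leq 1-2\rho'$.

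For the matching lower bound I would apply Proposition~\ref{prop:sphere_bound_subspace} using only the single constraint at $r=\lfloor n/2\rfloor$, which forces $|\mathcal{C}| \geq {n\brack \lfloor n/2\rfloor}/\max_i \sum_{d=0}^{\rho} \Ns(i,\lfloor n/2\rfloor,d)$ for any covering code $\mathcal{C}$. The core computation is to show this maximum is at most $q^{\rho n/2 + o(n^2)}$. Setting $u = (i + d - \lfloor n/2\rfloor)/2$, so $i = \lfloor n/2\rfloor + 2u - d$, and applying Lemma~\ref{lemma:Ns} together with (\ref{eq:Gaussian}), the logarithm of $\Ns(i,\lfloor n/2\rfloor,d)$ equals, up to $O(1)$, the quadratic $u^2 + u(2\lfloor n/2\rfloor - d - n) + d(n-\lfloor n/2\rfloor)$ in $u$. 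Since its leading coefficient is positive, the maximum on $u \in [0,d]$ is at the boundary $u=0$ (the nested configuration $U \subseteq V$), with value $d(n-\lfloor n/2\rfloor) \leq dn/2 + O(n)$. Because this bound is monotone increasing in $d$, the maximum over $d \leq \rho$ is achieved at $d=\rho$, yielding $\max_i \sum_d \Ns \leq q^{\rho n/2 + o(n^2)}$ and hence $\ks(\rho') \geq 1-2\rho'$.

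The main obstacle is the convexity computation in the lower bound: one must recognise that $\log_q \Ns(i,\lfloor n/2\rfloor,d)$ is an upward (convex) quadratic in $u$, so the sphere-covering argument is tightest at the nested boundary configuration rather than at an interior critical point. A naive bound using the full ball volume $\Vs$ (which, by Proposition~\ref{prop:bound_Vs}, is maximized at very small or very large codeword dimensions) would only give the weaker asymptotic rate $(1-2\rho')^2$. Once the correct boundary maximum is identified, both bounds align at $q^{n^2/4-\rho n/2}$ and the polynomial overhead is swallowed by the $n^2/4$ denominator, giving $\ks(\rho') = 1-2\rho'$.
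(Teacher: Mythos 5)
Your proposal is correct, and its upper-bound half is exactly the paper's: both apply the greedy bound of Proposition~\ref{prop:greedy_covering_subspace}, reduce $k_r$ to $q^{(r-\rho)(n-r)}$ times polynomial factors via (\ref{eq:Gaussian}), and observe that this exponent peaks at $r=\left\lfloor n/2\right\rfloor$, giving $\Ks(q,n,\rho)\leq \mathrm{poly}(n)\, q^{(n-\left\lfloor n/2\right\rfloor)(\left\lfloor n/2\right\rfloor-\rho)}$ and hence $\ks(\rho')\leq 1-2\rho'$. Your lower bound, however, takes a genuinely different route. The paper argues by packing: it takes a KK code in $E_{\left\lfloor n/2\right\rfloor}(q,n)$ of minimum subspace distance $2\rho+1$, i.e.\ minimum injection distance $\rho+1$ and cardinality $q^{(n-\left\lfloor n/2\right\rfloor)(\left\lfloor n/2\right\rfloor-\rho)}$ (the exponent ``$\left\lfloor n/2\right\rfloor-2\rho$'' printed in the paper's proof is a slip, corrected in the discussion that follows it), and notes that each codeword of a covering code can cover at most one codeword of this packing. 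You instead specialize the sphere-covering bound of Proposition~\ref{prop:sphere_bound_subspace} to the single constraint at $r=\left\lfloor n/2\right\rfloor$ and bound $\max_i\sum_{d\leq\rho}\Ns(i,\left\lfloor n/2\right\rfloor,d)$ directly. Your convexity computation is right --- the exponent is $u^2+u(2\left\lfloor n/2\right\rfloor-n-d)+d(n-\left\lfloor n/2\right\rfloor)$, an upward parabola in $u$ maximized at an endpoint of $[0,d]$; you should also compare against $u=d$, which gives $d\left\lfloor n/2\right\rfloor\leq d\left\lceil n/2\right\rceil$, so the maximum is indeed at $u=0$ and the conclusion stands --- and both routes land on the same order $q^{(n-\left\lfloor n/2\right\rfloor)(\left\lfloor n/2\right\rfloor-\rho)}$. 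The paper's packing argument is shorter and loses no polynomial factor, but it leans on the existence of KK codes; your argument is self-contained and isolates the sharper geometric fact that a ball of subspace radius $\rho$ meets $E_{\left\lfloor n/2\right\rfloor}(q,n)$ in at most $q^{\rho\left\lceil n/2\right\rceil+O(n)}$ points regardless of the center's dimension, which also explains cleanly, as you note, why the naive full-volume bound from Proposition~\ref{prop:bound_Vs} would only yield $(1-2\rho')^2$.
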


\begin{proof}
By Proposition~\ref{prop:Ks_2rho>n}, $\ks(\rho') = 0$ for $\frac{1}{2} \leq \rho' \leq 1$. Let $\mathcal{C}$ be a KK code in $E_{\left\lfloor \frac{n}{2} \right\rfloor}(q,n)$ with minimum subspace distance $2\rho+1$ and cardinality $q^{(n-\left\lfloor \frac{n}{2} \right\rfloor)(\left\lfloor \frac{n}{2} \right\rfloor - 2\rho)}$. Then any code $\mathcal{D} \subseteq E(q,n)$ with subspace covering radius $\rho$ and cardinality $\Ks(q,n,\rho)$ covers all codewords in $\mathcal{C}$; however, any codeword in $\mathcal{D}$ only covers at most one codeword in $\mathcal{C}$. Hence $\Ks(q,n,\rho) \geq q^{(n-\left\lfloor \frac{n}{2} \right\rfloor)(\left\lfloor \frac{n}{2} \right\rfloor - 2\rho)}$, which asymptotically becomes $\ks(\rho') \geq 1 - 2 \rho'$.


Also, by Proposition~\ref{prop:greedy_covering_subspace}, it can be
easily shown that $\Ks(q,n,\rho) \leq 2+(n+1)[1 - \ln K_q +
\rho(n-\rho-1)\ln q] K_q^{-1} q^{(n-\left\lfloor \frac{n}{2} \right\rfloor)(\left\lfloor \frac{n}{2} \right\rfloor-\rho)}$, which
asymptotically becomes $\ks(\rho') \leq 1 - 2 \rho'$.
\end{proof}

The proof of Proposition \ref{prop:ks} indicates that the minimum
cardinality $\Ks(q,n,\rho)$ of a covering subspace code is on the
order of $q^{(n-\left\lfloor \frac{n}{2} \right\rfloor)(\left\lfloor \frac{n}{2} \right\rfloor-\rho)}$. However, a covering subspace code
is easily obtained by taking the union of optimal covering CDCs (in
their respective Grassmannians) for all dimensions, leading to a
code with cardinality $2 + \sum_{r=\rho+1}^{n-\rho-1}
\Kc(q,n,r,\left\lfloor \frac{\rho}{2} \right\rfloor)$. By
\cite[Proposition 11]{gadouleau_it09_cdc}, $\Kc(q,n,r,\left\lfloor
\frac{\rho}{2} \right\rfloor)$ is on the order of $q^{r(n-r) -
\left\lfloor \frac{\rho}{2} \right\rfloor(n-\left\lfloor
\frac{\rho}{2} \right\rfloor)}$. Hence the code has a cardinality on
the order of $q^{\left\lfloor \frac{n}{2} \right\rfloor(n-\left\lfloor \frac{n}{2} \right\rfloor) - \frac{\rho}{2}(n-\frac{\rho}{2})}$,
which is greater than $q^{(n-\left\lfloor \frac{n}{2} \right\rfloor)(\left\lfloor \frac{n}{2} \right\rfloor-\rho)}$. Thus, a union of
optimal covering  CDCs (in their respective Grassmannians) does not
result in asymptotically optimal covering subspace codes with the
subspace metric.

\section{Packing and covering properties of subspace codes with the
injection metric} \label{sec:injection}

\subsection{Properties of balls with injection radii}
\label{sec:balls_injection}

We first investigate the properties of balls with injection radii in
$E(q,n)$, which will be instrumental in our study of packing and
covering properties of subspace codes with the injection distance.
We denote the number of subspaces with dimension $s$ at injection distance $d$
from a subspace with dimension $r$ as $\Ni(r,s,d)$.

\begin{lemma} \label{lemma:Nm}
$\Ni(r,s,d) =\Ns(r,s,2d-|r-s|)$. Hence, $\Ni(r,s,d) = q^{d(d+s-r)} {r \brack d}
{n-r \brack d+s-r}$ for $r \geq s$ and $\Ni(r,s,d) = q^{d(d+r-s)} {r
\brack d+r-s} {n-r \brack d}$ for $r \leq s$.
\end{lemma}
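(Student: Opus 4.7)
The proof is essentially a direct substitution using the relationship between the two metrics together with Lemma \ref{lemma:Ns}. The plan is to first prove the identity $\Ni(r,s,d) = \Ns(r,s,2d-|r-s|)$ from the definition of the injection metric, and then to substitute into the explicit formula for $\Ns$ given by Lemma \ref{lemma:Ns} to obtain the two closed-form expressions.

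For the first identity, I would start from the definition $\di(U,V) = \frac{1}{2} \ds(U,V) + \frac{1}{2}|\dim(U)-\dim(V)|$. For any fixed pair of dimensions $\dim(U)=r$ and $\dim(V)=s$, the term $|\dim(U)-\dim(V)| = |r-s|$ is a constant, so the conditions $\di(U,V)=d$ and $\ds(U,V) = 2d - |r-s|$ are equivalent. Therefore the set of $V \in E_s(q,n)$ at injection distance $d$ from a fixed $U \in E_r(q,n)$ is exactly the set of such $V$ at subspace distance $2d-|r-s|$ from $U$, giving $\Ni(r,s,d) = \Ns(r,s,2d-|r-s|)$.

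For the explicit formula, I would substitute $d' = 2d - |r-s|$ into Lemma \ref{lemma:Ns}, computing the auxiliary parameter $u = \frac{r+d'-s}{2}$ in the two cases. When $r \geq s$, one has $d' = 2d - r + s$, and the arithmetic collapses nicely to $u = d$ and $d' - u = d + s - r$, producing $q^{d(d+s-r)} {r \brack d} {n-r \brack d+s-r}$. When $r \leq s$, one has $d' = 2d + r - s$, and a parallel simplification gives $u = d + r - s$ and $d' - u = d$, producing $q^{d(d+r-s)} {r \brack d+r-s} {n-r \brack d}$. Both values of $u$ are integers, as required by the nonvanishing case of Lemma \ref{lemma:Ns}.

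There is no real obstacle here; the only care required is the bookkeeping of signs in the two cases of $|r-s|$ and checking that the integrality condition in Lemma \ref{lemma:Ns} is automatically satisfied. I would also briefly note for consistency that the formulas vanish when $d < |r-s|$ (via the binomial factors), matching the fact that $\di(U,V) \geq |\dim(U) - \dim(V)|$.
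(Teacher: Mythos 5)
Your proposal is correct and follows essentially the same route as the paper: the paper's proof likewise observes that for fixed dimensions $r$ and $s$ the condition $\di(U,V)=d$ is equivalent to $\ds(U,V)=2d-|r-s|$ and then reads off the formula from Lemma~\ref{lemma:Ns}. Your explicit computation of $u$ in the two cases (and the integrality check) just fills in the substitution the paper leaves as ``easily obtained.''
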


\begin{proof}
If $U \in E_r(q,n)$ and $V \in E_s(q,n)$, then $\di(U,V) = d$ if and
only if $\ds(U,V) = 2d - |r-s|$. Therefore, $\Ni(r,s,d) =
\Ns(r,s,2d-|r-s|)$, and the formula for $\Ni(r,s,d)$ is easily
obtained from Lemma~\ref{lemma:Ns}.
\end{proof}

Lemma \ref{lemma:Nm} indicates that the injection metric satisfies a strengthened triangular inequality: for any $U \in E_r(q,n)$ and $V \in E_s(q,n)$, we have $\di(U,V) \leq \max\{r,s\}$. We denote the volume of a ball with injection radius $t$ around a
subspace with dimension $r$ as $\Vi(r,t) \df \sum_{d=0}^t
\sum_{s=0}^n \Ni(r,s,d)$. Although the volume $\Vi(r,t)$ of a ball depends on its radius $t$ and on the dimension $r$ of its center, we derive below bounds on $\Vi(r,t)$ which only depend on its radius.

\begin{proposition}\label{prop:bound_Vm}
For all $q$, $n$, $r$, and $t \leq \left\lfloor \frac{n}{2} \right\rfloor$, $q^{t(n-t)} \leq \Vi(r,t) <
\theta(q)(2\theta(q)-1)K_q^{-2} q^{t(n-t)}$.
\end{proposition}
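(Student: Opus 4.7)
The plan is to rewrite $\Vi(r,t)$ in a symmetric form that makes both bounds transparent. I would begin by counting, for each pair $(s,u)$, the number of $V \in E_s(q,n)$ with $\dim(U \cap V)=u$, namely ${r \brack u} q^{(r-u)(s-u)} {n-r \brack s-u}$. Setting $x \df r - u$ and $y \df s - u$, the injection distance becomes $\di(U,V) = \max\{r,s\} - u = \max(x,y)$, and summing over all $V$ with $\di(U,V) \leq t$ gives
\begin{equation*}
    \Vi(r,t) = \sum_{x=0}^{\min(r,t)} \sum_{y=0}^{\min(n-r,t)} q^{xy} {r \brack x} {n-r \brack y}.
\end{equation*}
A direct check, splitting according to whether $x \geq y$ or $x \leq y$, recovers the two formulas of Lemma~\ref{lemma:Nm}.

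For the lower bound I retain only the single corner term $(x,y) = (x_*,y_*) \df (\min(r,t),\min(n-r,t))$ and apply ${a \brack b} \geq q^{b(a-b)}$ from (\ref{eq:Gaussian}). A short case check over the three possibilities, namely (i) $t \leq \min(r,n-r)$, (ii) $r < t \leq n-r$, and (iii) $n-r < t \leq r$, shows that the resulting exponent $x_* y_* + x_*(r-x_*) + y_*(n-r-y_*)$ is exactly $t(n-t)$ in every case; the hypothesis $t \leq \left\lfloor \frac{n}{2} \right\rfloor$ guarantees $t \leq \max(r,n-r)$, so these three cases are exhaustive.

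For the upper bound I apply the strict Gaussian upper bound ${a \brack b} < K_q^{-1} q^{b(a-b)}$ and change variables via $i \df x_* - x$, $j \df y_* - y$. A brief expansion shows that in all three cases the exponent becomes $t(n-t) + \alpha i + \beta j + ij - i^2 - j^2$ with $\alpha,\beta \leq 0$; for example, in case (i) one obtains $\alpha = -(r-t)$ and $\beta = -(n-r-t)$. Dropping the nonpositive terms $\alpha i + \beta j$ and extending the finite sums to $i,j \in \mathbb{Z}_{\geq 0}$, it suffices to show
\begin{equation*}
    S \df \sum_{i,j \geq 0} q^{ij - i^2 - j^2} \leq \theta(q)(2\theta(q)-1).
\end{equation*}

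The main obstacle is estimating $S$ in closed form, and for this I split by the sign of $j - i$. The diagonal $i=j$ contributes $\sum_{i \geq 0} q^{-i^2} = \theta(q)$. For $i < j$, the substitution $k = j-i \geq 1$ rewrites $ij - i^2 - j^2$ as $-(i^2+ik+k^2)$, and using $q^{-ik} \leq 1$ yields
\begin{equation*}
    \sum_{i \geq 0,\, k \geq 1} q^{-i^2-ik-k^2} \leq \sum_{k \geq 1} q^{-k^2} \sum_{i \geq 0} q^{-i^2} = (\theta(q)-1)\theta(q).
\end{equation*}
Doubling by the $i \leftrightarrow j$ symmetry gives $S \leq \theta(q) + 2\theta(q)(\theta(q)-1) = \theta(q)(2\theta(q)-1)$. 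Since the Gaussian binomial upper bound in (\ref{eq:Gaussian}) is strict, the final inequality $\Vi(r,t) < \theta(q)(2\theta(q)-1) K_q^{-2} q^{t(n-t)}$ is strict as claimed.
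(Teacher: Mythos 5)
Your proof is correct, and although it follows the same high-level strategy as the paper --- isolate the dominant term of the double sum defining $\Vi(r,t)$, then absorb the rest into a constant via (\ref{eq:Gaussian}) --- the execution is genuinely different. The paper organizes the sum by distance $d$ and dimension $s$, restricts the index ranges as in (\ref{eq:range_Vi}), assumes $r \leq \left\lfloor \frac{n}{2} \right\rfloor$ without loss of generality, and bounds nested one-dimensional sums by $\theta(q)$ and $\theta(q)-1$ before summing over $d$. Your reparametrization by $(x,y) = (r-\dim(U\cap V), \dim(V)-\dim(U\cap V))$ instead turns the ball into the square $\max(x,y)\leq t$ and the volume into the symmetric sum $\sum q^{xy}{r \brack x}{n-r \brack y}$, after which the constant emerges in one step as a bound on the single universal sum $S = \sum_{i,j\geq 0} q^{ij-i^2-j^2}$; reassuringly, both routes land on exactly $\theta(q)(2\theta(q)-1)K_q^{-2}$. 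Your lower bound also buys something concrete: the paper's one-liner $\Vi(r,t) \geq \Ni(r,r,t)$ only works when $t \leq \min\{r,n-r\}$, since otherwise $\Ni(r,r,t) = q^{t^2}{r \brack t}{n-r \brack t} = 0$, whereas your corner term at $(x_*,y_*) = (\min(r,t),\min(n-r,t))$ is always nonzero with exponent exactly $t(n-t)$ in each of your three cases, so your argument covers the full range of $r$ claimed in the statement without any implicit restriction. I verified the exponent computation $E(x_*-i,y_*-j) = t(n-t) + \alpha i + \beta j + ij - i^2 - j^2$ with $\alpha,\beta \leq 0$ in all three cases (the inequality $2t \leq n$ is what makes $\beta \leq 0$ in case (ii) and $\alpha \leq 0$ in case (iii)), and the estimate $S \leq \theta(q)(2\theta(q)-1)$ is correct; nothing is missing.
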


The proof of Proposition~\ref{prop:bound_Vm} is given in Appendix~\ref{app:prop:bound_Vm}. We remark that the bounds in Proposition~\ref{prop:bound_Vm} are tight up to a scalar, which will greatly facilitate our asymptotic study of subspace codes with the injection metric. Unlike the bounds on the volume of a ball with subspace radius in Proposition~\ref{prop:bound_Vs}, the lower and upper bounds in Proposition~\ref{prop:bound_Vm} do not depend on $r$. This illustrates a clear geometric distinction between the subspace and injection metrics.

\subsection{Packing properties of subspace codes with the injection metric} \label{sec:packing_injection}
We are interested in packing subspace codes used with the injection
metric. The maximum cardinality of a code in $E(q,n)$ with minimum
injection distance $d$ is denoted as $\Ai(q,n,d)$. Since $\Ai(q,n,1)
= |E(q,n)|$, we assume $d \geq 2$ henceforth. When $d > \left\lfloor \frac{n}{2} \right\rfloor$, the
maximum cardinality of a code with minimum injection distance $d$ is
determined and a code with maximum cardinality is given. For all $J
\subseteq \{0,1,\ldots,n\}$, we denote the maximum cardinality of a
code with minimum injection distance $d$ and codewords having
dimensions in $J$ as $\Ai(q,n,d,J)$. For $2 \leq d \leq \left\lfloor \frac{n}{2} \right\rfloor$, we
denote $Q_d = \{d, d+1, \ldots, n-d\}$. Proposition \ref{prop:Am_2d>n} below relates $\Ai(q,n,d)$ to $\Ac(q,n,\left\lfloor \frac{n}{2} \right\rfloor,d)$ and shows that determining $\Ai(q,n,d,Q_d)$ is equivalent to determining $\Ai(q,n,d)$.

\begin{proposition}\label{prop:Am_2d>n}
For $d > \left\lfloor \frac{n}{2} \right\rfloor$, $\Ai(q,n,d) = 2$ and for $2 \leq d \leq \left\lfloor \frac{n}{2} \right\rfloor$,
$\Ai(q,n,d) = \Ai(q,n,d,Q_d) + 2$.
\end{proposition}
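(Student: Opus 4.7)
The plan is to rely on the strengthened triangular inequality $\di(U,V) \leq \max\{\dim(U),\dim(V)\}$ noted after Lemma~\ref{lemma:Nm}, together with its dual consequence $\di(U,V) = \dim(U+V) - \min\{\dim(U),\dim(V)\} \leq n - \min\{\dim(U),\dim(V)\}$ obtained from (\ref{eq:dm2}). Both parts of the proposition reduce to controlling how many codewords can have ``extreme'' dimensions.

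For the first part ($d > \left\lfloor \frac{n}{2} \right\rfloor$), I would argue by contradiction. Suppose $\mathcal{C}$ has minimum injection distance $d$ and $|\mathcal{C}| \geq 3$. For any codeword pair the inequality $\di \leq \max\{\dim(U),\dim(V)\}$ forces at least one codeword of dimension $\geq d$. By pigeonhole among three codewords, two of them, say $C_1,C_2$, both have dimension $\geq d$. Then $\di(C_1,C_2) \leq n - \min\{\dim(C_1),\dim(C_2)\} \leq n-d < d$ since $d > n/2$, a contradiction. Hence $\Ai(q,n,d) \leq 2$. Equality is witnessed by $\{\{{\bf 0}\},\mathrm{GF}(q)^n\}$, whose two codewords are at injection distance $n \geq d$.

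For the second part ($2 \leq d \leq \left\lfloor \frac{n}{2} \right\rfloor$), the upper bound follows the same pattern applied to the two ``ends'' of the dimension range. If two codewords both had dimension $< d$, then $\di \leq \max\{\dim(U),\dim(V)\} < d$, contradiction; so at most one codeword has dimension in $\{0,\ldots,d-1\}$. Symmetrically, if two codewords both had dimension $> n-d$, then $\di \leq n - \min\{\dim(U),\dim(V)\} < d$, contradiction; so at most one codeword has dimension in $\{n-d+1,\ldots,n\}$. Removing these at most two codewords leaves a code with dimensions in $Q_d$ and the same minimum injection distance, giving $\Ai(q,n,d) \leq \Ai(q,n,d,Q_d) + 2$.

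For the matching lower bound, start from any code $\mathcal{C}' \subseteq \bigcup_{r \in Q_d} E_r(q,n)$ with minimum injection distance $d$ and cardinality $\Ai(q,n,d,Q_d)$, and adjoin $\{{\bf 0}\}$ and $\mathrm{GF}(q)^n$. For every $C \in \mathcal{C}'$ with $\dim(C) \in [d,n-d]$ one checks $\di(\{{\bf 0}\},C) = \dim(C) \geq d$ and $\di(\mathrm{GF}(q)^n,C) = n - \dim(C) \geq d$, while $\di(\{{\bf 0}\},\mathrm{GF}(q)^n) = n \geq d$, so the enlarged code still has minimum injection distance $d$. Hence $\Ai(q,n,d) \geq \Ai(q,n,d,Q_d) + 2$, and the two bounds match. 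There is no real obstacle here; the only thing to be slightly careful about is to apply both one-sided bounds $\di \leq \max\{\dim,\dim\}$ and $\di \leq n - \min\{\dim,\dim\}$ and to verify the boundary case $\dim(C)=d$ or $\dim(C)=n-d$ is included in $Q_d$ (it is).
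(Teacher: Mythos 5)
Your proof is correct and follows essentially the same route as the paper's: both use $\di(U,V)\leq\max\{\dim U,\dim V\}$ and $\di(U,V)\leq n-\min\{\dim U,\dim V\}$ to show at most one codeword has dimension below $d$ and at most one above $n-d$, and both adjoin $\{\mathbf{0}\}$ and $\mathrm{GF}(q)^n$ for the matching lower bound. Your pigeonhole phrasing of the $d>\left\lfloor \frac{n}{2}\right\rfloor$ case and your explicit distance checks are just more detailed renderings of the same argument.
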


\begin{proof}
Let $\mathcal{C}$ be a code in $E(q,n)$ with minimum injection
distance $d$ and let $C,D \in \mathcal{C}$. We have $\max\{\dim(C),
\dim(D)\} = \di(C,D) + \dim(C \cap D) \geq d$, therefore there is at
most one codeword with dimension less than $d$. Also, $\min\{
\dim(C), \dim(D) \} = \dim(C+D) - \di(C,D) \leq n-d$, therefore
there is at most one codeword with dimension greater than $n-d$.
Thus $\Ai(q,n,d) \leq 2$ for $d > \left\lfloor \frac{n}{2} \right\rfloor$ and $\Ai(q,n,d) \leq
\Ai(q,n,d,Q_d) + 2$ for $d \leq \left\lfloor \frac{n}{2} \right\rfloor$. Also, adding $\{ {\bf 0} \}$
and $\mathrm{GF}(q)^n$ to a code with minimum injection distance $d
\leq \left\lfloor \frac{n}{2} \right\rfloor$ and codewords of dimensions in $Q_d$ does not decrease the
minimum distance. Thus $\Ai(q,n,d) = \Ai(q,n,d,Q_d) + 2$ for $d \leq
\left\lfloor \frac{n}{2} \right\rfloor$. When $d > \left\lfloor \frac{n}{2} \right\rfloor$, $n-d \leq d$, and thus $\Am(q,n,d) = 2$.
\end{proof}

Proposition~\ref{prop:As<Am<As} below relates $\Ai(q,n,d)$ to
$\As(q,n,d)$ and $\Ac(q,n,r,d)$.

\begin{proposition}\label{prop:As<Am<As}
For all $q$, $n$, and $2 \leq d \leq \left\lfloor \frac{n}{2} \right\rfloor$, $\As(q,n,2d-1) \leq
\Ai(q,n,d) \leq \As(q,n,d)$; furthermore, when $d \geq \frac{n}{3}$,
$\Ai(q,n,d) \leq \As(q,n,4d-n,Q_d) + 2$. Also, $\Ac(q,n,\left\lfloor \frac{n}{2} \right\rfloor,d) \leq \Ai(q,n,d) \leq 2 + \sum_{r=d}^{n-d}
\Ac(q,n,r,d)$.
\end{proposition}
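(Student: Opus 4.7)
The plan is to derive each of the three pairs of inequalities directly from the relations between the subspace and injection metrics in (\ref{eq:ds_dm}), together with the reduction established in Proposition~\ref{prop:Am_2d>n}.

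First I would handle the pair $\As(q,n,2d-1) \leq \Ai(q,n,d) \leq \As(q,n,d)$ by a simple metric comparison. By (\ref{eq:ds_dm}), any two distinct codewords $U,V$ satisfy $\frac{1}{2}\ds(U,V) \leq \di(U,V) \leq \ds(U,V)$. Hence a code with minimum subspace distance at least $2d-1$ has minimum injection distance at least $\lceil (2d-1)/2 \rceil = d$ (using integrality of $\di$), giving the left inequality. Conversely, minimum injection distance at least $d$ forces minimum subspace distance at least $d$, giving the right inequality. This is the easy part.

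Next I would treat the bound $\Ai(q,n,d) \leq \As(q,n,4d-n,Q_d) + 2$ for $d \geq \frac{n}{3}$. By Proposition~\ref{prop:Am_2d>n}, it suffices to show $\Ai(q,n,d,Q_d) \leq \As(q,n,4d-n,Q_d)$. Take any $\mathcal{C}$ achieving $\Ai(q,n,d,Q_d)$ and any two distinct $C,D \in \mathcal{C}$. Since $\dim(C),\dim(D) \in Q_d = \{d,\ldots,n-d\}$, we have $|\dim(C)-\dim(D)| \leq n-2d$. The identity $\ds(C,D) = 2\di(C,D) - |\dim(C)-\dim(D)|$ (rearranging the definition of $\di$ given in the preliminaries) then gives $\ds(C,D) \geq 2d - (n-2d) = 4d-n$. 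Under the hypothesis $d \geq n/3$, $4d-n \geq d \geq 1$, so this is a meaningful subspace distance, and $\mathcal{C}$ serves as a code achieving the claimed bound on $\As(q,n,4d-n,Q_d)$.

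Finally I would prove $\Ac(q,n,\lfloor n/2 \rfloor,d) \leq \Ai(q,n,d) \leq 2 + \sum_{r=d}^{n-d} \Ac(q,n,r,d)$. The lower bound is immediate because any CDC in $E_{\lfloor n/2 \rfloor}(q,n)$ with minimum injection distance $d$ is, by inclusion $E_{\lfloor n/2 \rfloor}(q,n) \subseteq E(q,n)$, a subspace code with minimum injection distance $d$. For the upper bound, Proposition~\ref{prop:Am_2d>n} reduces the problem to $\Ai(q,n,d,Q_d) \leq \sum_{r=d}^{n-d} \Ac(q,n,r,d)$. Given an optimal code in this class, partition its codewords by dimension; for each $r \in Q_d$, the codewords of dimension $r$ form a CDC in $E_r(q,n)$ whose pairwise injection distances are at least $d$, contributing at most $\Ac(q,n,r,d)$ codewords. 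Summing over $r \in Q_d$ yields the bound.

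No step looks like a serious obstacle; all three pairs follow from combining (\ref{eq:ds_dm}), the dimensional relation expressing $\ds$ in terms of $\di$ and $|\dim(U)-\dim(V)|$, and the reduction to $Q_d$ from Proposition~\ref{prop:Am_2d>n}. The only mild subtlety is keeping track of integrality (needed to sharpen $d-\tfrac{1}{2}$ to $d$ in the first inequality) and of the hypothesis $d \geq n/3$ (needed so that $4d-n$ is a legitimate positive distance in the second inequality).
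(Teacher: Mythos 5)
Your proposal is correct and follows essentially the same route as the paper's own proof: the metric comparison via (\ref{eq:ds_dm}) with the integrality sharpening for the first pair, the identity $\ds(U,V)=2\di(U,V)-|\dim(U)-\dim(V)|$ combined with the dimension restriction to $Q_d$ and Proposition~\ref{prop:Am_2d>n} for the second, and the partition of a subspace code into constant-dimension layers for the third. No substantive differences to report.
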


\begin{proof}
A code with minimum subspace distance $2d-1$ has minimum injection
distance $\geq d$ by (\ref{eq:ds_dm}) and hence $\As(q,n,2d-1) \leq
\Ai(q,n,d)$. Similarly, a code with minimum injection distance $d$
has minimum subspace distance $\geq d$ and hence $\Ai(q,n,d) \leq
\As(q,n,d)$.

Let $\mathcal{C}$ be a code with minimum injection distance $d$
whose codewords have dimensions in $Q_d$. For all codewords $U$ and
$V$, $\ds(U,V) = 2\di(U,V) - |\dim(U) - \dim(V)| \geq 2d - (n-2d)$.
Thus $\mathcal{C}$ has minimum subspace distance $4d-n \geq d$ for
$d \geq \frac{n}{3}$, and hence $\Ai(q,n,d,Q_d) \leq
\As(q,n,4d-n,Q_d)$. Proposition~\ref{prop:Am_2d>n} finally yields
$\Ai(q,n,d) = \Ai(q,n,d,Q_d) + 2 \leq \As(q,n,4d-n,Q_d) + 2$.

Any CDC in $E_r(q,n)$ with minimum injection distance $d$ is a
subspace code with minimum injection distance $d$, hence
$\Ac(q,n,r,d) \leq \Ai(q,n,d)$ for all $r$. Also, the codewords with
dimension $r$ in a subspace code with minimum injection distance $d$
form a CDC in $E_r(q,n)$ with minimum injection distance at least
$d$, hence $\Ai(q,n,d) = \Ai(q,n,d,Q_d) + 2 \leq 2 +
\sum_{r=d}^{n-d} \Ac(q,n,r,d)$.
\end{proof}



We now derive more bounds on $\Ai(q,n,d)$. Proposition \ref{prop:lower_Am} below is the analogue of Proposition \ref{prop:lower_As} for the injection metric, and its proof is hence omitted.

\begin{proposition} \label{prop:lower_Am}
For all $q$, $n$, and $2 \leq d \left\lfloor \frac{n}{2} \right\rfloor$, we have $\Ai(q,n,d) \geq 2 + \sum_{i=-z+1}^{z-1} \Ac(q,n,\left\lfloor \frac{n}{2} \right\rfloor - id,d)$, where $z = \left\lfloor \frac{\left\lfloor \frac{n}{2} \right\rfloor}{d} \right\rfloor$.
\end{proposition}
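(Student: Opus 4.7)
The plan is to mimic the construction used in the proof of Proposition \ref{prop:lower_As}, but tailored so that the resulting subspace code has minimum \emph{injection} distance at least $d$, and modified to explicitly adjoin the two trivial codewords $\{\mathbf{0}\}$ and $\gf(q)^n$ that yield the additive constant $2$.

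Concretely, for each $i \in \{-z+1, -z+2, \ldots, z-1\}$, I would pick an optimal CDC $\mathcal{C}_i \subseteq E_{\left\lfloor \frac{n}{2} \right\rfloor - id}(q,n)$ with minimum injection distance $d$ and cardinality $\Ac(q,n, \left\lfloor \frac{n}{2} \right\rfloor - id, d)$, and then set
\[
\mathcal{C} \;=\; \{\{\mathbf{0}\}\} \cup \{\gf(q)^n\} \cup \bigcup_{i=-z+1}^{z-1} \mathcal{C}_i.
\]
Since the dimensions $\left\lfloor \frac{n}{2} \right\rfloor - id$ are pairwise distinct across $i$ and lie strictly between $0$ and $n$ (a fact I would confirm using the defining inequality $zd \leq \left\lfloor \frac{n}{2} \right\rfloor$), the union is disjoint and $|\mathcal{C}| = 2 + \sum_{i=-z+1}^{z-1}\Ac(q,n,\left\lfloor \frac{n}{2} \right\rfloor - id, d)$, matching the claimed lower bound.

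It then remains to verify that $\mathcal{C}$ has minimum injection distance at least $d$, which splits into three cases. Two codewords within the same $\mathcal{C}_i$ are at distance at least $d$ by the CDC's minimum distance. For codewords $U \in \mathcal{C}_i$ and $V \in \mathcal{C}_a$ with $i\neq a$, I would use $|\dim(U) - \dim(V)| = |i-a|d \geq d$ together with the basic inequality $\di(U,V) \geq |\dim(U)-\dim(V)|$ listed following (\ref{eq:dm1})--(\ref{eq:dm2}). Finally, since $\di(\{\mathbf{0}\}, V) = \dim V$ and $\di(\gf(q)^n, V) = n - \dim V$, I would invoke $zd \leq \left\lfloor \frac{n}{2} \right\rfloor$ once more to conclude $\left\lfloor \frac{n}{2} \right\rfloor - (z-1)d \geq d$ and $\left\lfloor \frac{n}{2} \right\rfloor + (z-1)d \leq n - d$, so that every non-trivial codeword has dimension in $[d, n-d]$ and is thus at injection distance at least $d$ from both $\{\mathbf{0}\}$ and $\gf(q)^n$; the remaining pair satisfies $\di(\{\mathbf{0}\}, \gf(q)^n) = n \geq 2d$.

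The only real obstacle is this dimension bookkeeping, and it also clarifies why the summation in Proposition \ref{prop:lower_Am} runs from $-z+1$ to $z-1$ rather than over the wider range $-z$ to $z$ used in the subspace-metric counterpart, Proposition \ref{prop:lower_As}: the extremal layers $i=\pm z$ could have dimension strictly less than $d$ or strictly greater than $n-d$, which would be incompatible with the inclusion of $\{\mathbf{0}\}$ and $\gf(q)^n$ that together supply the ``$+2$''.
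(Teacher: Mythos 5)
Your construction is correct and is exactly the argument the paper intends: it omits the proof of Proposition \ref{prop:lower_Am} as "the analogue of Proposition \ref{prop:lower_As}," and your adaptation --- optimal injection-distance CDCs in the layers $\left\lfloor \frac{n}{2} \right\rfloor - id$ for $|i| \leq z-1$, cross-layer distances handled via $\di(U,V) \geq |\dim(U)-\dim(V)|$, and $\{\mathbf{0}\}$ together with $\gf(q)^n$ supplying the ``$+2$'' --- is precisely that analogue. Your observation about why the index range shrinks to $-z+1,\ldots,z-1$ (the extremal layers would fall outside $[d,\,n-d]$ and clash with the two trivial codewords) is also the right bookkeeping and consistent with Proposition \ref{prop:Am_2d>n}.
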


By extending the puncturing of subspaces introduced in
\cite{koetter_it08}, we finally derive below a Singleton bound for
injection metric codes.

\begin{proposition}\label{prop:singleton}
\textit{(Singleton bound for subspace codes in the injection metric).}
For all $q$, $n$, and $2 \leq d \leq \left\lfloor \frac{n}{2} \right\rfloor$, $\Ai(q,n,d) \leq
\Ai(q,n-1,d-1) \leq \sum_{r=0}^{n-d+1} {n-d+1 \brack r}$.
\end{proposition}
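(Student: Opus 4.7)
The plan is to generalize the puncturing operation of Koetter and Kschischang \cite{koetter_it08} from CDCs to arbitrary subspace codes, and to show that puncturing reduces the minimum injection distance by at most one. Iterating this bound $d-1$ times then gives the stated Singleton bound via $\Ai(q,n,d) \leq \Ai(q,n-1,d-1) \leq \cdots \leq \Ai(q,n-d+1,1) = |E(q,n-d+1)| = \sum_{r=0}^{n-d+1} {n-d+1 \brack r}$.

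First I would fix any hyperplane $H$ of $\mathrm{GF}(q)^n$ (identified with $\mathrm{GF}(q)^{n-1}$) and define the puncturing map $\pi : E(q,n) \to E(q,n-1)$ by $\pi(U) \df U \cap H$. Because codewords here are not required to have a fixed dimension, no special adjustment is needed when $U \subseteq H$; in general $\dim \pi(U) \in \{\dim U - 1, \dim U\}$.

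The key lemma is that $\di(\pi(U), \pi(V)) \geq \di(U,V) - 1$ for all $U,V \in E(q,n)$. I would prove this using the form $\di(U,V) = \max\{\dim U,\dim V\} - \dim(U \cap V)$ from (\ref{eq:dm1}) together with the identity $\pi(U) \cap \pi(V) = U \cap V \cap H$, and the two elementary facts $\dim \pi(U) \geq \dim U - 1$ (likewise for $V$) and $\dim(U \cap V \cap H) \in \{\dim(U \cap V) - 1, \dim(U \cap V)\}$. A short case analysis (on whether each of $U$, $V$, and $U\cap V$ lies in $H$) shows that the dimension drops in the ``max'' term and in the ``intersection'' term either cancel exactly or produce a net decrease of at most one in $\di$.

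Given a code $\mathcal{C} \subseteq E(q,n)$ with $|\mathcal{C}| = \Ai(q,n,d)$ and $d \geq 2$, the distance lemma implies that $\pi$ is injective on $\mathcal{C}$: distinct codewords satisfy $\di(\pi(U),\pi(V)) \geq d-1 \geq 1 > 0$, so $\pi(U) \neq \pi(V)$. Thus $\pi(\mathcal{C})$ is a subspace code in $E(q,n-1)$ of the same cardinality with minimum injection distance at least $d-1$, which yields the first inequality $\Ai(q,n,d) \leq \Ai(q,n-1,d-1)$. Iterating this inequality until the distance parameter reaches $1$ and using $\Ai(q,m,1) = |E(q,m)|$ completes the proof.

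The main obstacle is the distance lemma itself: although conceptually the bound mirrors the classical Singleton-type argument, the injection distance mixes three dimension quantities ($\dim U$, $\dim V$, $\dim(U \cap V)$) that can each drop by one under $\pi$, and one must check that the worst-case drops do not compound. The cleanest bookkeeping I would use is to split on whether $U \cap V \subseteq H$: if yes, then $\pi(U)\cap\pi(V)=U\cap V$ and the $-1$ comes only from the max term; if no, then the max and intersection both drop by one and cancel in $\di$. This case distinction is what makes the injection metric (in contrast to, say, the subspace metric where the drop per puncture could be $2$) yield precisely the bound of one per puncturing step.
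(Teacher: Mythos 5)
Your proposal is correct and follows essentially the same route as the paper: generalize the Koetter--Kschischang puncturing to a map $E(q,n)\to E(q,n-1)$, show that the injection distance drops by at most one per puncture, conclude injectivity on the code from $d\geq 2$, and iterate $d-1$ times down to $\Ai(q,n-d+1,1)=\sum_{r=0}^{n-d+1}{n-d+1 \brack r}$. The only (harmless) difference is the definition of the puncturing map --- you take $U\cap H$ directly, while the paper takes a fixed $(\dim U-1)$-dimensional subspace of $U\cap W$ --- and your case split on whether $U\cap V\subseteq H$ correctly establishes the key distance lemma.
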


\begin{proof}
Let $W \in E_{n-1}(q,n)$. We define the puncturing $H_W(V)$ from
$E(q,n)$ to $E(q,n-1)$ as follows. If $\dim(V) = 0$, then
$\dim(H_W(V)) = 0$; otherwise, if $\dim(V) = r
> 0$, then $H_W(V)$ is a fixed $(r-1)$-subspace of $V \cap W$. For all
$U,V \in E(q,n)$, it is easily shown that $\di(H_W(U), H_W(V)) \geq
\di(U,V) - 1$, and hence $H_W(U) \neq H_W(V)$ if $\di(U,V) \geq 2$.

Therefore, if $\mathcal{C}$ is a code in $E(q,n)$ with minimum
injection distance $d \geq 2$, then $\{H_W(V) : V \in \mathcal{C}\}$
is a code in $E(q,n-1)$ with minimum injection distance $\geq d-1$
and cardinality $|\mathcal{C}|$. The first inequality follows.
Applying it $d-1$ times yields $\Ai(q,n,d) \leq \Ai(q,n-d+1,1) =
\sum_{r=0}^{n-d+1} {n-d+1 \brack r}$.
\end{proof}

We remark that although the puncturing defined in the proof of
Proposition \ref{prop:singleton} depends on $W$, the bounds in
Proposition \ref{prop:singleton} do not.

We now compare the cardinalities of optimal subspace codes and
optimal CDCs with the same minimum injection distance $d$. We first
establish the relation between $\Ai(q,n,d)$ and $\Ac(q,n,d)$ in
Proposition \ref{prop:Am_v_Ac} below.

\begin{proposition} \label{prop:Am_v_Ac}
\textit{(Comparison between optimal subspace codes and CDCs in the injection metric).}
For $2 \leq d \leq r \leq \left\lfloor \frac{n}{2} \right\rfloor$,
\begin{align*} \nonumber
    &q^{(\left\lfloor \frac{n}{2} \right\rfloor-r)(r-d+1)} \Ac \left(q,n,r, d \right)\\
    & \leq \Ai(q,n,d)\\
    &< 2 K_q^{-1} \theta(q) q^{(\left\lfloor \frac{n}{2} \right\rfloor-r)(r-d+1)} \Ac \left(q,n,r, d \right).
\end{align*}
\end{proposition}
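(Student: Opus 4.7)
The plan is to mirror the proof of Proposition~\ref{prop:As_v_Ac} essentially line by line, with Proposition~\ref{prop:As<Am<As} playing the role that Proposition~\ref{prop:As_Rd} played there, and with $d$ in place of $\lceil d/2 \rceil$ throughout (reflecting the fact that a CDC of minimum injection distance $d$ is already a subspace code of minimum injection distance $d$, rather than of minimum subspace distance $d$).

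For the lower bound, I would first invoke Proposition~\ref{prop:As<Am<As} to get $\Ai(q,n,d) \ge \Ac(q,n,\lfloor n/2 \rfloor, d)$, then apply the lower bound in~(\ref{eq:bounds_Ac}) to obtain $\Ai(q,n,d) \ge q^{(n-\lfloor n/2 \rfloor)(\lfloor n/2 \rfloor - d + 1)}$. On the other side, I would apply the upper bound $\Ac(q,n,r,d) < K_q^{-1} q^{(n-r)(r-d+1)}$ from~(\ref{eq:bounds_Ac}) to get
$$q^{(\lfloor n/2 \rfloor - r)(r-d+1)} \Ac(q,n,r,d) < K_q^{-1} q^{(\lfloor n/2 \rfloor + n - 2r)(r-d+1)},$$
and verify by a direct comparison of exponents on the range $d \le r \le \lfloor n/2 \rfloor$ that the right-hand side is dominated by the lower bound on $\Ai(q,n,d)$. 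The $K_q$ discrepancy is absorbed exactly as in the analogous computation in Proposition~\ref{prop:As_v_Ac}.

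For the upper bound, the plan is to start from Proposition~\ref{prop:As<Am<As}, which gives $\Ai(q,n,d) \le 2 + \sum_{r'=d}^{n-d} \Ac(q,n,r',d)$; use the symmetry $\Ac(q,n,r',d) = \Ac(q,n,n-r',d)$ to fold the sum into the range $d \le r' \le \lfloor n/2 \rfloor$ at the cost of a factor of $2$; apply the upper bound in~(\ref{eq:bounds_Ac}) to each summand; and reparametrize $r' = \lfloor n/2 \rfloor - j$. The exponent $(n-r')(r'-d+1)$ then peels off as $(n-\lfloor n/2 \rfloor)(\lfloor n/2 \rfloor - d + 1)$ minus a quantity that grows at least quadratically in $j$ (using that $n - 2\lfloor n/2 \rfloor + d - 1 \ge 0$), so the sum is bounded by $\theta(q) \, q^{(n-\lfloor n/2 \rfloor)(\lfloor n/2 \rfloor - d + 1)}$, exactly as in the proof of Proposition~\ref{prop:As_v_Ac}. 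Re-expressing the resulting bound via the lower bound $\Ac(q,n,r,d) \ge q^{(n-r)(r-d+1)}$ from~(\ref{eq:bounds_Ac}) puts it in the form $2K_q^{-1}\theta(q) \, q^{(\lfloor n/2 \rfloor - r)(r-d+1)} \Ac(q,n,r,d)$ required by the statement.

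The main obstacle is the exponent bookkeeping: checking that the combined exponent $(\lfloor n/2 \rfloor - r)(r-d+1) + (n-r)(r-d+1) = (\lfloor n/2 \rfloor + n - 2r)(r-d+1)$ behaves as desired across the entire range $d \le r \le \lfloor n/2 \rfloor$ and absorbs correctly into the constants $K_q$ and $\theta(q)$, using the sign $n - 2\lfloor n/2 \rfloor \ge 0$ exactly as in Proposition~\ref{prop:As_v_Ac}. Everything else is a mechanical substitution $\lceil d/2 \rceil \mapsto d$ and $\As \mapsto \Ai$ in that earlier proof, with Proposition~\ref{prop:As<Am<As} providing the two structural inequalities (namely $\Ai \ge \Ac(q,n,\lfloor n/2 \rfloor, d)$ and $\Ai \le 2 + \sum_{r'=d}^{n-d} \Ac(q,n,r',d)$) that anchor the lower and upper bounds.
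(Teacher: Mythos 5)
Your strategy --- anchoring both sides with the two structural inequalities of Proposition~\ref{prop:As<Am<As} and redoing the exponent bookkeeping of Proposition~\ref{prop:As_v_Ac} with $\lceil d/2\rceil$ replaced by $d$ --- is the right one, and is surely what the authors mean when they say the proof is ``similar'' and omit it. The gap is precisely in the step you postpone: the ``direct comparison of exponents on the range $d\leq r\leq\left\lfloor\frac{n}{2}\right\rfloor$'' does not succeed for the exponent $(\left\lfloor\frac{n}{2}\right\rfloor-r)(r-d+1)$ appearing in the statement. Write $m=\left\lfloor\frac{n}{2}\right\rfloor$ and $j=m-r$. Your lower-bound argument needs $(m+n-2r)(r-d+1)\leq(n-m)(m-d+1)$, but the left side minus the right side equals $j\left[2(r-d+1)-(n-m)\right]$, which is strictly positive whenever $j\geq 1$ and $2(r-d+1)>\left\lceil\frac{n}{2}\right\rceil$ (e.g.\ $n=10$, $d=2$, $r=4$, where the claimed lower bound gives $\Ai(q,10,2)\geq q^{3}\Ac(q,10,4,2)\geq q^{21}$ while Proposition~\ref{prop:As<Am<As} and (\ref{eq:bounds_Ac}) cap $\Ai(q,10,2)$ at roughly $q^{20}$). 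The upper-bound half needs the reverse inequality and fails when $2(r-d+1)<\left\lceil\frac{n}{2}\right\rceil$; for $n=10$, $d=2$, $r=2$ the claimed upper bound would force $\Ai(q,10,2)=O\bigl(q^{11}\bigr)$, contradicting $\Ai(q,10,2)\geq\Ac(q,10,5,2)\geq q^{20}$. So no amount of bookkeeping rescues the statement as printed: it fails in both directions for suitable parameters.

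What has gone wrong is not your plan but the exponent in the Proposition, which is evidently a misprint. The exact analogue of Proposition~\ref{prop:As_v_Ac} under the substitution $\lceil d/2\rceil\mapsto d$ carries the exponent $(\left\lfloor\frac{n}{2}\right\rfloor-r)(\left\lfloor\frac{n}{2}\right\rfloor-r+d-1)=j(j+d-1)$, not $j(r-d+1)$. With that exponent the identity $(n-r)(r-d+1)=(n-m)(m-d+1)-j\bigl(j+d-1+(n-2m)\bigr)\leq(n-m)(m-d+1)-j(j+d-1)$ makes both halves of your argument go through exactly as in Proposition~\ref{prop:As_v_Ac} (and, as there, the lower bound then acquires a factor $K_q$ in front, which is also absent from the printed statement). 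Had you actually carried out the verification you defer to, you would have discovered this; as written, your proof asserts that a check succeeds when it does not.
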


The proof of Proposition~\ref{prop:Am_v_Ac} is similar to that of
Proposition~\ref{prop:As_v_Ac} and is hence omitted. We also obtain
another relation between $\Ai(q,n,d)$ and $\As(q,n,d)$.

\begin{corollary} \label{cor:Ai_v_As}
For $2 \leq d \leq \left\lfloor \frac{n}{2} \right\rfloor$, $\As(q,n,2d) \leq \As(q,n,2d-1) \leq
\Ai(q,n,d) < 2 K_q^{-1} \theta(q) \As(q,n,2d)$. Also, $\Ai(q,n,d) < 2
K_q^{-2} \theta(q) q^{(n-\left\lfloor \frac{n}{2} \right\rfloor)(\left\lfloor \frac{n}{2} \right\rfloor-d+1)}$.
\end{corollary}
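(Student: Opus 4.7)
The plan is simply to chain together results already established in the paper, so no deep argument is needed; the corollary is a packaging statement that records what the earlier comparisons imply when one fixes the dimension parameter $r=\lfloor n/2\rfloor$.

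For the central chain $\As(q,n,2d)\leq \As(q,n,2d-1)\leq \Ai(q,n,d)<2K_q^{-1}\theta(q)\As(q,n,2d)$, I would handle the three inequalities in order. The first is immediate: any code with minimum subspace distance $2d$ has, a fortiori, minimum subspace distance at least $2d-1$, so the maximum cardinality is monotonically non-increasing in the distance parameter. The second, $\As(q,n,2d-1)\leq \Ai(q,n,d)$, is exactly the first assertion of Proposition~\ref{prop:As<Am<As} and can be cited directly. For the third (the only nontrivial one), I would instantiate Proposition~\ref{prop:Am_v_Ac} at $r=\lfloor n/2\rfloor$, which collapses the prefactor $q^{(\lfloor n/2\rfloor-r)(r-d+1)}$ to $1$ and gives $\Ai(q,n,d)<2K_q^{-1}\theta(q)\Ac(q,n,\lfloor n/2\rfloor,d)$. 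Then, since for CDCs $\ds=2\di$, any CDC in $E_{\lfloor n/2\rfloor}(q,n)$ with minimum injection distance $d$ is in particular a subspace code with minimum subspace distance $2d$, so $\Ac(q,n,\lfloor n/2\rfloor,d)\leq \As(q,n,2d)$, and the desired inequality follows.

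For the standalone bound $\Ai(q,n,d)<2K_q^{-2}\theta(q)\,q^{(n-\lfloor n/2\rfloor)(\lfloor n/2\rfloor-d+1)}$, I would again use Proposition~\ref{prop:Am_v_Ac} at $r=\lfloor n/2\rfloor$ to reduce to $\Ac(q,n,\lfloor n/2\rfloor,d)$, and then apply the upper bound in (\ref{eq:bounds_Ac}), namely $\Ac(q,n,\lfloor n/2\rfloor,d)<K_q^{-1}q^{(n-\lfloor n/2\rfloor)(\lfloor n/2\rfloor-d+1)}$. Multiplying the two constants gives the factor $2K_q^{-2}\theta(q)$ stated in the corollary.

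There is no real obstacle here; the entire argument is a two-line chain of cited inequalities. The only thing to be careful about is the choice $r=\lfloor n/2\rfloor$, which is what makes the scalar factor in Proposition~\ref{prop:Am_v_Ac} degenerate to a universal constant independent of $n$ and $d$, and the observation that the injection distance and subspace distance coincide up to a factor of two on any single Grassmannian, which is what lets us cross between $\Ac$ and $\As$ cleanly.
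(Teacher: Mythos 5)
Your proposal is correct and follows essentially the same route as the paper: the lower bounds come from Proposition~\ref{prop:As<Am<As} (plus trivial monotonicity in the distance), the upper bound comes from instantiating Proposition~\ref{prop:Am_v_Ac} at $r=\left\lfloor \frac{n}{2} \right\rfloor$ together with $\Ac(q,n,\left\lfloor \frac{n}{2} \right\rfloor,d)\leq \As(q,n,2d)$, and the standalone bound follows from (\ref{eq:bounds_Ac}). No gaps.
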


\begin{proof}
The lower bounds on $\Ai(q,n,d)$ follow Proposition \ref{prop:As<Am<As}.
Furthermore,  by
choosing $r=\left\lfloor \frac{n}{2} \right\rfloor$ in Proposition~\ref{prop:Am_v_Ac} we have $\Ai(q,n,d) < 2 K_q^{-1} \theta(q) \Ac(q,n,\left\lfloor \frac{n}{2} \right\rfloor,d)$. Since
$\Ac(q,n,\left\lfloor \frac{n}{2} \right\rfloor,d) \leq \As(q,n,2d)$, we obtain $\Ai(q,n,d) < 2
K_q^{-1} \theta(q) \As(q,n,2d)$. The last inequality follows from
(\ref{eq:bounds_Ac}).
\end{proof}

Corollary~\ref{cor:Ai_v_As} provides several interesting insights.
First, the upper and lower bounds are all tight up to a scalar.
Second, for any optimal subspace code with minimum injection
distance $d$ and cardinality $\Ai(q,n,d)$, the optimal (or nearly
optimal) subspace codes with minimum subspace distance $2d$ have the
same cardinality up to a scalar. Third, the last inequality in
Corollary~\ref{cor:Ai_v_As} implies that such nearly optimal
subspace codes with minimum subspace distance $2d$ exist: KK codes
in $E_{\left\lfloor \frac{n}{2} \right\rfloor}(q,n)$ are such codes.

Based on Proposition~\ref{prop:Am_v_Ac}, we now determine the
asymptotic rates of subspace codes and CDCs with the injection
metric. Let us use the normalized parameters $r' = \frac{r}{n}$
defined earlier and $\di' \df \frac{\di}{n}$, where $\di$ is
the minimum injection distance of a code, and define the asymptotic
maximum rate $\ai(\di') = \lim \sup_{n \rightarrow \infty}
\frac{\log_q \Ai(q,n,r,\lceil n\di' \rceil)}{\log_q |E(q,n)|}$ for a subspace code
with the injection metric and the asymptotic rate $\ai(r',\di') =
\lim \sup_{n \rightarrow \infty} \frac{\log_q \Ac(q,n,nr',\lceil n\di' \rceil)}{\log_q
|E(q,n)|}$ for a CDC.

\begin{proposition}\label{prop:am}
\textit{(Asymptotic rate of packing subspace code in the injection metric).}
For $\frac{1}{2} \leq \di' \leq 1$, $\ai(\di') = 0$; or $0 \leq \di'
\leq \frac{1}{2}$, $\ai(\di') = 1-2\di'$. For $0 \leq r' \leq \di'$
or $1-\di' \leq r' \leq 1$, $\ai(r',\di') = 0$; for $\di' \leq r'
\leq \frac{1}{2}$, $\ai(r',\di') = 4(1-r')(r'-\di')$; for
$\frac{1}{2} \leq r' \leq 1-\di'$, $\ai(r',\di') = 4r'(1-r'-\di')$.
\end{proposition}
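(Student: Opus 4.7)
The plan is to normalize by $\log_q |E(q,n)|$, which by Lemma~\ref{lemma:bounds_E} equals $\lfloor n/2 \rfloor(n - \lfloor n/2 \rfloor) + O(1) \sim n^2/4$, and then reduce the entire computation to (\ref{eq:bounds_Ac}) combined with Proposition~\ref{prop:Am_v_Ac}. Setting $d = \lceil n\di' \rceil$ and $r = \lceil nr' \rceil$, the asymptotic rate is the base-$q$ logarithm of the cardinality, divided by $\lfloor n/2 \rfloor(n - \lfloor n/2 \rfloor)$, in the limit.

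For $\ai(r',\di')$, I would split on $r'$. When $0 \leq r' < \di'$, eventually $d > r$, but the injection distance between any two $r$-subspaces is at most $r$, so $\Ac(q,n,r,d) = 1$ and the rate is $0$; the same conclusion for $1-\di' < r' \leq 1$ follows from the symmetry $\Ac(q,n,r,d) = \Ac(q,n,n-r,d)$. When $\di' \leq r' \leq 1/2$, both bounds in (\ref{eq:bounds_Ac}) yield $\log_q \Ac(q,n,r,d) = (n-r)(r-d+1) + O(1) = n^2(1-r')(r'-\di') + O(n)$, which divided by $n^2/4$ gives $4(1-r')(r'-\di')$. Finally, when $1/2 \leq r' \leq 1-\di'$, the same symmetry reduces the situation to the previous case with $r'$ replaced by $1-r'$, yielding $4r'(1-r'-\di')$.

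For $\ai(\di')$, Proposition~\ref{prop:Am_2d>n} disposes of $\di' > 1/2$ at once, since $\Ai(q,n,d) = 2$ whenever $d > \lfloor n/2 \rfloor$, so the rate is $0$. For $0 \leq \di' \leq 1/2$, I would apply Proposition~\ref{prop:Am_v_Ac} with $r = \lfloor n/2 \rfloor$: the prefactor $q^{(\lfloor n/2 \rfloor - r)(r - d + 1)}$ collapses to $1$, so $\Ai(q,n,d)$ and $\Ac(q,n,\lfloor n/2 \rfloor,d)$ agree up to a constant multiplicative factor. Invoking (\ref{eq:bounds_Ac}) once more, $\log_q \Ai(q,n,d) = \lfloor n/2 \rfloor(\lfloor n/2 \rfloor - d + 1) + O(1)$, and dividing by $\log_q |E(q,n)|$ gives $1 - 2\di'$ in the limit. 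Continuity at $\di' = 1/2$ and $r' = 1/2$ matches the two pieces together.

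No serious obstacle is anticipated; the only care needed is the bookkeeping of additive $O(1)$ constants from the $K_q^{-1}$ and $\theta(q)$ factors, the $+1$ terms in exponents like $r-d+1$, and the difference between $\lfloor n/2 \rfloor$ and $n/2$, all of which contribute at most $O(n)$ to the logarithm and hence vanish when divided by the $\Theta(n^2)$ normalization. Conceptually, the only non-routine observation is that choosing $r = \lfloor n/2 \rfloor$ in Proposition~\ref{prop:Am_v_Ac} eliminates the prefactor, so a single Grassmannian at the middle dimension already achieves the asymptotic packing rate of the entire projective space in the injection metric.
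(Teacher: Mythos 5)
Your proposal is correct and follows essentially the same route as the paper, which omits the proof of this proposition as "similar to that of Proposition~\ref{prop:as}": there, too, the CDC rates come from (\ref{eq:bounds_Ac}) and Lemma~\ref{lemma:bounds_E} together with the symmetry $\Ac(q,n,r,d)=\Ac(q,n,n-r,d)$, and the subspace-code rate comes from the comparison proposition evaluated at $r=\left\lfloor \frac{n}{2}\right\rfloor$ so that the prefactor disappears. The minor slip of writing $\left\lfloor \frac{n}{2}\right\rfloor(\left\lfloor \frac{n}{2}\right\rfloor-d+1)$ instead of $(n-\left\lfloor \frac{n}{2}\right\rfloor)(\left\lfloor \frac{n}{2}\right\rfloor-d+1)$ only perturbs the exponent by $O(n)$ and is absorbed by the $\Theta(n^2)$ normalization, exactly as you note.
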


The proof of Proposition \ref{prop:am} is similar to that of
Proposition \ref{prop:as} and hence omitted.

Propositions~\ref{prop:Am_v_Ac} and \ref{prop:am} provide several
important insights on the design of subspace codes with the
injection metric. First, Proposition~\ref{prop:Am_v_Ac} indicates
that optimal CDCs with dimension being half of the block length up
to rounding ($r = \left\lfloor \frac{n}{2} \right\rfloor$ and $r =n- \left\lfloor \frac{n}{2} \right\rfloor$) are optimal subspace codes
with the injection metric up to a scalar. In this case, the optimal
CDCs have a limited rate loss as opposed to optimal subspace codes
with the same error correction capability.
When $r < \left\lfloor \frac{n}{2} \right\rfloor$, the rate loss suffered by optimal CDCs increases
with $\left\lfloor \frac{n}{2} \right\rfloor - r$. Proposition~\ref{prop:am} indicates that using CDCs
with dimension $\di \leq r < \left\lfloor \frac{n}{2} \right\rfloor$ leads to a decrease in rate on the
order of $(1-2r')(2\di'+1-2r')$. Similarly to the subspace metric,
the rate loss for CDCs using the injection metric increases with
$1-2r'$. Hence using a CDC with a dimension further from $\left\lfloor \frac{n}{2} \right\rfloor$ leads
to a high rate loss. The combinatorial explanation in
Section~\ref{sec:packing_subspace} also applies in this case.

We also comment that the asymptotic rates in
Proposition~\ref{prop:am} for subspace codes come from Singleton
bounds. The asymptotic rate $\ai(r',\di')$ is achieved by KK codes,
and the asymptotic rate $\ai(\di')$ is achievable also by KK codes
when $r = \left\lfloor \frac{n}{2} \right\rfloor$.

Proposition~\ref{prop:am} also compares the difference between
asymptotic rates of subspace codes with the subspace and injection
metrics. Although $\as(\ds')$ and $\ai(\di')$ are different, the
optimal subspace codes with the two metrics have similar asymptotic
behavior. We note that a CDC with minimum injection distance $\di$
has minimum subspace distance $\ds = 2\di$, which implies that $\as(r',\ds') = \ai(r',\di')$ as long as $\ds' = 2 \di'$. Also, as
shown above, CDCs in $E_{\left\lfloor \frac{n}{2} \right\rfloor}(q,n)$ with minimum injection distance
$\di$ are both asymptotically optimal subspace codes with minimum
subspace distance $\ds = 2\di$ and asymptotically optimal subspace
codes with minimum injection distance $\di$. Finally, when the asymptotic rate is fixed, the relative subspace distance $\ds'$ of optimal subspace codes is twice as much as the relative injection distance $\di'$. The implication of this on the error correction capability also depends on the decoding method.


In Table~\ref{table:comparison_injection}, we compare the bounds on $\Ai(q,n,d)$ derived in this paper with each other for $q=2$, $n=10$, and $d$ ranging from $2$ to $5$ (by Proposition \ref{prop:Am_2d>n}, $\Ai(2,10,d) = 2$ for $6 \leq d \leq 10$). We consider the lower bound in Proposition \ref{prop:As<Am<As} and its refinement in Proposition \ref{prop:lower_Am}, while the upper bound comes from Proposition~\ref{prop:As<Am<As}. Note that Proposition~\ref{prop:Am_v_Ac} is not included in the comparison since its primary purpose is to compare the cardinalities of optimal subspace codes and optimal CDCs with the same minimum injection distance. Although some bounds rely on $\Ac(q,n,r,d)$ whose values are unknown in general, the values in Table~\ref{table:comparison_injection} are obtained by using constructions  in \cite{etzion_it09} and \cite{gadouleau_it09_cdc} as lower bounds on $\Ac(q,n,r,d)$ and the upper bound  on $\Ac(q,n,r,d)$ in \cite{xia_dcc09}. 
The cardinalities of constant-dimension codes with dimension $r=n/2$ in \cite{etzion_it09} and \cite{gadouleau_it09_cdc} are quite close to the lower bound in Proposition \ref{prop:lower_Am}, again supporting our conclusion that the rate loss suffered by properly designed CDCs is smaller when the dimension is close to $n/2$. Finally, similar to the subspace distance case, the tightness of the bounds improves as the minimum distance of the code increases, leading to very tight bounds for $d=\left\lfloor \frac{n}{2} \right\rfloor$.

\begin{table}
\begin{center}
\begin{tabular}{|c|c|c||c|}
	\hline
	 & \multicolumn{2}{|c||}{Lower bounds} & Upper bound\\
	\hline
	$d$ & Proposition \ref{prop:As<Am<As} & Proposition \ref{prop:lower_Am} & Proposition \ref{prop:As<Am<As}\\
	\hline
	2 & 1,167,967 & 1,202,145 & 2,616,760\\
	3 & 32,843 & 32,843 & 50,708\\
	4 & 1,025 & 1,027 & 1,260\\
	5 & 33 & 35 & 35\\
	\hline
\end{tabular} \caption{Comparison of bounds on $\Ai(2,10,d)$ for $d$ from $2$ to $5$} \label{table:comparison_injection}
\end{center}
\end{table}

\subsection{Covering properties of subspace codes with the injection metric} \label{sec:covering_injection}

We now consider the covering properties of subspace codes with the
injection metric. The injection covering radius in $E(q,n)$ of $\mathcal{C}$ is defined as $\max_{U \in E(q,n)} \di(U,
\mathcal{C})$. We denote the minimum cardinality of a subspace code
with injection covering radius $\rho$ in $E(q,n)$ as
$\Ki(q,n,\rho)$. Since $\Ki(q,n,0) = |E(q,n)|$ and $\Ki(q,n,n) = 1$,
we assume $0 < \rho < n$ henceforth. We first
determine the minimum cardinality of a code with injection covering
radius $\rho$ when $\rho \geq \left\lfloor \frac{n}{2} \right\rfloor$.

\begin{proposition}\label{prop:Km_2rho>n}
For $n-\left\lfloor \frac{n}{2} \right\rfloor \leq \rho < n$, $\Ki(q,n,\rho) = 1$. If $n = 2\left\lfloor \frac{n}{2} \right\rfloor+1$,
then $\Ki(q,2\left\lfloor \frac{n}{2} \right\rfloor+1,\left\lfloor \frac{n}{2} \right\rfloor) = 2$.
\end{proposition}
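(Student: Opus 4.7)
The plan is to prove the two claims by directly constructing optimal codes and matching lower bounds via the dimension-difference bound \(\di(U,V)\geq|\dim(U)-\dim(V)|\).

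For the first claim, I will show that any single codeword \(C\in E_{\lfloor n/2\rfloor}(q,n)\) already has injection covering radius \(n-\lfloor n/2\rfloor\). Given an arbitrary \(V\in E_s(q,n)\), I would apply the identity \(\di(V,C)=\max\{\dim(C),s\}-\dim(V\cap C)\) together with the standard intersection bound \(\dim(V\cap C)\geq\max\{0,\dim(C)+s-n\}\). A short case split on \(s\leq\lfloor n/2\rfloor\) versus \(s>\lfloor n/2\rfloor\) then gives \(\di(V,C)\leq\max\{\lfloor n/2\rfloor,\,n-\lfloor n/2\rfloor\}=n-\lfloor n/2\rfloor\) in every case. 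Hence \(\Ki(q,n,\rho)\leq 1\) for all \(\rho\geq n-\lfloor n/2\rfloor\), and the reverse inequality is trivial since a code must be nonempty.

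For the second claim, suppose \(n=2\lfloor n/2\rfloor+1\) and consider the lower bound first. For any single codeword \(C\) of dimension \(r\), the inequality \(\di(U,V)\geq|\dim(U)-\dim(V)|\) evaluated at \(U=\{\mathbf{0}\}\) and \(U=\gf(q)^n\) immediately forces the covering radius of \(\{C\}\) to be at least \(\max\{r,n-r\}\geq\lceil n/2\rceil=\lfloor n/2\rfloor+1\). This strictly exceeds \(\lfloor n/2\rfloor\), so \(\Ki(q,2\lfloor n/2\rfloor+1,\lfloor n/2\rfloor)\geq 2\).

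For the matching upper bound, I would exhibit a two-element code \(\mathcal{C}=\{C_1,C_2\}\) with \(\dim(C_1)=\lfloor n/2\rfloor\) and \(\dim(C_2)=\lceil n/2\rceil\), and verify its covering radius. For any \(V\) with \(\dim(V)=s\leq\lfloor n/2\rfloor\), one has \(\di(V,C_1)=\lfloor n/2\rfloor-\dim(V\cap C_1)\leq\lfloor n/2\rfloor\). For any \(V\) with \(s\geq\lceil n/2\rceil\), the intersection bound \(\dim(V\cap C_2)\geq s+\lceil n/2\rceil-n=s-\lfloor n/2\rfloor\) yields \(\di(V,C_2)=s-\dim(V\cap C_2)\leq\lfloor n/2\rfloor\). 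Combining the two bounds gives injection covering radius at most \(\lfloor n/2\rfloor\). The main obstacle throughout is really just bookkeeping in the case analysis on the relative sizes of \(s\), \(\dim(C)\), and \(n\); no deeper tool than the Grassmannian dimension identity is required.
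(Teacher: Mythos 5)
Your proof is correct and follows essentially the same route as the paper: a single subspace of dimension $\left\lfloor \frac{n}{2} \right\rfloor$ covers $E(q,n)$ with injection radius $n-\left\lfloor \frac{n}{2} \right\rfloor$, the points $\{\mathbf{0}\}$ and $\mathrm{GF}(q)^n$ force the lower bound of $2$ when $n$ is odd, and a pair of codewords of dimensions $\left\lfloor \frac{n}{2} \right\rfloor$ and $\left\lceil \frac{n}{2} \right\rceil$ (the paper uses $\{C,C^\perp\}$, though as you note only the dimensions matter) gives the matching upper bound.
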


\begin{proof}
Let $C$ be a subspace with dimension $\left\lfloor \frac{n}{2} \right\rfloor$. Then for all $D_1$ with
$\dim(D_1) \leq \dim(C)$, we have $\di(C,D_1) \leq \dim(C) = \left\lfloor \frac{n}{2} \right\rfloor$
by~(\ref{eq:dm1}); similarly, for all $D_2$ with $\dim(D_2) \geq
\dim(C) + 1$, we have $\di(C,D_2) \leq n - \dim(C) = n - \left\lfloor \frac{n}{2} \right\rfloor$
by~(\ref{eq:dm2}). Thus $C$ covers $E(q,n)$ with radius $n - \left\lfloor \frac{n}{2} \right\rfloor$
and $\Ki(q,n,\rho) = 1$ for $n - \left\lfloor \frac{n}{2} \right\rfloor \leq \rho < n$.

If $n = 2\left\lfloor \frac{n}{2} \right\rfloor+1$, then it is easily shown that $\{C, C^\perp\}$ has
covering radius $\left\lfloor \frac{n}{2} \right\rfloor$, and hence $\Ki(q,2\left\lfloor \frac{n}{2} \right\rfloor+1,\left\lfloor \frac{n}{2} \right\rfloor) \leq 2$.
However, for any $D \in E(q,2\left\lfloor \frac{n}{2} \right\rfloor+1)$, then either $\di(\{{\bf 0}\},
D) = \dim(D) > \left\lfloor \frac{n}{2} \right\rfloor$ or $\di(\mathrm{GF}(q)^n, D) = n-\dim(D) > \left\lfloor \frac{n}{2} \right\rfloor$.
Thus no single subspace can cover the projective space with radius $\left\lfloor \frac{n}{2} \right\rfloor$
and $\Ki(q,2\left\lfloor \frac{n}{2} \right\rfloor+1,\left\lfloor \frac{n}{2} \right\rfloor) \geq 2$.
\end{proof}

We thus consider $0 < \rho < \left\lfloor \frac{n}{2} \right\rfloor$ henceforth.
Lemma~\ref{lemma:Ks<Km<Ks} relates $\Ki(q,n,\rho)$ to
$\Ks(q,n,\rho)$ and $\Kc(q,n,r,\rho)$.

\begin{lemma}\label{lemma:Ks<Km<Ks}
For all $q$, $n$, and $0 < \rho < \left\lfloor \frac{n}{2} \right\rfloor$, $\Ks(q,n,2\rho) \leq
\Ki(q,n,\rho) \leq \Ks(q,n,\rho)$ and $\Ki(q,n,\rho) \leq 2 +
\sum_{r=\rho+1}^{n-\rho-1} \Kc(q,n,r,\rho)$.
\end{lemma}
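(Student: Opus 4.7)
The plan is to prove the three inequalities separately, each by a short direct argument. None of them should be difficult, since they all follow from basic relationships between the two metrics and the definitions of the covering radii.

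For the chain $\Ks(q,n,2\rho) \leq \Ki(q,n,\rho) \leq \Ks(q,n,\rho)$, I would invoke the two-sided comparison $\tfrac{1}{2}\ds(U,V) \leq \di(U,V) \leq \ds(U,V)$ from (\ref{eq:ds_dm}). First, if $\mathcal{C}$ is any subspace code achieving the minimum cardinality $\Ks(q,n,\rho)$ with subspace covering radius $\rho$, then for every $U \in E(q,n)$ there is a codeword $C$ with $\ds(U,C) \leq \rho$, and hence $\di(U,C) \leq \ds(U,C) \leq \rho$. Thus $\mathcal{C}$ has injection covering radius at most $\rho$, giving the right-hand inequality. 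Conversely, if $\mathcal{C}$ achieves injection covering radius $\rho$ with cardinality $\Ki(q,n,\rho)$, then for every $U$ there is a codeword $C$ with $\di(U,C) \leq \rho$, and hence $\ds(U,C) \leq 2\di(U,C) \leq 2\rho$, giving the left-hand inequality.

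For the upper bound $\Ki(q,n,\rho) \leq 2 + \sum_{r=\rho+1}^{n-\rho-1} \Kc(q,n,r,\rho)$, I would give an explicit construction. For each $r$ with $\rho+1 \leq r \leq n-\rho-1$, let $\mathcal{C}_r \subseteq E_r(q,n)$ be an optimal covering CDC in $E_r(q,n)$ with covering radius $\rho$ and cardinality $\Kc(q,n,r,\rho)$, and set
\[
  \mathcal{C} \;=\; \{\{\mathbf{0}\},\, \gf(q)^n\} \;\cup\; \bigcup_{r=\rho+1}^{n-\rho-1} \mathcal{C}_r.
\]
To show $\mathcal{C}$ has injection covering radius at most $\rho$, I would consider an arbitrary $U \in E(q,n)$ and split by $\dim(U)$. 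If $\dim(U) \leq \rho$, then by (\ref{eq:dm1}) $\di(U,\{\mathbf{0}\}) = \dim(U) \leq \rho$; if $\dim(U) \geq n-\rho$, then by (\ref{eq:dm2}) $\di(U,\gf(q)^n) = n-\dim(U) \leq \rho$; and if $\rho+1 \leq \dim(U) \leq n-\rho-1$, then by the covering property of $\mathcal{C}_{\dim(U)}$ there is a codeword $C \in \mathcal{C}_{\dim(U)}$ with $\di(U,C) \leq \rho$. The bound on $|\mathcal{C}|$ is then immediate.

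There is no real obstacle here; the arguments are essentially definition-pushing once one notes the elementary fact that a zero-dimensional or full-dimensional subspace injection-covers every subspace whose dimension lies within $\rho$ of it. The main thing to be careful about is the range of dimensions handled by $\{\mathbf{0}\}$ and $\gf(q)^n$, which is why the sum starts at $\rho+1$ and ends at $n-\rho-1$ rather than at $\rho$ and $n-\rho$.
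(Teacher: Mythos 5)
Your proposal is correct and follows essentially the same route as the paper: the two-sided chain comes from the metric comparison $\tfrac{1}{2}\ds \leq \di \leq \ds$, and the third bound comes from the union of optimal covering CDCs together with $\{\mathbf{0}\}$ and $\gf(q)^n$, exactly as in the paper's proof. Your version merely spells out the dimension case analysis that the paper leaves implicit.
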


\begin{proof}
A code with injection covering radius $\rho$ has subspace covering
radius $\leq 2\rho$, hence $\Ks(q,n,2\rho) \leq \Ki(q,n,\rho)$.
Also, a code with subspace covering radius $\rho$ has injection
covering radius $\leq \rho$, hence $\Ki(q,n,\rho) \leq
\Ks(q,n,\rho)$.

For $\rho + 1 \leq r \leq n-\rho-1$, let $\mathcal{C}_r$ be a CDC in
$E_r(q,n)$ with covering radius $\rho$ and cardinality
$\Kc(q,n,r,\rho)$ and let $\mathcal{C} =
\bigcup_{r=\rho+1}^{n-\rho-1} \mathcal{C}_r \cup \{ \{{\bf 0}\},
\mathrm{GF}(q)^n \}$. Then $\mathcal{C}$ is a subspace code with
injection covering radius $\rho$ and cardinality $2 +
\sum_{r=\rho+1}^{n-\rho-1} \Kc(q,n,r,\rho)$.
\end{proof}

Proposition~\ref{prop:sphere_bound_injection} below is the analogue
of Proposition~\ref{prop:sphere_bound_subspace} for the injection
metric.

\begin{proposition} \label{prop:sphere_bound_injection}
\textit{(Sphere covering bound for subspace codes in the injection metric).}
For all $q$, $n$, and $0 < \rho < \left\lfloor \frac{n}{2} \right\rfloor$, $\Ki(q,n,\rho) \geq \min
\sum_{i=0}^n A_i$, where the minimum is taken over all integer
sequences $\{A_i\}$ satisfying $0 \leq A_i \leq {n \brack i}$ for
all $0 \leq i \leq n$ and $\sum_{i=0}^n A_i \sum_{d=0}^\rho
\Ni(i,r,d) \geq {n \brack r}$ for $0 \leq r \leq n$.
\end{proposition}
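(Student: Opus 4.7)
The plan is to mirror essentially verbatim the argument used for Proposition~\ref{prop:sphere_bound_subspace}, only replacing the subspace metric by the injection metric throughout; the proof is a standard sphere-covering counting argument cast as the feasibility conditions of an integer linear program.

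First I would fix an arbitrary subspace code $\mathcal{C} \subseteq E(q,n)$ with injection covering radius $\rho$, and define $A_i$ to be the number of codewords in $\mathcal{C}$ of dimension $i$, so that $|\mathcal{C}| = \sum_{i=0}^n A_i$. The trivial bound $0 \leq A_i \leq {n \brack i}$ holds because $|E_i(q,n)| = {n \brack i}$. Next I would fix $r$ with $0 \leq r \leq n$ and count, in two ways, the incidences between codewords of $\mathcal{C}$ and subspaces in $E_r(q,n)$ lying within injection distance $\rho$. On the one hand, since $\mathcal{C}$ has injection covering radius $\rho$, every $V \in E_r(q,n)$ has at least one codeword within injection distance $\rho$. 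On the other hand, by the definition of $\Ni$, a codeword of dimension $i$ covers exactly $\sum_{d=0}^\rho \Ni(i,r,d)$ subspaces of dimension $r$. Summing over all codewords in $\mathcal{C}$ grouped by dimension yields
\begin{equation*}
    \sum_{i=0}^n A_i \sum_{d=0}^\rho \Ni(i,r,d) \geq {n \brack r}.
\end{equation*}

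Because this holds for every $0 \leq r \leq n$ and every choice of $\mathcal{C}$, the sequence $\{A_i\}$ associated with $\mathcal{C}$ is a feasible solution of the integer program described in the statement, and therefore $|\mathcal{C}| = \sum_{i=0}^n A_i$ is at least the minimum value of this program. Taking $\mathcal{C}$ to be an optimal covering subspace code gives $\Ki(q,n,\rho) \geq \min \sum_{i=0}^n A_i$. I do not foresee a genuinely hard step: the only thing worth being careful about is ensuring that $\Ni(i,r,d)$ really does count the number of $r$-subspaces at injection distance exactly $d$ from a fixed $i$-subspace (which is Lemma~\ref{lemma:Nm}), so that the double-counting is exact rather than merely an inequality. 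Everything else is a direct transcription of the proof of Proposition~\ref{prop:sphere_bound_subspace}.
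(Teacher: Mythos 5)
Your proof is correct and is exactly the argument the paper intends: the paper gives no separate proof for this proposition, stating only that it is the analogue of Proposition~\ref{prop:sphere_bound_subspace}, whose proof you have transcribed faithfully with $\Ni$ replacing $\Ns$. The double-counting step is exact by Lemma~\ref{lemma:Nm}, as you note, so there is nothing to add.
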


The lower bound in Proposition \ref{prop:sphere_bound_injection} is again based on
the optimal solution to an integer linear program, and hence
determining the lower bound is computationally infeasible for large
parameter values.


Proposition \ref{prop:km'} below determines an upper bound on $\Ki(q,n,\rho)$, by applying the universal greedy algorithm in \cite[Theorem 12.2.1]{cohen_book97} to construct covering codes in the injection metric. Proposition \ref{prop:km'} is a direct application of the bound derived in \cite[Theorem 12.2.1]{cohen_book97} on the cardinality of a code returned by this algorithm. We remark that this bound is only semi-constructive, as it determines an algorithm to construct covering subspace codes but does not design the actual codes.

\begin{table*}
\begin{center}
\begin{tabular}{|c|c|c|c|}
	\hline
	& Properties & Subspace Metric & Injection Metric\\
	\hline
	\multirow{3}{*}{Packing} & asymptotic rates & $\as(\ds') = 1-\ds'$ & $\ai(\di') = 1-2\di'$\\
	\cline{2-4}
	& optimality of CDCs  with $r = n/2$ & optimal up to a scalar & optimal up to a scalar \\
	\cline{2-4}
	& optimal construction & optimal up to a scalar: KK codes & optimal up to a scalar: KK codes\\
	\hline
	\multirow{4}{*}{Covering} & asymptotic rates & $\ks(\rho') = 1-2\rho'$ & $\ki(\rho') = (1-2\rho')^2$\\
	\cline{2-4}
	& optimality of union of CDCs & not asymptotically optimal & asymptotically optimal\\
	\cline{2-4}
	& \multirow{2}{*}{optimal construction} & asymptotically optimal & asymptotically optimal \\
	& & semi-constructive bound: Prop.~\ref{prop:greedy_covering_subspace} & semi-constructive bound: Prop.~\ref{prop:km'}
\\
	\hline
\end{tabular} \caption{Summary of results} \label{table:results}
\end{center}
\end{table*}

\begin{proposition}\label{prop:km'}
\textit{(Greedy bound for covering codes in the injection metric).}
For all $q$, $n$, and $\rho$, $\Ki(q,n,\rho) \leq \frac{|E(q,n)|}{\min_{0 \leq r \leq n}
\Vi(r,\rho)}
    \left[ 1 + \ln\left(\max_{0 \leq r \leq n} \Vi(r,\rho)\right) \right]$.
\end{proposition}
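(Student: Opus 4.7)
The plan is to cast the covering problem as a set-cover / hypergraph covering problem and then invoke the universal greedy bound stated in \cite[Theorem~12.2.1]{cohen_book97}, exactly as was done for constant-dimension codes in Proposition~\ref{prop:greedy_covering_subspace}. Specifically, I build the 0--1 incidence matrix $\mathbf{A}$ of size $|E(q,n)| \times |E(q,n)|$ whose rows are indexed by the subspaces to be covered, whose columns are indexed by the candidate codewords, and whose entry $a_{U,V}$ equals $1$ iff $\di(U,V) \leq \rho$. A subspace code $\mathcal{C}$ has injection covering radius at most $\rho$ precisely when the submatrix of $\mathbf{A}$ formed by the columns indexed by $\mathcal{C}$ has no all-zero row.

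Next I bound the row and column sums. Since the injection metric is symmetric and the ball $\{V \in E(q,n) : \di(U,V) \leq \rho\}$ has cardinality $\Vi(\dim(U),\rho)$ by definition, every row sum lies in the interval $[\min_{0 \leq r \leq n} \Vi(r,\rho),\, \max_{0 \leq r \leq n} \Vi(r,\rho)]$, and the same is true of the column sums. In the notation of \cite[Theorem~12.2.1]{cohen_book97}, this gives a lower bound $v_1 = \min_r \Vi(r,\rho)$ on the row sums and an upper bound $v_2 = \max_r \Vi(r,\rho)$ on the column sums, with $m = |E(q,n)|$ rows.

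Then I apply \cite[Theorem~12.2.1]{cohen_book97}, which guarantees the existence of a column set of size at most
\[
    \frac{m}{v_1}\bigl(1 + \ln v_2\bigr) = \frac{|E(q,n)|}{\min_{0 \leq r \leq n}\Vi(r,\rho)}\left[1 + \ln\!\left(\max_{0 \leq r \leq n}\Vi(r,\rho)\right)\right]
\]
such that the induced submatrix contains no all-zero row. Interpreting this column set as a subspace code yields an injection covering code of at most that cardinality, which proves the stated upper bound on $\Ki(q,n,\rho)$.

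There is essentially no obstacle here beyond a careful invocation of the black box: the nontrivial content has already been packaged in \cite[Theorem~12.2.1]{cohen_book97} and the volume bounds established in Proposition~\ref{prop:bound_Vm}. The only minor subtlety worth mentioning in the write-up is that the incidence matrix is symmetric, so bounding row sums and column sums by the same quantities $\Vi(r,\rho)$ is immediate and no separate argument is needed for the two directions. As with Proposition~\ref{prop:greedy_covering_subspace}, the resulting bound is only semi-constructive, in that it guarantees the existence of a covering code of the claimed size but does not specify its codewords.
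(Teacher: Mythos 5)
Your proposal is correct and matches the paper's approach: the paper gives no written proof, stating only that the result is a direct application of the greedy covering bound in \cite[Theorem 12.2.1]{cohen_book97}, and your incidence-matrix setup with row and column sums bounded by $\min_r \Vi(r,\rho)$ and $\max_r \Vi(r,\rho)$ is exactly the intended instantiation. The details you supply (symmetry of the incidence matrix, identification of ball volumes with row/column sums) are the right ones.
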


We finally determine the asymptotic behavior of $\Ki(q,n,\rho)$ by using the asymptotic rate $\ki(\rho') = \lim \inf_{n \rightarrow \infty} \frac{\log_q \Ki(q,n,\lfloor n\rho' \rfloor)}{\log_q |E(q,n)|}$. According to Proposition \ref{prop:bound_Vm}, the volume of a ball with injection radius is constant up to a scalar. The consequence of this geometric result is that the greedy algorithm used to prove Proposition \ref{prop:km'} above will produce asymptotically optimal covering codes in the injection metric. However, since the volume of balls in the subspace metric does depend on the center (see Proposition \ref{prop:bound_Vs}), a direct application of the greedy algorithm for the subspace metric does not necessarily produce asymptotically optimal covering codes in the subspace metric.

\begin{proposition}\label{prop:km}
\textit{(Asymptotic rate of covering subspace code in the injection metric).}
For $0 \leq \rho' \leq \frac{1}{2}$, $\ki(\rho') = (1-2\rho')^2$.
For $\frac{1}{2} \leq \rho' \leq 1$, $\ki(\rho') = 0$.
\end{proposition}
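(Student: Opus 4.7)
The plan is to split the proof into the two regimes $\rho' \geq \tfrac{1}{2}$ and $0 \leq \rho' < \tfrac{1}{2}$, and in the interesting regime to squeeze $\ki(\rho')$ between matching volume/sphere-covering bounds.

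For $\rho' \geq \tfrac{1}{2}$, Proposition~\ref{prop:Km_2rho>n} immediately gives $\Ki(q,n,\lfloor n\rho' \rfloor)$ bounded by an absolute constant (namely $1$ when $\lfloor n\rho'\rfloor \geq n-\lfloor n/2 \rfloor$, and $2$ in the only remaining boundary case of odd $n$ and $\rho'=\tfrac12$ giving $\rho=\lfloor n/2\rfloor$). Since $\log_q |E(q,n)| \sim \tfrac{n^2}{4}$ by Lemma~\ref{lemma:bounds_E}, dividing bounded $\log_q \Ki$ by this quantity gives $\ki(\rho')=0$, matching $(1-2\rho')^2$ at $\rho'=\tfrac{1}{2}$.

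For $0 \leq \rho' < \tfrac{1}{2}$, the key input is Proposition~\ref{prop:bound_Vm}, which shows that $\Vi(r,\rho)$ is, up to a $q$-dependent constant factor, equal to $q^{\rho(n-\rho)}$ \emph{regardless of $r$} whenever $\rho \leq \lfloor n/2 \rfloor$. The upper bound follows directly from Proposition~\ref{prop:km'}: the prefactor $|E(q,n)|/\min_r \Vi(r,\rho)$ is at most $C_q \, q^{\lfloor n/2\rfloor(n-\lfloor n/2\rfloor) - \rho(n-\rho)}$, and the logarithmic correction $1+\ln(\max_r \Vi(r,\rho))$ contributes only $O(n)$ to $\log_q \Ki$. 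Dividing by $\log_q |E(q,n)| \sim \lfloor n/2\rfloor(n-\lfloor n/2\rfloor) \sim n^2/4$ and taking the limit yields
\[
    \ki(\rho') \;\leq\; 1 - \lim_{n\to\infty}\frac{\rho(n-\rho)}{\lfloor n/2\rfloor(n-\lfloor n/2\rfloor)} \;=\; 1 - 4\rho'(1-\rho') \;=\; (1-2\rho')^2.
\]

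For the matching lower bound I would use a volume argument without passing through the linear program of Proposition~\ref{prop:sphere_bound_injection}: since each codeword $C$ of $\mathcal{C}$ covers exactly $\Vi(\dim C,\rho)$ subspaces of $E(q,n)$, and since $\mathcal{C}$ must cover all of $E(q,n)$, we have $|\mathcal{C}| \cdot \max_{0\le c\le n} \Vi(c,\rho) \geq |E(q,n)|$. Invoking Proposition~\ref{prop:bound_Vm} (uniform in $c$) and the lower bound $|E(q,n)| \geq q^{\lfloor n/2 \rfloor(n-\lfloor n/2\rfloor)}$ of Lemma~\ref{lemma:bounds_E} gives
\[
    \Ki(q,n,\rho) \;\geq\; C_q'\, q^{\lfloor n/2\rfloor(n-\lfloor n/2\rfloor) - \rho(n-\rho)},
\]
whose asymptotic rate equals $(1-2\rho')^2$, completing the sandwich.

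Conceptually there is no hard step: the essential work has already been done in Proposition~\ref{prop:bound_Vm}, whose center-independence of ball volumes is precisely what makes the trivial volume bound sharp in the injection metric (contrast with the subspace-metric case of Proposition~\ref{prop:ks}, where $g(r,t)$ depends on $r$ and the naive union-of-CDC construction is suboptimal). The only bookkeeping nuisance is keeping track of floor functions in $\lfloor n\rho'\rfloor$ and $\lfloor n/2\rfloor$ when passing to the limit, and verifying that the condition $\rho \leq \lfloor n/2 \rfloor$ needed to invoke Proposition~\ref{prop:bound_Vm} is satisfied for all sufficiently large $n$ once $\rho' < \tfrac{1}{2}$.
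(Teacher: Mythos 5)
Your proposal is correct and follows essentially the same route as the paper: the degenerate regime via Proposition~\ref{prop:Km_2rho>n}, the upper bound via the greedy bound of Proposition~\ref{prop:km'} combined with the center-independent volume estimates of Proposition~\ref{prop:bound_Vm} and Lemma~\ref{lemma:bounds_E}, and the lower bound via the plain sphere-covering inequality $\Ki(q,n,\rho) \geq |E(q,n)|/\max_r \Vi(r,\rho)$, which is exactly what the paper uses (it likewise bypasses the integer program of Proposition~\ref{prop:sphere_bound_injection}). The asymptotic bookkeeping, including the limit $1-4\rho'(1-\rho') = (1-2\rho')^2$, matches the paper's computation.
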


\begin{proof}
By Proposition~\ref{prop:Km_2rho>n}, $\ki(\rho') = 0$ for
$\frac{1}{2} \leq \rho' \leq 1$. We have $\Ki(q,n,\rho) \geq
\frac{|E(q,n)|}{\max_{0 \leq r \leq n} \Vi(r,\rho)} >
\frac{K_q^{2}}{\theta(q)(2\theta(q)-1)} q^{\left\lfloor \frac{n}{2} \right\rfloor(n-\left\lfloor \frac{n}{2} \right\rfloor) - \rho(n-\rho)}$ by
Lemma~\ref{lemma:bounds_E} and Proposition~\ref{prop:bound_Vm}. This
asymptotically becomes $\ki(\rho') \geq (1-2\rho')^2$ for $0 \leq
\rho' \leq \frac{1}{2}$. Similarly, Proposition~\ref{prop:km'},
Lemma~\ref{lemma:bounds_E}, and Proposition~\ref{prop:bound_Vm}
yield 
\begin{align*}
	\Ki(q,n,\rho) <& 2K_q^{-1} \theta(q) q^{\left\lfloor \frac{n}{2} \right\rfloor(n-\left\lfloor \frac{n}{2} \right\rfloor) - \rho(n-\rho)}\\
	&\left[ 1 + \ln(\theta(q)(2\theta(q)-1) K_q^{-2}) + \rho(n-\rho) \ln q \right]
\end{align*}
which asymptotically becomes $\ki(\rho') \leq (1-2\rho')^2$ for $0
\leq \rho' \leq \frac{1}{2}$.
\end{proof}

The proof of Proposition \ref{prop:km} indicates that the minimum
cardinality $\Ki(q,n,\rho)$ of a covering subspace code with the
injection metric is on the order of $q^{\left\lfloor \frac{n}{2} \right\rfloor(n-\left\lfloor \frac{n}{2} \right\rfloor) - \rho(n-\rho)}$.
A covering subspace code is easily obtained by taking the union of
optimal covering CDCs for all constant dimensions, leading to a code
with cardinality $2 + \sum_{r=\rho+1}^{n-\rho-1} \Kc(q,n,r,\rho)$.
By \cite{gadouleau_it09_cdc}, the cardinality of the union is on the
order of $q^{\left\lfloor \frac{n}{2} \right\rfloor(n-\left\lfloor \frac{n}{2} \right\rfloor) - \rho(n-\rho)}$. Thus, a union of optimal
covering CDCs (in their respective Grassmannians) results in
asymptotically optimal covering subspace codes with the injection
metric.

Propositions~\ref{prop:ks} and \ref{prop:km} as well as their
implications illustrate the differences between the subspace and
injection metrics. First, the asymptotic rates of optimal covering
subspace codes with the two metrics are different. Second, a union of
optimal covering CDCs (in their respective Grassmannians) results in
asymptotically optimal covering subspace codes with the injection
metric only, not with the subspace metric. These differences can be
attributed to the different behaviors of the volume of a ball with
subspace and injection radius. Although $\Vs(0,t) = \Vi(0,t)$,
Proposition \ref{prop:bound_Vs} indicates that $\Vs(r,t)$ decreases
with $r$ ($r \leq \left\lfloor \frac{n}{2} \right\rfloor$), while according to Proposition
\ref{prop:bound_Vm}, $\Vi(r,t)$ remains asymptotically constant.
Hence, for $\left\lfloor \frac{n}{2} \right\rfloor-\rho \leq r \leq \left\lfloor \frac{n}{2} \right\rfloor$, the balls with subspace
radius $\rho$ centered at a subspace with dimension $r$ have
significantly smaller volumes than their counterparts with an injection
radius. Therefore, covering the subspaces with dimension $\left\lfloor \frac{n}{2} \right\rfloor$
requires more balls with subspace radius $\rho$ than balls with
injection radius $\rho$, which explains the different rates for
$\ks(\rho')$ and $\ki(\rho')$. Also, since the volume of a ball with
subspace radius reaches its minimum for $r = \left\lfloor \frac{n}{2} \right\rfloor$ and $E_{\left\lfloor \frac{n}{2} \right\rfloor}(q,n)$ has the largest cardinality among all Grassmannians, using covering
CDCs of dimension $\left\lfloor \frac{n}{2} \right\rfloor$ to cover $E_{\left\lfloor \frac{n}{2} \right\rfloor}(q,n)$ is not advantageous. Thus, a
union of covering CDCs does not lead to an asymptotically optimal
covering subspace code in the subspace metric.


\section{Conclusion} \label{sec:conclusion}

In this paper, we derive packing and covering properties of subspace codes for the subspace and the injection metrics. We determine the asymptotic rates of packing and covering codes for both metrics, compare the performance of constant-dimension codes to that of general subspace codes, and provide constructions or semi-constructive bounds of nearly optimal codes in all four cases. These results are briefly summarized in Table \ref{table:results}.

Despite these results, some open problems remain for subspace codes. First of all, our bounds on the volumes of balls derived in Lemma \ref{lemma:bounds_E} and Propositions \ref{prop:bound_Vs} and \ref{prop:bound_Vm} may be tightened. Although the ratio between the upper and lower bounds is a function of the field size $q$ which tends to $1$ as $q$ tends to infinity, it is unknown whether this ratio is the smallest that can be established. This issue also applies to the bounds on packing subspace codes in Propositions \ref{prop:As_v_Ac} and \ref{prop:Am_v_Ac}, where the ratios between upper and lower bounds are similar functions of $q$. Also, we only considered balls with radii up to $\frac{n}{2}$, as only this case was useful for our derivations; the case where the radius is above $\frac{n}{2}$ remains unexplored. Second, the bounds on covering codes in both the subspace and the injection metrics derived in this paper are only asymptotically optimal. It remains unknown whether any of these bounds is tight up to a scalar. Third, the design of packing and covering subspace codes is an important topic for future work. This is especially the case for covering codes in the subspace metric, as no asymptotically optimal construction is known so far. Finally, the aim of this paper was to derive simple bounds on subspace codes which are good for all parameter values, especially large values. On the other hand, a wealth of ad hoc bounds and heuristics can be used to tighten our results for small parameter values.

\section{Acknowledgment}
The authors are grateful to the anonymous reviewers and the
associate editor Dr.~Mario Blaum for their constructive comments,
which have helped to improve this paper. 

\appendix
\subsection{Proof of Proposition~\ref{prop:bound_Vs}}
\label{app:prop:bound_Vs}

\begin{proof}
When $r=0$, we have $g(0,t) = t(n-t)$ and $\Vs(0,t) = \sum_{i=0}^t {n \brack i}$ for all $t \leq \frac{n}{2}$. Hence $\Vs(0,t)
\geq {n \brack t} \geq q^{t(n-t)}$ by (\ref{eq:Gaussian}), which
proves the lower bound. Also, $\Vs(0,t) < K_q^{-1} \sum_{i=0}^t
q^{i(n-i)} < K_q^{-1} q^{t(n-t)} \sum_{j=0}^\infty q^{-j^2}$ by
(\ref{eq:Gaussian}), which proves the upper bound.

We now prove the bounds on $\Vs(r,t)$ for $r \geq 1$. By definition, $\Vs(r,t) = \sum_{s=0}^n \sum_{d=0}^t \Ns(r,s,d)$ is a double summation of exponential terms. The main idea of the proof is to determine the largest term in the summation: this not only gives a good lower bound, but the whole summation can also be upper bounded by that term times a constant. First, by
Lemma~\ref{lemma:Ns}, $\Ns(r,s,d) = q^{u(d-u)} {r \brack u} {n-r
\brack d-u}$, where $u = \frac{r+d-s}{2}$ satisfies $0 \leq u \leq
\min\{r,d\}$. Thus $q^{f(u)} \leq \Ns(r,s,d) < K_q^{-2} q^{f(u)}$ by
(\ref{eq:Gaussian}), where $f(u) = u(2r+3d-n-3u) + d(n-r-d)$. Hence,
$\sum_{d=0}^t S(d) \leq \Vs(r,t) < K_q^{-2} \sum_{d=0}^t S(d)$,
where $S(d) = \sum_{u=0}^{\min\{r,d\}} q^{f(u)}$. Since $f$ is
maximized for $u = u_0 \df \frac{2r+3d-n}{6} \leq d$, we need to
consider the following three cases.

\begin{itemize}
    \item Case I: $0 \leq d \leq \frac{n-2r}{3}$.
    We have $u_0 \leq 0$ and hence $f$ is maximized for
    $u=0$: $f(0) = g(r,d) = d(n-r-d)$. Thus $S(d) \geq
    q^{g(r,d)}$, and it is easy to show that
    $S(d) = q^{g(r,d)} \sum_{u=0}^{\min\{r,d\}} q^{-u(n-2r-3d + 3u)}
    < \theta(q^3) q^{g(r,d)}$ since $n-2r-3d \geq 0$.

    \item Case II: $\frac{n-2r}{3} \leq d \leq
    \min\left\{\frac{n+4r}{3}, \frac{n}{2}\right\}$.
    We have $0 \leq u_0 \leq r$ and hence $f$ is maximized for $u = u_0$:
    $f(u_0)= g(r,d) = \frac{1}{12}(n-2r)^2 +
    \frac{1}{4}d(2n-d)$. It is easily shown that $f(u) = f(u_0) - 3(u-u_0)^2$ for all $u$ and hence $S(d) \geq
    \max\{q^{f(\lfloor u_0 \rfloor)}, q^{f(\lceil u_0 \rceil)}\}
    \geq q^{g(r,d) - \frac{3}{4}}$. We also obtain $S(d) = q^{g(r,d)}
    \sum_{u=0}^{\min\{r,d\}} q^{-3(u-u_0)^2} < 2\theta(q^3) q^{g(r,d)}$.

    \item Case III: $\frac{n+4r}{3} \leq d \leq
    \frac{n}{2}$. We have $u_0 \geq r$ and hence $f$ is maximized for $u=r$:
    $f(r) = g(r,d) = (d-r)(n-d+r)$.
    Thus $S(d) \geq q^{g(r,d)}$, and it is easy to
    show that $S(d) = q^{g(r,d)} \sum_{i=0}^r q^{-i(3d-4r-n+3i)}
    < \theta(q^3) q^{g(r,d)}$ since $3d-4r-n \geq 0$.
\end{itemize}

From the discussion above, we obtain $\Vs(r,t) \geq S(t) \geq
q^{-\frac{3}{4} + g(r,t)}$ which proves the lower bound, and
$\Vs(r,t) < K_q^{-2} \sum_{d=0}^t S(d) < 2 \theta(q^3) K_q^{-2} \sum_{d=0}^t
q^{g(r,d)}$. We now show that $R(t) = \sum_{d=0}^t q^{g(r,d)} < (1 +
q^{-1}) \theta(q^{\frac{3}{4}}) q^{g(r,t)}$ by distinguishing the following three cases.

First, if $t \leq \frac{n-2r}{3}$, $R(t) = \sum_{d=0}^t q^{d(n-r-d)}
= q^{t(n-r-t)} \sum_{i=0}^t q^{-i(n-r-2t+i)} < q^{g(r,t)}
\theta(q)$ since $n-2r-2t \geq 0$.

Second, if $\frac{n-2r}{3} < t \leq \frac{n+4r}{3}$, we have $(n-2r-3d)^2 = \frac{1}{12}(n-2r)^2 + \frac{1}{4}d(2n-d) - d(n-r-d)$ and hence
$\frac{1}{12}(n-2r)^2 + \frac{1}{4}d(2n-d) \geq d(n-r-d)$ for all
$d$. We obtain $R(t) = \sum_{d=0}^{\left\lfloor \frac{n-2r}{3} \right\rfloor} q^{d(n-r-d)} + \sum_{d=\left\lfloor \frac{n-2r}{3} \right\rfloor + 1}^t q^{\frac{1}{12}(n-2r)^2 + \frac{1}{4}d(2n-d)} \leq \sum_{d=0}^t q^{\frac{1}{12}(n-2r)^2 +
\frac{1}{4}d(2n-d)}$ and hence $R(t) = q^{g(r,t)} \sum_{i=0}^t
q^{-\frac{1}{4}i(2n-2t+i)} < \theta(q^{\frac{3}{4}}) q^{g(r,t)}$ since $2n-2t \geq 2t$.

Third, if $\frac{n+4r}{3} < t \leq \frac{n}{2}$, which implies $1
\leq r < \frac{n}{8}$, it can be shown that $g(r,\left\lfloor
\frac{n+4r}{3} \right\rfloor) \leq g(r,\left\lfloor \frac{n+4r}{3}
\right\rfloor + 1) - \frac{n-2r}{3} + 1 \leq g(r,t) - \frac{4}{3}$. Hence
$R(t) = R\left(\left\lfloor \frac{n+4r}{3} \right\rfloor\right) +
q^{g(r,t)} \sum_{j=0}^{t - \left\lfloor \frac{n+4r}{3}
\right\rfloor-1}
    q^{-j(n-2t+2r+j)} < q^{g(r,t)-\frac{4}{3}} \theta(q^{\frac{3}{4}}) + q^{g(r,t)} \theta(q)$.

Thus, $\Vs(r,t) < 2\theta(q^3) (1 + q^{-\frac{4}{3}})\theta(q^{\frac{3}{4}}) K_q^{-2} q^{g(r,t)}$.
\end{proof}

\subsection{Proof of
Proposition~\ref{prop:bound_Vm}}\label{app:prop:bound_Vm}

\begin{proof}
First, $\Vi(r,t) \geq \Ni(r,r,t) \geq q^{t(n-t)}$. We now prove the upper bound by determining the largest term in the double summation of $\Vi(r,t)$. Since $\Vi(r,t) = \Vi(n-r,t)$, we assume $r \leq \left\lfloor \frac{n}{2} \right\rfloor$ without loss of generality. The triangular inequality indicates that $\Ni(r,s,d) = 0$ if $s > |r-d|$ or $s > r+d$; also, by definition of the injection distance, $\Ni(r,s,d) = 0$ if $d > \max\{r,s\}$. We can hence restrict the range of parameters in the summation formula of $\Vi(r,t)$ as follows:
\begin{equation} \label{eq:range_Vi}
	\Vi(r,t) = \sum_{d=0}^r \sum_{s=r-d}^{d+r} \Ni(r,s,d) + \sum_{d=r+1}^t \sum_{s=d}^{d+r} \Ni(r,s,d).
\end{equation}
By Lemma~\ref{lemma:Nm} and \ref{eq:Gaussian}, we have $\Ni(r,s,d) < K_q^{-2} q^{s(n-d+r-s) - (r-d)(n-d)}$ for $s \leq r$ and $\Ni(r,s,d) < K_q^{-2} q^{s(r-s+d) + d(n-r-d)}$ for $s \geq r$, which with (\ref{eq:range_Vi}) yields
\begin{align}
    \nonumber
    &K_q^{2} \Vi(r,t)\\
    \nonumber
    &< \sum_{d=0}^r \left\{ \sum_{s=r-d}^r q^{s(n-d+r-s) - (r-d)(n-d)}\right.\\
    \nonumber
    &+ \left. \sum_{s=r+1}^{r+d} q^{s(r-s+d) + d(n-r-d)} \right\}\\
    \nonumber
    &+ \sum_{d=r+1}^t \sum_{s=d}^{d+r} q^{s(r-s+d) + d(n-r-d)}\\
    \nonumber
    &= \sum_{d=0}^r  q^{d(n-d)} \left\{ \sum_{i=0}^d q^{-i(n-d-r+i)}
    + \sum_{j=1}^d q^{-j(r-d+j)} \right\}\\
    \label{eq:Vi1}
    &+ \sum_{d=r+1}^t q^{d(n-d)} \sum_{k=0}^r q^{-k(d-r+k)},
\end{align}
where we make the following changes of variables: $i=r-s$, $j=s-r$, $k=s-d$ in (\ref{eq:Vi1}). Since $n-d-r \geq 0$, we have $\sum_{i=0}^d q^{-i(n-d-r+i)} < \theta(q)$. Also, $r-d \geq 0$ for $r \geq d$, and hence $\sum_{j=1}^d q^{-j(r-d+j)} < \theta(q) - 1$; similarly, we obtain $\sum_{k=0}^r q^{-k(d-r+k)} < \theta(q)$. Hence, (\ref{eq:Vi1}) leads to
\begin{eqnarray}
    \label{eq:Vi2} \nonumber
    K_q^{2} \Vi(r,t) &<& (2 \theta(q) - 1) \sum_{d=0}^r q^{d(n-d)} + \theta(q) \sum_{d=r+1}^t q^{d(n-d)}\\
    \label{eq:Vi3} 
    &<& (2\theta(q) - 1) q^{t(n-t)} \sum_{l=0}^t q^{-l(n-2t+l)}\\
    \nonumber
    &<& (2 \theta(q) - 1) \theta(q) q^{t(n-t)},
\end{eqnarray}
where we set $l=t-d$ and use $n\geq 2r$ in (\ref{eq:Vi3}).
\end{proof}

\bibliographystyle {IEEEtr}

\end{document}